\documentclass[smallextended]{svjour3}
\usepackage[utf8]{inputenc}
\usepackage{color,graphicx}
\usepackage{amsmath,amssymb}
\usepackage{thmtools,thm-restate}
\usepackage{natbib}
\setcitestyle{authoryear}

\usepackage[margin=1in]{geometry}


\providecommand{\dist}{\mathcal{D}}

\providecommand{\shapley}{\varphi}
\providecommand{\shapleyx}{\shapley^{[\mathrm{x}]}}
\providecommand{\mixture}[2]{#1\lessgtr#2}
\providecommand{\intervals}{\mathcal{I}}
\DeclareMathOperator{\Res}{Res}
\DeclareMathOperator*{\Ex}{\mathbb{E}}
\DeclareMathOperator*{\var}{\mathbb{V}}
\DeclareMathOperator{\Exp}{Exp}

\providecommand{\jo}[1]{}
\providecommand{\yf}[1]{}
\providecommand{\ks}[1]{}
\providecommand{\NB}[1]{}
\providecommand{\qedhere}{}

\begin{document}
\title{Shapley Values in Weighted Voting Games with Random Weights}

\author{Yuval Filmus \and Joel Oren \and Kannan Soundararajan}
\institute{
Yuval Filmus \at Technion --- Israel Institute of Technology, Haifa, Israel, \\\email{yuvalfi@cs.technion.ac.il} \and
Joel Oren \at University of Toronto, Toronto, ON, Canada, \\\email{oren@cs.toronto.edu} \and
Kannan Soundararajan \at Stanford University, Stanford, CA, \\\email{ksound@stanford.edu}
}

\maketitle

\begin{abstract}
We investigate the distribution of the well-studied Shapley--Shubik values in weighted voting games
where the agents are stochastically determined.
The Shapley--Shubik value measures the voting power of an  agent, in
typical collective decision making
systems. While easy to estimate empirically given the parameters of a
weighted voting game, the Shapley values are notoriously hard to reason about analytically.

We propose a probabilistic approach in
which the agent weights are drawn i.i.d.\ from some known exponentially decaying
distribution. We provide a general closed-form characterization of the highest
and lowest expected Shapley values in such a game, as a function of the
parameters of the underlying distribution. To do so, we give a novel reinterpretation of
the stochastic process that generates the Shapley variables as a
renewal process. We demonstrate the use of our results on the uniform
and exponential distributions. Furthermore, we show the strength of our
theoretical predictions on several synthetic datasets.
\end{abstract}

\keywords{voting power \and Shapley value \and weighted voting games \and renewal theory}
\subclass{91A12}

\begin{acknowledgements}
We thank Noga Alon for helpful
discussions. We would also like to thank Yoram Bachrach and Yair Zick for
their comments and suggestions.

YF did part of the work while at the University of Toronto and the Institute for Advanced Study, Princeton, NJ. This material is based upon work supported by the National Science Foundation under agreement No.~DMS-1128155. Any opinions, findings and conclusions or recommendations expressed in this material are those of the authors, and do not necessarily reflect the views of the National Science Foundation.

KS is supported in part by grants from the National Science Foundation, and 
a Simons Investigator grant from the Simons Foundation.
\end{acknowledgements}

\newpage
 
corollary\section{Introduction} \label{sec:intro}
Weighted voting games are a fundamental model in cooperative game
theory, used to formalize collective decision making processes.
A weighted voting game is given by a set of $n$
positively-weighted agents, and a positive quota value, also known as the threshold. A set of
agents (a coalition) is said to be  \emph{winning} if their combined weight exceeds the
prescribed quota. This setting was inspired by parliamentary
systems. Indeed, when forming a coalition
following an election, or when legislating a law, there is a minimal
threshold (e.g., the minimal number of parliament seats needed to form
a coalition, or the necessary amount of votes required to pass a
bill), which requires the agents to
cooperate. Due to its direct relevance to such political bodies, much
work has been done on modelling real-life parliamentary systems as
weighted voting games, and studying their properties from this
perspective. Some examples include the European
Council of Ministers \citep[e.g.,][]{leech02vote,algaba:2007,machover:2004},
the International Monetary Fund studied by Leech~\citep{leech2002voting}, and
the United Nations Security Council~\citep{strand2011weighted}.
Furthermore, weighted voting games have also been used to formalize
other decision-making processes in companies, where the votes are cast by
the shareholders~\citep[e.g.,][]{arcaini86}.

A recurring thread in many of these works is the study of the voting power of agents in
these systems. To that end, a number of measures called ``power
indices'' have been proposed. Informally, a power index quantifies the amount of influence
each agent has in a particular game.

Two well-known examples of such power indices are
the Shapley power index~\citep{marketgames}\footnote{The more specific
  term for our case is the Shapley--Shubik, which pertains to the
  special case of the Shapley value in weighted voting games. We use
  the term Shapley value for succinctness purposes only.} and the Banzhaf
index~\citep{banzhaf}.
Both of these power indices build on the notion of
 a pivotal agent: an agent is pivotal to a given coalition $A$, if $A$
 is a losing coalition (the combined weight of its agents does not
 exceed the quota), and upon joining it, the coalition becomes
 a winning one. A particular use of these measures pertains to the design of voting
systems, where the goal is usually to reach a certain state of power distribution, by
either assigning the agents their weights, or by judiciously selecting
the quota. The goal of this paper is very much related to the latter approach. That
is, we are interested in analyzing the effects of varying the quota
on the resulting Shapley values.

In general though, it is well-known that weighted voting games are
hard to reason about. The task of computing the Shapley
values exactly is known to be computationally intractable
\citep{Matsui2001305}. More importantly, for an arbitrary weighted
voting game, the values can exhibit acute fluctuations with
even minute changes to the quota value.
Given this fact, we revisit the model by considering the manner by
which the weights are determined. We assume that the weights are
the result of an underlying stochastic process. This
additional layer is relevant to many of the settings that
our theoretical model applies to. For example, in the case of
a parliamentary system, the number of seats each party depends greatly
on the size of the population that it represents.
However, it is often assumed that the sizes of populations are not
fixed, but rather,  vary stochastically
 for various reasons~\citep[e.g.,][]{matis2000stochastic}.
In accordance with this added distributional assumption, we revisit
the study of Shapley values by considering the \emph{expected} Shapley
values. More precisely,we ask the following question:
\emph{Assuming that weights are drawn i.i.d.\ from a certain distribution, can we give a
closed-form description of the expected Shapley values, as a function of the quota?
}
In this paper we assume that the agent weights are drawn i.i.d.\
from an \emph{exponentially decaying} distribution. This assumption allows us to
provide a strong characterization of the highest and lowest
Shapley values. We accomplish this using a novel connection to renewal theory.
To illustrate our general result, we apply it to a number of well-known
distributions. We empirically demonstrate our findings using simulations.

\paragraph{Structure of the paper} We outline some of the relevant
previous work in Section~\ref{sec:prev_work}. In Section~\ref{sec:defs} we
provide the necessary notation, definitions, and some basic properties of the
Shapley value. In Section~\ref{sec:main}, we formally present our main
problem of interest, along with an empirical illustration, based on simulation
results for the case of the uniform distribution on the unit
interval. We also present our main result (Theorem~\ref{thm:iid}) in
that section, that pertains to the two extreme agents (those with the highest and
lowest weights).

Before we prove Theorem~\ref{thm:iid}, we first apply it
to a number of interesting distributions. We begin with the case of
the uniform distribution in Section~\ref{sec:uniform}. This provides a
theoretic validation of our empirical results given in
Section~\ref{sec:main}. We tackle the case of the exponential
distribution in Section~\ref{sec:exponential}.
We then prove Theorem~\ref{thm:iid} in Section~\ref{sec:main-proof}, and discuss conjectural extensions in Section~\ref{sec:other}. Our proof uses a technical result on renewal processes that may be of independent interest (Proposition~\ref{pro:renewal}), which is proved in Section~\ref{sec:sound}. Finally, we provide concluding remarks and directions for future research in Section~\ref{sec:conclusions}.

%
%


\section{Related Work} \label{sec:prev_work}
Ample work has been done on characterizing the power distribution in weighted voting games.
The focus of this paper is the study of the Shapley--Shubik value \citep{marketgames}, which is a special case
of the more general Shapley value \citep{shapleyvalue}. For succinctness, we use the term Shapley value for the special case of weighted voting games as well. Another well-studied measure is the Banzhaf power index \citep{banzhaf}. See \citet{felsen98book} for a survey of these power indices.

Shapley values pose a number of difficulties: from a computational perspective, Shapley values have been shown to be hard to compute; see \citet{ocfgeb} for details. That said, they can easily be approximated using sampling techniques~\citep[e.g.,][]{mann62sv,bachrach10approx}. 
Moreover, power indices have been shown to exhibit considerable volatility to changes in the quota, in the absence of any structural guarantees on the weights \citep[e.g.,][]{zuck12manip,zick13var}. \citet{zick2011sv} demonstrats the effect of changes in the quota to the agents' Shapley values under a uniform distribution of agent weights. Recent work by \citet{oren2014power} studies similar questions under different probabilistic models of the weights, and in the case of (fixed) super-increasing weight vectors.

\citet{jelnov2012sv} consider the case in which the agent weights are drawn i.i.d.\ from an exponential distribution. They show that the Shapley value of an agent is proportional to its weight \emph{in expectation}, for a wide range of quota values. Their method can potentially be used to analyze other distributions. The main difference between their work and ours is that they consider the \emph{unsorted} Shapley values while we consider the \emph{sorted} Shapley values. It seems that for technical reasons, their line of reasoning cannot be used to derive our results.

Finally, there has been a growing body of literature on the so-called inverse Shapley value problem of selecting the weights vector so as to obtain a resulting vector of desired power indices \citep[e.g.,][]{de2010enumeration,servedio12inversewvg,aziz2007efficient}. This line of work can be thought of as being tangential to our study.


\section{Definitions} \label{sec:defs}
\paragraph{General notation} We use the notation $[n] = \{1,\ldots,n\}$. A permutation $\pi$ is a one-to-one mapping $\pi\colon [n] \rightarrow [n]$. The symmetric group on $[n]$ (the set of all permutations of $[n]$) is denoted $S_n$. The notation $U(a,b)$ signifies the uniform distribution over the interval $[a,b]$. The notation $\Exp(\lambda)$ signifies the exponential distribution with density $\lambda e^{-\lambda t}$ and mean $1/\lambda$. 

For a random variable $Z$ and a real number $t$, $Z_{\leq t}$ is the distribution of $Z$ conditioned on being at most $t$, and $Z_{\geq t}$ is the distribution of $Z$ conditioned on being at least $t$. Given an integer $n$, we let the random variable $Z^n_{\max}$ denote $\max(Z_1,\ldots,Z_n)$, where $Z_1,\ldots,Z_n$ are $n$ independent copies of $Z$; we define $Z^n_{\min}$ analogously.

\paragraph{Weighted voting games} A \emph{weighted voting game} (WVG) is specified by a set of agents $N=[n]$, with a corresponding non-decreasing sequence of assigned weights $w_1,\ldots,w_n$, and a quota $q \geq 0$. We think of the weights as being fixed and of the quota as a parameter. A subset (a \emph{coalition}) $S \subseteq [n]$ is said to be \emph{winning} if $w(S)=\sum_{i \in S} w_i \geq q$; otherwise, it is a losing coalition. An agent $i \in N$ is said to be \emph{pivotal} with respect to a coalition $S \subseteq N$ if $S \cup \{i\}$ is winning, whereas $S$ is not. In other words, $i$ is pivotal if $q \in (w(S),w(S \cup \{i\})]$. The Shapley value~\citep{shapleyshubik} of an agent $i \in N$ is the probability that, if one selects a random permutation $\pi \in S_n$ uniformly at random, agent $i$ would be pivotal with respect to its predecessors in the permutation. Formally, the Shapley value of agent $i$ is defined as follows:
\begin{align*}
 \shapley_i(q) &= 
 \Pr_{\pi \in S_n}[w_{\pi(1)} + \cdots + w_{\pi(\pi^{-1}(i)-1)} < q \leq w_{\pi(1)} + \cdots + w_{\pi(\pi^{-1}(i))}] \\ &=
 \Pr_{\pi \in S_n}[q - w_i \leq w_{\pi(1)} + \cdots + w_{\pi(\pi^{-1}(i)-1)} < q],
\end{align*}
where $\pi(j)$ denotes the agent at position $j$ in the permutation $\pi$.

Some elementary properties of the Shapley value include:
\begin{itemize}
 \item If $i \leq j$ (and so $w_i \leq w_j$) then $\shapley_i(q) \leq \shapley_j(q)$.
 \item For $0 \leq q \leq w_1+\cdots+w_n$ we have $\sum_i \shapley_i(q) = 1$.
\end{itemize}

\section{Main result and applications} \label{sec:main}
Consider a weighted voting game in which the weights are sampled independently from some ``reasonable'' continuous distribution $\dist$. We call this the \emph{natural iid model}. In this paper we analyze the expected largest and smallest Shapley values in terms of the continuous distribution $\dist$, for all values of the quota bounded away from $0$ and $n\Ex[\dist]$.

Another natural model to consider is very similar: the weights are sampled independently from $\dist$, and then normalized to have unit sum. We call this the \emph{normalized iid model}. Simulation results show that both models behave very similarly. We analyze the natural iid model since it is easier to work with.

In order to estimate the Shapley value of the highest-weighted agent $n$, it is not hard to see that an answer to the following question would prove instrumental:

\medskip

\emph{Conditioning on them being at most the highest weight $w_n$, let $Y_1,\ldots,Y_{n-1} \sim \dist$, and define $S_m=\sum_{i=1}^mY_i$. What is the expected number of points from $\{S_1,\ldots,S_{n-1}\}$ that lie in the interval $[q-w_n,q)$?}

\medskip

\noindent A symmetric question can be phrased for the lowest-weighted agent as well. Put in these terms, we can think of the process that generates the $n-1$ weights (conditioning on the highest weight) as a renewal process. Roughly speaking, a renewal process \citep[see e.g.,][]{gallager1995discrete} is defined by sequential arrivals, where the gap between every two arrivals (the \emph{inter-arrival} times) are stochastic. In our case, the inter-arrival times are given by the agent weights, and the measure in question is the expected number of arrivals within the specified interval. Our analysis gives a powerful characterization of the Shapley values of both the highest- and lowest-weighted agents.

Before we begin to analyze the Shapley values, we consider the special case of the uniform distribution, for which our simulation results (for the normalized iid model) are depicted in Figure~\ref{fig:uniform}. Intuitively, we can see that apart from two relatively short intervals at the two extremes of the interval $[0,1]$, the Shapley values are stable at $2/n$ for the highest Shapley value, and roughly $2/n^2$ (this estimate will be justified momentarily) for the lowest Shapley value. From a more practical point of view, this means that as the number of agents increases, the ratio of the highest to lowest Shapley values grows at a \emph{linear} rate.
\begin{figure}
	\centering
	\includegraphics[width=.8\textwidth]{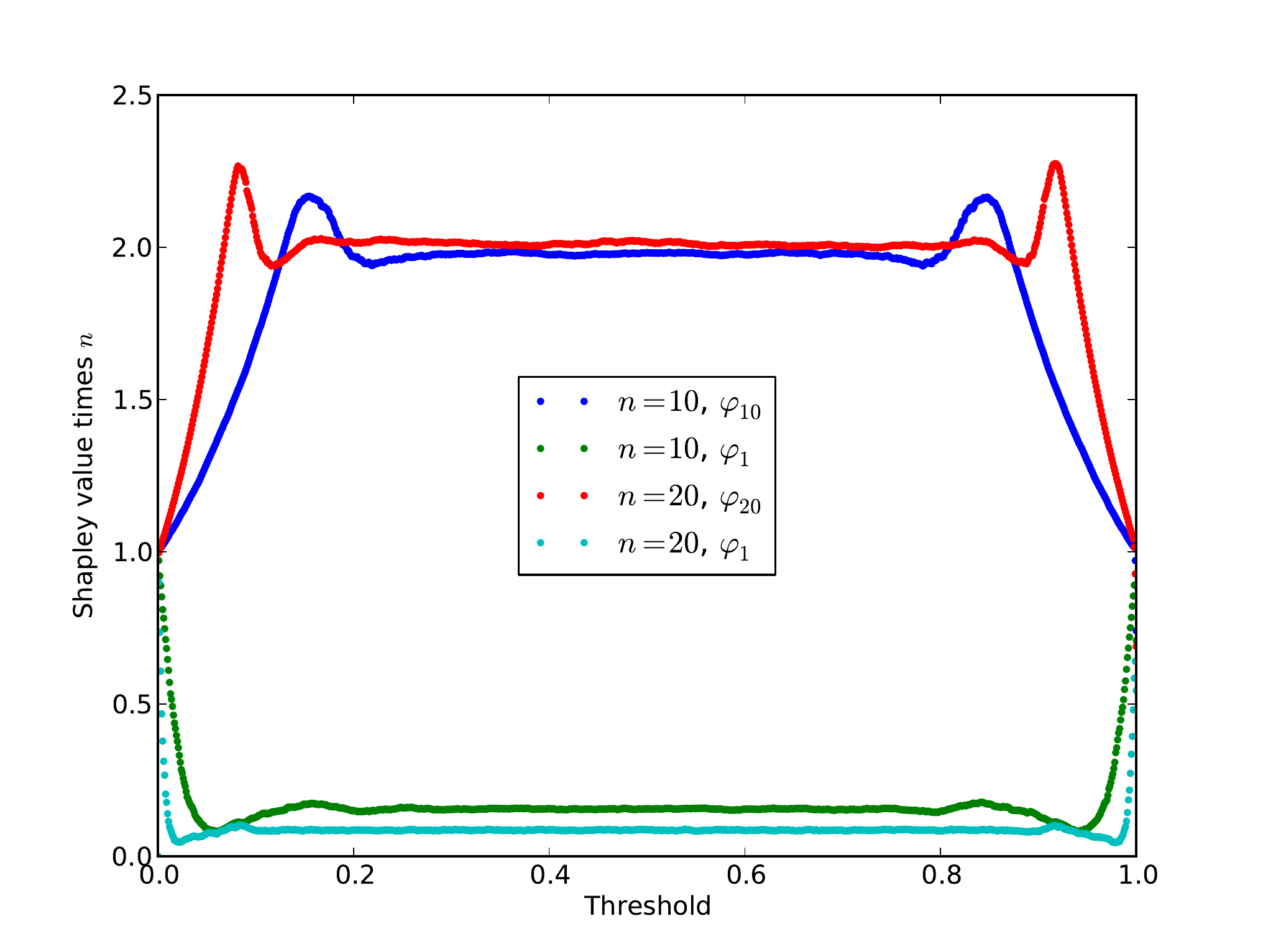} \caption{Shapley values for $X = U(0,1)$ and $n = 10,20$ of both minimal and maximal agents, multiplied by $n$, for the normalized iid model. Results of $10^6$ experiments.} \label{fig:uniform}
\end{figure}

Given the above results, we proceed to a rigorous analysis of the two extreme Shapley values.
Let $X$ be a non-negative continuous random variable whose density function $f$ has exponential decay: $f(t) \leq Ce^{-\lambda t}$ for some $\lambda > 0$. Note that any \emph{bounded} continuous random variable trivially has exponential decay, and that the exponential decay guarantees that $X$ has moments of all orders. Let $\chi_{\min}$ be the infimal $x$ such that $\Pr[X \geq x] > 0$, and let $\chi_{\max}$ be the supremal $x$ such that $\Pr[X \leq x] > 0$ (possibly $\chi_{\max} = \infty$). For example, the uniform distribution $U(a,b)$ has $\chi_{\min} = a$ and $\chi_{\max} = b$, and the exponential distribution $\Exp(\lambda)$ has $\chi_{\min} = 0$ and $\chi_{\max} = \infty$.

For technical reasons, we need to make an additional assumption on $X$: that the support of $X$ can be partitioned into finitely many intervals, in each of which the density $f$ is differentiable and is monotone (non-decreasing or non-increading). This assumption is satisfied by our main examples, the uniform and exponential distributions. We believe that this assumption can be weakened.

Another helpful technical assumption is that $x = O(\Ex[X_{\leq x}])$ for $x$ near $\chi_{\min}$, or more formally, for some $C,\epsilon > 0$ we have $x \leq C\Ex[X_{\leq x}]$ for all $x \in (\chi_{\min},\chi_{\min}+\epsilon)$.
While it is possible to construct random variables invalidating this property, all natural random variables do satisfy the property. For example, the property trivially holds when $\chi_{\min} > 0$, and if $\chi_{\min} = 0$, then it holds when $f(x) = O(x^\delta)$ near zero for any $\delta > 0$, and in particular when $f$ is real analytic near zero. In our applications we verify this property explicitly.

Consider the following random process for generating weights (the \emph{natural iid model}): sample $x_1,\ldots,x_n$ from $X$ independently. We define the weights $w_1,\ldots,w_n$ to be the sequence obtained from $x_1,\ldots,x_n$ by sorting them in non-decreasing order, i.e., $w_1$ is the smallest weight and $w_n$ is the largest weight.

The following is our main theorem. We defer its proof to Section~\ref{sec:main-proof} in favor of discussing a couple of its implications.
\begin{restatable}{theorem}{thmiid}
	\label{thm:iid}
Let $X$ be a non-negative continuous random variable whose density function $f$ satisfies $f(t) \leq Ce^{-\lambda t}$ for some $C,\lambda > 0$, and additionally, the support of $f$ can be partitioned into finitely many intervals on which $f$ is differentiable and monotone, and that $x = O(\Ex_{X \leq x}[X])$ for $x$ near $\chi_{\min}$.

For all $\epsilon > 0$ there exists $\xi < 1$ such that the following holds.
For all $Q \in [n^{1/4},(1-\epsilon)\Ex [X]]$,
	\[ \Ex[\shapleyx_{\max}(Q)] = \frac{1}{n} \Ex_{x \sim X^n_{\max}} \big[\frac{x}{\Ex [X_{\leq x}]}\big] \pm O(\xi^{n^{1/4}}). \]
Similarly, for all $Q \in [n^{1/4}, (n-n^{2/3})\Ex [X]]$,
	\[ \Ex[\shapleyx_{\min}(Q)] = \frac{1}{n} \Ex_{x \sim X^n_{\min}} \big[\frac{x}{\Ex [X_{\geq x}]}\big] \pm O(\xi^{n^{1/4}}). \]
\end{restatable}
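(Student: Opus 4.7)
The strategy is to condition on the value of the extreme weight, recognise the resulting problem as one about a renewal process, apply a quantitative renewal theorem, and integrate out.

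\emph{Reduction to a renewal count.} Focus on the max case; the min case is symmetric. Condition on $w_n = x$. The joint density of the order statistics shows that, after forgetting their order, the remaining weights $w_1,\ldots,w_{n-1}$ are iid copies of $X_{\leq x}$. Averaging the pivotality event over the uniformly random position $k = \pi^{-1}(n)$ of the max agent, and using exchangeability of the other agents, one obtains
\[ \Ex[\shapleyx_{\max}(Q) \mid w_n = x] = \frac{1}{n}\Ex\!\left[\#\{0 \le m \le n-1 : S_m \in [Q-x,Q)\}\right], \]
where $S_0 = 0$ and $S_m = Y_1 + \cdots + Y_m$ with the $Y_i$ iid from $X_{\leq x}$. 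This is $1/n$ times the expected number of renewal epochs, of a renewal process with inter-arrival law $X_{\leq x}$ and mean $\mu(x) := \Ex[X_{\leq x}]$, that fall in a window of length $x$ ending at $Q$. A parallel formula holds for the min, conditioning on $w_1 = x$ and replacing $X_{\leq x}$ by $X_{\geq x}$.

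\emph{Quantitative renewal estimate and integration.} Invoke Proposition~\ref{pro:renewal} (a quantitative elementary renewal theorem): provided both $Q - x$ and $(n-1)\mu(x) - Q$ are $\Omega(n^{1/4})$, the expected epoch count in $[Q-x,Q)$ equals $x/\mu(x) + O(\xi^{n^{1/4}})$ for some $\xi < 1$ depending only on $X$. Dividing by $n$ and integrating against the density of $X^n_{\max}$ yields the stated formula, once the atypical values of $x$ are shown to contribute negligibly. The large-$x$ tail is controlled by $f(t) \le Ce^{-\lambda t}$, which forces $\Pr[X^n_{\max} \ge \lambda^{-1}\log n + t]$ to decay exponentially in $t$; the small-$x$ regime — extremely unlikely for the max — is handled by the hypothesis $x = O(\mu(x))$ near $\chi_{\min}$, which keeps the integrand $x/\mu(x)$ uniformly bounded there. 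The bounds $Q \ge n^{1/4}$ and $Q \le (1-\epsilon)n\Ex[X]$ (and the min analogue) are exactly what is needed to keep the window $[Q-x,Q)$ strictly inside the bulk of $\{S_0,\ldots,S_{n-1}\}$ for typical $x$.

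\emph{Main obstacle.} The hard part is Proposition~\ref{pro:renewal}: a renewal estimate whose error decays exponentially in the distance from the boundaries of the epoch range, uniformly over the parametric family $\{X_{\leq x}\}$ (resp.\ $\{X_{\geq x}\}$) as $x$ varies. The piecewise-monotone, differentiable structure of $f$ is most naturally used to rule out near-lattice behaviour of these conditional laws, via a Fourier-analytic bound $|\widehat{f}_{\leq x}(\theta)| \le 1 - \delta$ outside a neighbourhood of the origin; a standard inversion then converts this into the required exponential decay. Everything afterwards — propagating the error through the integration against the density of $X^n_{\max}$, tracking the regularity of $\mu(x)$, and gluing the max and min arguments — is mechanical bookkeeping.
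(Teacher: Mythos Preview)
Your proposal is correct and follows essentially the same route as the paper: condition on the extreme weight to reduce to a renewal count (the paper's Lemma~\ref{lem:uniform-formula}), apply the quantitative renewal theorem Proposition~\ref{pro:renewal} (proved, as you anticipate, via a Laplace/Fourier inversion exploiting the piecewise-monotone hypothesis), and then integrate out while disposing of atypical $x$ using the exponential tail of $f$ and the assumption $x = O(\Ex[X_{\leq x}])$ near $\chi_{\min}$.

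One organizational remark: the paper separates the two boundary conditions you bundle together. Proposition~\ref{pro:renewal} itself concerns only the \emph{infinite} renewal count and yields an error $O(e^{-\gamma Q})$ depending on the left endpoint alone; the truncation from the infinite sum to the first $n$ epochs is handled by a separate Bernstein-type tail bound (Lemma~\ref{lem:uniform-tail}), which is where the upper constraint $Q \le (n - n^{2/3})\Ex[Y]$ enters --- note the gap needed there is of order $n^{2/3}$, not $n^{1/4}$, to beat the variance. This does not affect the correctness of your outline, but you should expect the two error sources to be controlled by different mechanisms rather than by a single renewal estimate.
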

%

In particular, we can determine the limiting values of $n\Ex[\shapley_n(Q)]$ and $n\Ex[\shapley_1(Q)]$.

\begin{corollary}
	\label{cor:uniform-limit}

Let $X$ be a random variable satisfying the requirements of Theorem~\ref{thm:iid}.
Suppose that $q \in (0,1)$ and (for the first statement) $\chi_{\max} < \infty$, where $\chi_{\max} = \sup \{ x : \Pr[X \leq x] > 0 \}$. Then for $Q = qn\Ex[X]$,
	\begin{align*}
		\lim_{n\to\infty} n\Ex [\shapley_n(Q)] &= \frac{\chi_{\max}}{\Ex [X]}, \\
		\lim_{n\to\infty} n\Ex [\shapley_1(Q)] &= \frac{\chi_{\min}}{\Ex [X]}.
	\end{align*}
If $\chi_{\max} = \infty$ then as $n\to\infty$ we have $n\Ex[\shapley_n(qn\Ex[X])] \to \infty$ and
\[ n\Ex[\shapley_n(Q)] \sim \Ex_{x\sim X^n_{\max}} [\frac{x}{\Ex [X_{\leq x}]}]. \]
\end{corollary}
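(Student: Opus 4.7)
The plan is to substitute $Q = qn\Ex[X]$ into Theorem~\ref{thm:iid} and then compute the limit of the main term on its right-hand side. For the maximum statement, choosing $\epsilon := (1-q)/2$ places $Q$ in the admissible interval for all sufficiently large $n$; for the minimum statement, $qn\Ex[X] \leq (n-n^{2/3})\Ex[X]$ as soon as $n^{-1/3} \leq 1-q$. Multiplying the two conclusions of Theorem~\ref{thm:iid} by $n$ gives
\[
n\Ex[\shapley_n(Q)] = \Ex_{x \sim X^n_{\max}}\!\left[\frac{x}{\Ex [X_{\leq x}]}\right] + O(n\xi^{n^{1/4}}),
\qquad
n\Ex[\shapley_1(Q)] = \Ex_{x \sim X^n_{\min}}\!\left[\frac{x}{\Ex [X_{\geq x}]}\right] + O(n\xi^{n^{1/4}}),
\]
with error tending to $0$ because $\xi<1$. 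So the corollary reduces to computing the limits of these two expectations.

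For the first limit, assume $\chi_{\max} < \infty$ and set $g(x) := x/\Ex[X_{\leq x}]$. Observe that $g(x)\geq 1$ (since $X_{\leq x}\leq x$ a.s.), while as $x\to\chi_{\max}^-$ the conditional distribution $X_{\leq x}$ converges to $X$, so $\Ex[X_{\leq x}]\to\Ex[X]$ and hence $g(x)\to\chi_{\max}/\Ex[X]$. Next I would verify that $g$ is uniformly bounded on the support: for $x$ bounded away from $\chi_{\min}$ the denominator is bounded below and the numerator by $\chi_{\max}$, while for $x$ near $\chi_{\min}$ the technical hypothesis $x = O(\Ex[X_{\leq x}])$ provides exactly the required bound. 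Since $\Pr[X^n_{\max}\leq \chi_{\max}-\delta] = \Pr[X\leq\chi_{\max}-\delta]^n \to 0$ for every $\delta>0$, we have $X^n_{\max}\to\chi_{\max}$ in probability, and bounded convergence yields $\Ex[g(X^n_{\max})]\to \chi_{\max}/\Ex[X]$.

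For the second limit, set $h(x) := x/\Ex[X_{\geq x}]$. Because $X_{\geq x}\geq x$ a.s., the function $h$ is bounded by $1$, and as $x\to\chi_{\min}^+$ the conditional distribution $X_{\geq x}$ converges to $X$, giving $h(x)\to\chi_{\min}/\Ex[X]$. Combined with $X^n_{\min}\to\chi_{\min}$ in probability (same argument as above), bounded convergence gives $\Ex[h(X^n_{\min})]\to\chi_{\min}/\Ex[X]$.

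Finally, for the case $\chi_{\max}=\infty$, once we show that the main term $\Ex[g(X^n_{\max})]$ tends to infinity, the additive error $O(n\xi^{n^{1/4}})$ becomes negligible relative to it and the asymptotic equivalence follows. To prove divergence, note that for any $M>0$ and any $x>M$ we have $\Ex[X_{\leq x}] \leq \Ex[X]$, so $g(x)\geq x/\Ex[X]\geq M/\Ex[X]$; together with $g\geq 1 \geq 0$ elsewhere, this gives $\Ex[g(X^n_{\max})] \geq (M/\Ex[X])\Pr[X^n_{\max}>M]$. Since $\chi_{\max}=\infty$ forces $X^n_{\max}\to\infty$ in probability, $\liminf_n \Ex[g(X^n_{\max})] \geq M/\Ex[X]$, and $M$ is arbitrary. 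The only real obstacle is the boundedness of $g$ near $\chi_{\min}$ in the first part, which is exactly what the technical hypothesis on the distribution is designed to guarantee; everything else is a routine application of bounded convergence to the identities supplied by Theorem~\ref{thm:iid}.
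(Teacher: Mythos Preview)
Your proposal is correct and follows essentially the same route as the paper: apply Theorem~\ref{thm:iid} with an appropriate $\epsilon$, reduce to computing $\lim_n \Ex[g(X^n_{\max})]$ and $\lim_n \Ex[h(X^n_{\min})]$, and use that $X^n_{\max}\to\chi_{\max}$ and $X^n_{\min}\to\chi_{\min}$. You are actually more careful than the paper in justifying the passage to the limit --- you explicitly verify boundedness of $g$ (using the technical hypothesis near $\chi_{\min}$) and $h$ and invoke bounded convergence, whereas the paper simply asserts the limit follows from continuity; for the divergence in the case $\chi_{\max}=\infty$ the paper uses the slightly slicker bound $\Ex[g(X^n_{\max})]\ge \Ex[X^n_{\max}]/\Ex[X]\to\infty$, but your truncation argument is equally valid.
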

\begin{proof}
Let $Q = qn\Ex[X]$.
We start by proving the result for $\shapley_1$. For large enough $n$, it is the case that $Q \in [n^{1/4}, (n-n^{2/3})\Ex[X]]$, and so we can apply the theorem to obtain
 \[ \Ex[\shapleyx_{\min}(Q)] = \frac{1}{n} \Ex_{x \sim X^n_{\min}} \big[\frac{x}{\Ex [X_{\geq x}]}\big] \pm O(\xi^{n^{1/4}}). \]
As $n\to\infty$, the random variable $X^n_{\min}$ tends to the constant $\chi_{\min}$, and so since $x/\Ex[X_{\geq x}]$ is continuous, this implies the formula for $\shapley_1$.

The proof for $\shapley_n$ is similar but has an extra complication. Suppose first that $\chi_{\max} < \infty$. Take $m$ large enough so that $\Ex[X_{\leq m}] > q \Ex[X]$. For large enough $n$, it is the case that $Q \in [n^{1/4}, (n-n^{2/3})\Ex[X_{\leq m}]]$, and so we can apply the theorem to obtain
 \[ \Ex[\shapleyx_{\max}(Q)] = \frac{1}{n} \Ex_{x \sim X^n_{\max}} \big[\frac{x}{\Ex [X_{\leq x}]}\big] \pm O(\xi^{n^{1/4}}). \]
As before, the random variable $X^n_{\max}$ tends to the constant distribution $\chi_{\max}$, and this implies the formula for $\shapley_n$.

Finally, suppose that $\chi_{\max} = \infty$. For the same choice of $m$ and for large enough $n$, the theorem implies that
\[
 n\Ex[\shapley_n(Q)] = \Ex_{x\sim X^n_{\max}} [\frac{x}{\Ex [X_{\leq x}]}] \pm o(1).
\]
In order to finish the proof, it remains to show that the expectation on the right tends to infinity. Indeed, this expectation is at least $\Ex[X^n_{\max}]/\Ex[X]$, and it is not hard to check that $\Ex[X^n_{\max}] \to \infty$.
\end{proof}

\section{The Uniform Distribution}
\label{sec:uniform}
Given the general result given in Theorem~\ref{thm:iid}, we now
demonstrate one application of it on the uniform distribution $U(a,b)$
, where $0 \leq a < b$. It is easy to see that $\chi_{\min} = a$ and
$\chi_{\max} = b$ in this case. When $X = U(a,b)$,
Corollary~\ref{cor:uniform-limit} shows that for $q \in (0,1)$,
$n\shapley_n(qn\Ex[X]) \to 2b/(a+b)$ and $n\shapley_1(qn\Ex[X]) \to 2a/(a+b)$.
Theorem~\ref{thm:iid} implies the following more precise statement,
illustrated by Figure~\ref{fig:uniform} for the case $X = U(0,1)$:
\begin{theorem}
  \label{thm:uniform-uniform-general}
  Let $X = U(a,b)$, where $0 \leq a < b$. For all $\epsilon > 0$ there
  exist $\psi < 1$ and $K > 0$ such that the following hold:
  \begin{itemize}
  \item For all $q \in [n^{-3/4},1-\epsilon]$,
    \[ \Ex [\shapley_n(qn\Ex[X])] = \frac{1}{b-a} \int_a^b
    \left(\frac{t-a}{b-a}\right)^{n-1} \frac{2t}{a+t} \, dt \pm
    O(\psi^{n^{1/4}}). \] We can evaluate the integral as a series:
    \[
    \frac{1}{b-a} \int_a^b \left(\frac{t-a}{b-a}\right)^{n-1}
    \frac{2t}{a+t} \, dt = \frac{2b}{a+b} \frac{1}{n} - \frac{2a}{a+b}
    \sum_{d=1}^\infty \left(\frac{b-a}{a+b}\right)^d
    \frac{d!}{n(n+1)\cdots(n+d)}.
    \]
    In particular, when $a = 0$ the integral is equal to
    $\frac{2}{n}$.
  \item For all $q \in [n^{-3/4},1-Kn^{-1/3}]$,
    \[ \Ex [\shapley_1(qn\Ex[X]) ]= \frac{1}{b-a} \int_a^b
    \left(\frac{b-t}{b-a}\right)^{n-1} \frac{2t}{b+t} \, dt \pm
    O(\psi^{n^{1/4}}). \] We can evaluate the integral as a series:
    \[
    \frac{1}{b-a} \int_a^b \left(\frac{b-t}{b-a}\right)^{n-1}
    \frac{2t}{b+t} \, dt = \frac{2a}{a+b} \frac{1}{n} - \frac{2b}{a+b}
    \sum_{d=1}^\infty \left(\frac{a-b}{a+b}\right)^d
    \frac{d!}{n(n+1)\cdots(n+d)}.
    \]
    In particular, when $a = 0$ the integral is equal to
    \[
    2\sum_{d=1}^\infty \frac{(-1)^{d+1} d!}{n(n+1)\cdots(n+d)} =
    \frac{2}{n(n+1)} - \frac{4}{n(n+1)(n+2)} + \cdots.
    \]
  \end{itemize}
\end{theorem}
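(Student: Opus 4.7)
The plan is to invoke Theorem~\ref{thm:iid} with $X = U(a,b)$ and carry out the resulting integrals explicitly. First I would verify the three hypotheses. The density $f(t) = 1/(b-a)$ on $[a,b]$ is bounded and compactly supported, so $f(t) \leq Ce^{-\lambda t}$ trivially; it is constant, hence differentiable and monotone on its single interval of support; and near $\chi_{\min} = a$ we have $\Ex[X_{\leq x}] = (a+x)/2$, which is $\Theta(x)$ whether or not $a = 0$. For the admissible range, $q \in [n^{-3/4}, 1-\eps]$ pushes $Q = qn\Ex[X]$ into the window required for the $\shapleyx_{\max}$ statement of Theorem~\ref{thm:iid}, and similarly, with $K$ chosen large enough, $q \leq 1 - Kn^{-1/3}$ respects the $(n-n^{2/3})\Ex[X]$ upper bound for $\shapleyx_{\min}$.

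Next I would substitute the order-statistic density and conditional mean directly into the conclusion of Theorem~\ref{thm:iid}. Since $X^n_{\max}$ has density $n((t-a)/(b-a))^{n-1}/(b-a)$ on $[a,b]$ and $\Ex[X_{\leq t}] = (a+t)/2$, the right-hand side of Theorem~\ref{thm:iid} becomes
\[ \frac{1}{n}\,\Ex_{x \sim X^n_{\max}}\!\Big[\frac{x}{\Ex[X_{\leq x}]}\Big] = \frac{1}{b-a}\int_a^b \Big(\frac{t-a}{b-a}\Big)^{n-1}\frac{2t}{a+t}\,dt, \]
which is the claimed closed form; the error term $O(\psi^{n^{1/4}})$ is inherited from Theorem~\ref{thm:iid}. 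The computation for $X^n_{\min}$ is entirely symmetric, using the density $n((b-t)/(b-a))^{n-1}/(b-a)$ and $\Ex[X_{\geq t}] = (b+t)/2$.

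To obtain the stated series, I would substitute $v = (b-t)/(b-a)$ in the maximum integral. Writing $c = b-a$, $s = a+b$, $r = c/s$, the integrand becomes $(1-v)^{n-1} \cdot 2(b-cv)/(s-cv)$; expanding $1/(s-cv) = s^{-1}\sum_{d \geq 0}(rv)^d$ and using the arithmetic identity $br - c = -ac/s$ collapses the rational factor to
\[ \frac{2(b-cv)}{s-cv} = \frac{2b}{s} - \frac{2a}{s}\sum_{d \geq 1}(rv)^d. \]
Term-by-term integration against $(1-v)^{n-1}$ via the Beta identity $\int_0^1 v^d(1-v)^{n-1}\,dv = d!/[n(n+1)\cdots(n+d)]$ reproduces the announced series. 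The special case $a = 0$ kills the sum, leaving $2/n$. For $\shapley_1$ the analogous substitution gives the same computation with $a$ and $b$ interchanged, so the ratio becomes $(a-b)/(a+b)$, which for $a = 0$ specializes to the alternating series in the theorem.

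I do not anticipate a serious obstacle: the hypothesis check is immediate, the integral formula is a direct instantiation of Theorem~\ref{thm:iid}, and the series falls out of a single geometric expansion combined with the cancellation $br - c = -ac/s$. The only mildly delicate step is verifying that the admissible ranges of $Q$ provided by Theorem~\ref{thm:iid} translate correctly to $q \in [n^{-3/4}, 1-\eps]$ and $q \in [n^{-3/4}, 1-Kn^{-1/3}]$, but this is routine arithmetic on the endpoints of $qn\Ex[X]$.
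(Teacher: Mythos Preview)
Your proposal is correct and follows essentially the same route as the paper: verify the hypotheses of Theorem~\ref{thm:iid}, plug in the order-statistic densities and the conditional means $\Ex[X_{\leq t}]=(a+t)/2$, $\Ex[X_{\geq t}]=(b+t)/2$, and then obtain the series by a geometric expansion of the rational factor combined with the Beta integral. The paper expands $\tfrac{t}{a+t}$ directly in powers of $(b-t)/(a+b)$ and handles the second integral by the substitution $(a,b)\mapsto(b,a)$, which is exactly your $rv$-expansion and symmetry argument in different notation.
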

\begin{proof}
  If $a = 0$ then $\Ex[X_{\leq x}] = x/2$ and so
  $x = O(\Ex[X_{\leq x}])$ for all $x \in [a,b]$.  If $a > 0$ then for
  $x \geq a$ we trivially have $\Ex[X_{\leq x}] \geq a$ and so
  $x = O(\Ex[X_{\leq x}])$ again for all $x \in [a,b]$.  In both cases
  we see that Theorem~\ref{thm:iid} applies.  We have
  $\chi_{\min} = a$, $\chi_{\max} = b$, $\Ex[X] = (a+b)/2$,
  $\Ex[X_{\leq x}] = (a+x)/2$ and $\Ex[X_{\geq x}] = (b+x)/2$.

  The distribution of $X_{\max}^n$ is given by
  \[ \Pr[X_{\max}^n \leq t] = \Pr[X \leq t]^n =
  \left(\frac{t-a}{b-a}\right)^n. \]
  The corresponding density function is the derivative
  $\frac{n}{b-a} \left(\frac{t-a}{b-a}\right)^{n-1}$.  The formula for
  $\shapley_n(qn\Ex[X])$ follows from
  \[
  \Ex_{x \sim X_{\max}^n} [\frac{x}{\Ex [X_{\leq x}]}] = \Ex_{x \sim
    X_{\max}^n} [\frac{2x}{a+x}] = \frac{1}{b-a} \int_a^b
  n\left(\frac{t-a}{b-a}\right)^{n-1} \frac{2t}{a+t} \, dt.
  \]

  Similarly, the distribution of $X_{\min}^n$ is given by
  \[ \Pr[X_{\min}^n \geq t] = \Pr[X \geq t]^n =
  \left(\frac{b-t}{b-a}\right)^n. \]
  The corresponding density function is the negated derivative
  $\frac{n}{b-a} \left(\frac{b-t}{b-a}\right)^{n-1}$.  The formula for
  $\shapley_1(qn\Ex[X])$ follows from
  \[
  \Ex_{x \sim X_{\min}^n} [\frac{x}{\Ex [X_{\geq x}]}] = \Ex_{x \sim
    X_{\min}^n} [\frac{2x}{b+x}] = \frac{1}{b-a} \int_a^b
  n\left(\frac{b-t}{b-a}\right)^{n-1} \frac{2t}{b+t} \, dt.
  \]


  We proceed to evaluate the integrals, starting with the first one.
  The basic observation is
\begin{equation} \label{eq:uniform:1}
  \frac{1}{b-a} \int_a^b \left(\frac{t-a}{b-a}\right)^{n-1}
  \left(\frac{b-t}{b-a}\right)^d \, dt =
  \int_0^1 s^{n-1} (1-s)^d \, dt =
  \frac{d!}{n(n+1)\cdots(n+d)},
\end{equation}
  using the substitution $s=(t-a)/(b-a)$ and the Beta integral.
  A simple calculation shows that
  \[
  \frac{t}{a+t} = \frac{b}{a+b} - \frac{a}{a+b} \sum_{d=1}^\infty
  \left(\frac{b-t}{a+b}\right)^d.
  \]
  Therefore, using~\eqref{eq:uniform:1},
  \begin{align*}
    \frac{1}{b-a} \int_a^b \left(\frac{t-a}{b-a}\right)^{n-1} \frac{2t}{a+t} \, dt &=
                                                                                     \frac{2}{b-a} \int_a^b \left(\frac{t-a}{b-a}\right)^{n-1} \left[\frac{b}{a+b} - \frac{a}{a+b} \sum_{d=1}^\infty \left(\frac{b-t}{a+b}\right)^d \right] \, dt \\ &=
                                                                                                                                                                                                                                                       \frac{2b}{a+b} \frac{1}{n} - \frac{2a}{a+b} \sum_{d=1}^\infty \left(\frac{b-a}{a+b}\right)^d \frac{d!}{n(n+1)\cdots(n+d)}.
  \end{align*}

  The second integral can be evaluated in the same way. Alternatively,
  substitute $(a,b)=(b,a)$ in the formula for the first integral to
  obtain
  \begin{align*}
    \frac{1}{b-a} \int_a^b \left(\frac{b-t}{b-a}\right)^{n-1} \frac{2t}{b+t} \, dt &=
                                                                                     \frac{1}{a-b} \int_b^a \left(\frac{t-b}{a-b}\right)^{n-1} \frac{2t}{b+t} \, dt \\ &=
                                                                                                                                                                         \frac{2a}{a+b} \frac{1}{n} - \frac{2b}{a+b} \sum_{d=1}^\infty \left(\frac{a-b}{a+b}\right)^d \frac{d!}{n(n+1)\cdots(n+d)}. 
  \end{align*}
\end{proof}



\section{The Exponential Distribution}
\label{sec:exponential}

In this section we analyze the exponential distribution $X = \Exp(1)$, whose bounds are $\chi_{\min} = 0$ and $\chi_{\max} = \infty$. Corollary~\ref{cor:uniform-limit} shows that $n\shapley_1(qn\Ex[X]) \to 0$ for all $q \in (0,1)$ and implies that $n\shapley_n(qn\Ex[X]) \to \infty$.
Theorem~\ref{thm:iid} implies the following more precise statement, illustrated by Figure~\ref{fig:exponential}:
\begin{theorem}
	\label{thm:exponential}
	Let $X = \Exp(1)$. For all $\epsilon > 0$ there exist $\psi < 1$ and $K > 0$ such that the following hold:
\begin{itemize}
        \item For all $q \in [n^{-3/4},1-\epsilon]$,
	\[ \Ex [\shapley_n(qn\Ex[X])] = \int_0^\infty (1-e^{-x})^n \frac{x}{e^x - (1+x)} \, dx + O(\psi^{n^{1/4}}), \]
        and the integral satisfies
\[
 \int_0^\infty (1-e^{-x})^n \frac{x}{e^x - (1+x)} \, dx = \frac{\log n + \gamma}{n} + O\left(\frac{\log^2 n}{n^2}\right).
\]
        \item For all $q \in [n^{-3/4},1-Kn^{-1/3}]$,
	\[ \Ex [\shapley_1(qn\Ex[X]) ]= \int_0^\infty e^{-nx} \frac{x}{x+1} \, dx + O(\psi^{n^{1/4}}), \]
        and the integral satisfies
\[ \int_0^\infty e^{-nx} \frac{x}{x+1} \, dx = \frac{1}{n^2} - O\left(\frac{1}{n^3}\right). \]
\end{itemize}
\end{theorem}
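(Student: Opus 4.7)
The plan is to apply Theorem~\ref{thm:iid} to $X=\Exp(1)$, derive both integral formulas by direct substitution, and then evaluate each integral asymptotically. First I would verify the hypotheses of Theorem~\ref{thm:iid}: the density $f(t)=e^{-t}$ satisfies the exponential decay bound with $C=\lambda=1$, is $C^\infty$ and monotone decreasing on its single support interval $[0,\infty)$, and the condition $x=O(\Ex[X_{\leq x}])$ near $\chi_{\min}=0$ follows from the expansion $\Ex[X_{\leq x}] = (1-(1+x)e^{-x})/(1-e^{-x}) = x/2 + O(x^2)$, so that $x/\Ex[X_{\leq x}]\to 2$. I would then compute the ingredients: the densities of the extremes are $n(1-e^{-x})^{n-1}e^{-x}$ and $ne^{-nx}$; a direct calculation gives $\Ex[X_{\leq x}] = (e^x-(1+x))/(e^x-1)$, and memorylessness gives $\Ex[X_{\geq x}] = x+1$. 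Substituting into the two formulas of Theorem~\ref{thm:iid} and simplifying (using $e^{-x}(e^x-1) = 1-e^{-x}$ to absorb one factor into the $(n-1)$-th power in the $\shapley_n$ case) yields the two stated integral identities, each carrying the error $O(\xi^{n^{1/4}})$ inherited from Theorem~\ref{thm:iid}; this is absorbed into $O(\psi^{n^{1/4}})$ for appropriate $\psi$.

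The asymptotic evaluation of the $\shapley_1$ integral is immediate: writing $x/(x+1) = x - x^2/(x+1)$ produces the main term $\int_0^\infty x e^{-nx}\,dx = 1/n^2$, and the remainder is dominated by $\int_0^\infty x^2 e^{-nx}\,dx = 2/n^3$, giving the claimed $1/n^2 - O(1/n^3)$.

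For the $\shapley_n$ integral I would substitute $u=e^{-x}$ to transform it into $\int_0^1 (1-u)^n (-\log u)/h(u)\,du$, where $h(u) := 1-u+u\log u$ is positive on $(0,1)$ since $h'(u)=\log u<0$, satisfies $h(u)\to 1$ as $u\to 0$, and vanishes to second order at $u=1$ via $h(u) \sim (1-u)^2/2$. Writing $1/h(u) = 1 + g(u)$ with $g(u) = O(u|\log u|)$ near $0$, the principal part $\int_0^1 (1-u)^n(-\log u)\,du$ evaluates to $H_{n+1}/(n+1)$ by differentiating the Beta integral $B(a,n+1)$ in $a$ at $a=1$, using $\psi(1)-\psi(n+2) = -H_{n+1}$; since $H_{n+1} = \log n + \gamma + O(1/n)$, this is exactly $(\log n+\gamma)/n + O(\log n/n^2)$. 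The correction $\int_0^1 (1-u)^n(-\log u) g(u)\,du$ I would bound by splitting at $u=1/2$: on $[0,1/2]$, the estimate $|g(u)| = O(u|\log u|)$ reduces matters to $\int_0^1 u(1-u)^n \log^2 u\,du = O(\log^2 n/n^2)$, obtained from the second $a$-derivative of $B(a,n+1)$ at $a=2$; on $[1/2,1]$, the factor $(1-u)^n \leq 2^{-n}$ is exponentially small while $(-\log u)g(u)$ has only a simple pole at $u=1$ (because $h$ vanishes to second order and $-\log u$ to first), so the remaining factors are integrable and the contribution is $O(2^{-n}/n)$.

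The main obstacle is precisely this last correction estimate: the integrand degenerates at both endpoints of $[0,1]$ -- logarithmically at $0$ through $-\log u$, and polynomially at $1$ through the quadratic vanishing of $h$ -- and pinning the error down to the advertised $O(\log^2 n/n^2)$ requires a separate endpoint analysis, with Beta-integral derivatives handling the bulk contribution near $u=0$ and exponential smallness of $(1-u)^n$ neutralizing the pole near $u=1$.
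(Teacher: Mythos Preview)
Your proposal follows the paper's proof closely: both verify the hypotheses of Theorem~\ref{thm:iid}, substitute $u=e^{-x}$ in the $\shapley_n$ integral, extract the main term $\int_0^1 (1-u)^n(-\log u)\,du = H_{n+1}/(n+1)$ via a Beta-integral identity, and split the correction at an interior point to handle the two endpoint singularities separately. The paper pushes the geometric expansion of $1/h(u)$ one step further, writing $1/(1-z)=1+z+z^2/(1-z)$ with $z=u-u\log u$ (giving three pieces $I_n,J_n,K_n$), whereas your coarser $1/h=1+g$ suffices. Your treatment of the $\shapley_1$ integral via $x/(x+1)=x-x^2/(x+1)$ is in fact more direct than the paper's detour through the exponential integral.

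One point needs repair. On $[1/2,1]$ you write that $(1-u)^n\le 2^{-n}$ and that $(-\log u)g(u)$, having only a simple pole at $u=1$, is integrable; but a simple pole is \emph{not} integrable, so this factorization fails as stated. The correct argument---which your announced bound $O(2^{-n}/n)$ suggests you have in mind---is to spend one power of $(1-u)$ to cancel the pole: since $h(u)\ge c(1-u)^2$ and $-\log u\le C(1-u)$ on $[1/2,1]$, the full correction integrand is $O((1-u)^{n-1})$, and integrating gives $\int_{1/2}^1 (1-u)^{n-1}\,du = 2^{-n}/n$. (This is exactly how the paper handles its $M_n$ term after the substitution $s=1-t$ and the lower bound $s+(1-s)\log(1-s)\ge s^2/2$.)
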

\begin{proof}
 Notice first that
\[
 \Ex[X_{\leq x}] = \frac{\int_0^x e^{-t} t \, dt}{\int_0^x e^{-t} \, dt} =\frac{1-(x+1)e^{-x}}{1-e^{-x}}.
\]
 For small $x$ the numerator is $1-(x+1)(1-x+x^2/2+O(x^3)) = (3/2) x^2 + O(x^3)$ and the denominator is $x + O(x^2)$, and so $\Ex[X_{\leq x}] \sim (3/2) x$. We conclude that $x = O(\Ex[X_{\leq x}])$ for $x$ near $0$, and so Theorem~\ref{thm:iid} applies.

 Using the formula for $\Ex[X_{\leq x}]$, we have
\[
\phi(x) := \frac{x}{\Ex[X_{\leq x}]} = \frac{x(1-e^{-x})}{1-(x+1)e^{-x}}.
\]
It is easy to calculate $\Pr[X^n_{\max} \leq x] = (1-e^{-x})^n$, and so the density of $X^n_{\max}$ is $n(1-e^{-x})^{n-1} e^{-x}$. We conclude that
\begin{align*}
\Ex_{x \sim X^n_{\max}} [\phi(x)] &= n\int_0^\infty (1-e^{-x})^n \frac{x}{e^x - (1+x)} \, dx \\ &=
n\int_0^1 (1-t)^n \log \frac{1}{t} \frac{dt}{1-(t+t\log\tfrac{1}{t})}.
\end{align*}
In order to estimate the integral, write
\[
 \int_0^1 (1-t)^n \log \frac{1}{t} \frac{dt}{1-(t+t\log\tfrac{1}{t})} = I_n + J_n + K_n,
\]
where $I_n,J_n,K_n$ are given by
\begin{align*}
 I_n &= \int_0^1 (1-t)^n \log \frac{1}{t} \, dt, \\
 J_n &= \int_0^1 (1-t)^n \log \frac{1}{t} (t+t\log\tfrac{1}{t}) \, dt, \\
 K_n &= \int_0^1 (1-t)^n \log \frac{1}{t} (t+t\log\tfrac{1}{t})^2 \frac{dt}{1-(t+t\log\tfrac{1}{t})}.
\end{align*}
Surprisingly, we can calculate $I_n,J_n$ exactly in terms of the harmonic numbers $H_n$:
\begin{align}
 I_n &= \frac{H_{n+1}}{n+1} = \frac{1}{n+1} \sum_{i=0}^n \frac{1}{i+1}, \label{eq:In} \\
 J_n &= \frac{1}{(n+1)(n+2)} \sum_{i=0}^n \frac{2}{i+2} H_i - \frac{i^2-i-4}{(i+1)(i+2)^2}. \label{eq:Jn}
\end{align}
In order to get the formula for $I_n$, notice first that $t+t\log\tfrac{1}{t}$ is an antiderivative of $\log\tfrac{1}{t}$. This immediately implies that $I_0=1$, and for $n>0$, integration by parts givs
\begin{align*}
 I_n &= \int_0^1 (1-t)^n \log \frac{1}{t} \, dt \\ &=
 \left. (1-t)^n \left(t+t\log\frac{1}{t}\right) \right|_0^1 + n\int_0^1(1-t)^{n-1}t\left(1+\log\frac{1}{t}\right) \, dt \\ &=
 n\int_0^1 [(1-t)^{n-1}-(1-t)^n]\log\frac{1}{t} \, dt + n\int_0^1 (1-t)^{n-1} t \, dt \\ &=
 n(I_{n-1} - I_n) + \frac{1}{n+1},
\end{align*}
using the formula for the Beta integral. This shows that $(n+1)I_n = nI_{n-1} + 1/(n+1)$, which implies formula~\eqref{eq:In}. Formula~\eqref{eq:Jn} is proved along the same lines.
It is well-known that $H_n = \log n + \gamma + O(1/n)$, and this shows that
\[
 I_n = \frac{H_{n+1}}{n+1} = \frac{\log (n+1) + \gamma + O(1/n)}{n+1} = \frac{\log n + \gamma}{n} + O\left(\frac{\log n}{n^2}\right).
\]
Similarly, using the integral $\int_1^n \frac{2\log m}{m} \, dm =
\log^2 n$ to estimate the corresponding series, we obtain
\[
 J_n = \frac{1}{\Theta(n^2)} \sum_{i=0}^n O\left(\frac{\log n}{n}\right) = O\left(\frac{\log^2 n}{n^2}\right).
\]
It remains to estimate $K_n$. We break $K_n$ into two parts, $L_n = \int_0^{1/e}$ and $M_n = \int_{1/e}^1$, which we bound separately. Since $t+t\log\frac{1}{t}$ is increasing, when $t \leq 1/e$ we have $t+t\log\frac{1}{t} \leq 2/e < 1$, and so
\[
 L_n = \int_0^{1/e} (1-t)^n \log \frac{1}{t} (t+t\log\tfrac{1}{t})^2 \frac{dt}{1-(t+t\log\tfrac{1}{t})} \leq \frac{1}{1-2/e} J_n = O\left(\frac{\log^2 n}{n^2}\right).
\]
When $t \geq 1/e$, we have $\log \frac{1}{t} \leq 1$ and so
\[
 M_n \leq 4\int_{1/e}^1 (1-t)^n \frac{dt}{1-(t+t\log\tfrac{1}{t})} =
 4\int_0^{1-1/e} s^n \frac{ds}{s+(1-s)\log(1-s)},
\]
where we applied the substitution $s = 1-t$. Taylor expansion shows that $s+(1-s)\log(1-s) \geq s^2/2$, and so
\[
 M_n \leq 8\int_0^{1-1/e} s^{n-2} \, ds = 8\frac{(1-1/e)^{n-1}}{n-1}.
\]
We conclude that $J_n + K_n = O(\log^2 n/n^2)$, and so
\[
 \int_0^1 (1-t)^n \log \frac{1}{t} \frac{dt}{1-(t+t\log\tfrac{1}{t})} = \frac{\log n + \gamma}{n} + O\left(\frac{\log^2 n}{n^2}\right).
\]


We move on to calculate $\Ex [\shapley_1(qn\Ex[X])]$. We have
\[
\Ex[X_{\geq x}] = \frac{\int_x^{\infty} e^{-t} t \, dt}{\int_x^{\infty} e^{-t} \, dt} =\frac{(x+1)e^{-x}}{e^{-x}} = x+1.
\]
It is easy to calculate $\Pr[X^n_{\min} \geq x] = e^{-nx}$, and so the density of $X^n_{\min}$ is $ne^{-nx}$. We conclude that
\[
\Ex_{x \sim X^n_{\min}} \big[\frac{x}{\Ex[X_{\geq x}]}\big] = n\int_0^\infty e^{-nx} \frac{x}{x+1} \, dx.
\]
This gives us the stated formula. In order to estimate the integral, note that
\begin{align*}
\int_0^\infty e^{-nx} \frac{x}{x+1} \, dx &= \int_0^\infty e^{-nx} \, dx - \int_0^\infty e^{-nx} \frac{dx}{x+1} \\ &=
\frac{1}{n} - e^n \int_1^\infty  e^{-nx} \frac{dx}{x} =
\frac{1}{n} - e^n \int_n^\infty \frac{e^{-x}}{x} \, dx
\end{align*}
The latter integral is an exponential integral, and its asymptotic expansion is
\[ \int_n^\infty \frac{e^{-x}}{x} \, dx = e^{-n} \left(\frac{1}{n} - \frac{1}{n^2} + O\left(\frac{1}{n^3}\right)\right). \]
We conclude that
\[
 \int_0^\infty e^{-nx} \frac{x}{x+1} \, dx = \frac{1}{n^2} - O\left(\frac{1}{n^3}\right). \qedhere
\]
\end{proof}

\begin{figure}
	\centering
	\includegraphics[width=.8\textwidth]{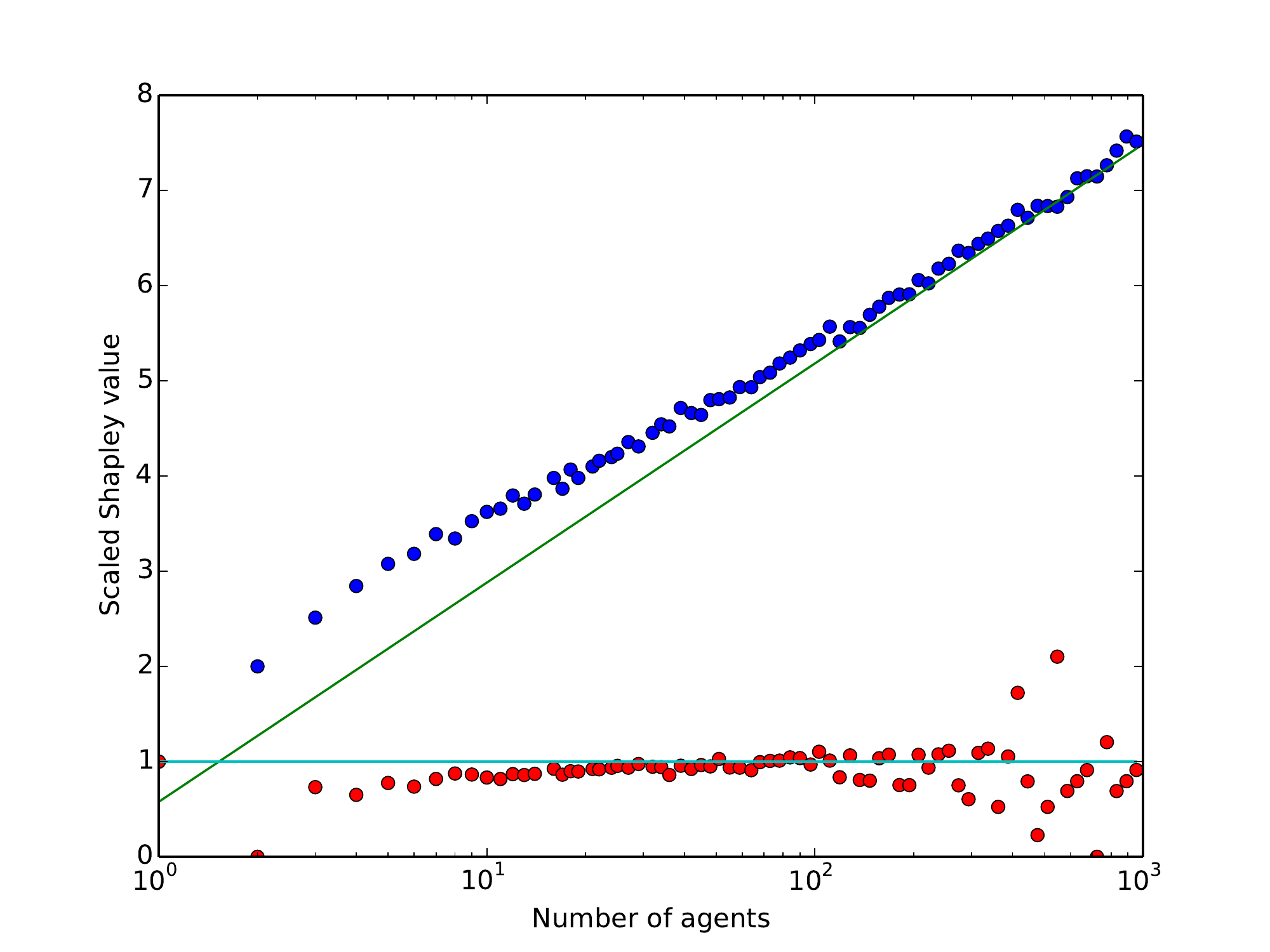} \caption{Shapley values for the normalized iid model with $X = \Exp(1)$ and various $n$ of both maximal and minimal agent, multiplied by $n$ and by $n^2$, respectively, at the quota $q = 1/2$. Results of $10^6$ experiments. The experimental results are compared to the predictions of Theorem~\ref{thm:exponential}: the maximal Shapley value is compared against $(\log n + \gamma)/n$, and the minimal Shapley value is compared against $1/n^2$.} \label{fig:exponential}
\end{figure}

\section{Proving Theorem \ref{thm:iid}}
\label{sec:main-proof}
Recall the statement of Theorem~\ref{thm:iid}:

\thmiid*


Recall that we generated the weights $w_1,\ldots,w_n$ according to the following process. First, we generate the sequence $x_1,\ldots,x_n$ by generating $n$ samples from the distribution $X$.
The sequence $w_1,\ldots,w_n$ consists of the values $x_1,\ldots,x_n$ sorted in increasing order.
It will be simpler to analyze the original sequence $x_1,\ldots,x_n$ and some values derived from it:
\begin{itemize}
	\item $x_{\max} = \max(x_1,\ldots,x_n)$, the corresponding distribution is $X^n_{\max}$, and the corresponding Shapley value (with respect to $x_1,\ldots,x_n$) is $\shapleyx_{\max}$.
	\item $x_{\min} = \min(x_1,\ldots,x_n)$, the corresponding distribution is $X^n_{\min}$, and the corresponding Shapley value (with respect to $x_1,\ldots,x_n$) is $\shapleyx_{\min}$.
\end{itemize}
Recall also that we defined
\begin{align*}
 \chi_{\min} &= \inf \{ x : \Pr[X \geq x] > 0 \}, & \chi_{\max} &= \sup \{ x : \Pr[X \leq x] > 0 \}.
\end{align*}

The crux of the proof is the following formula for the Shapley values of the original sequence $x_1,\ldots,x_n$.

\begin{lemma}
	\label{lem:uniform-formula} For any quota value $Q$,
	\begin{align}
		\Ex[\shapleyx_{\max}(Q)] &= \Ex_{x \sim X^n_{\max}} \left[\frac{1}{n} \sum_{i=1}^n \Pr_{y_1,\ldots,y_{n-1} \sim X_{\leq x}} \left[ \sum_{j=1}^{i-1} y_j \in [Q-x,Q) \right] \right], \label{eq:xmax} \\
		\Ex[\shapleyx_{\min}(Q)] &= \Ex_{x \sim X^n_{\min}} \left[\frac{1}{n} \sum_{i=1}^n \Pr_{y_1,\ldots,y_{n-1} \sim X_{\geq x}} \left[ \sum_{j=1}^{i-1} y_j \in [Q-x,Q) \right] \right]. \label{eq:xmin}
	\end{align}
\end{lemma}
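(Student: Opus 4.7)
The strategy is to decompose the expectation by conditioning on the value of the extreme weight and then exploit two symmetry properties: (i) conditioned on $x_{\max} = x$, the remaining $n-1$ weights are i.i.d.\ from $X_{\leq x}$, and (ii) the position of the distinguished ``max'' agent in a uniformly random permutation is uniform on $[n]$, independent of the weights. Together these will let me rewrite the pivotality event in a form that matches the right-hand side of~\eqref{eq:xmax}.

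First I would start from the definition: for any realization of the weights, the agent with weight $x_{\max}$ has Shapley value
\[ \shapleyx_{\max}(Q) = \Pr_{\pi \in S_n}\Bigl[Q - x_{\max} \leq \sum_{j < \pi^{-1}(\text{max})} x_{\pi(j)} < Q\Bigr]. \]
Condition on $x_{\max} = x$. Since $X$ is continuous, almost surely exactly one $x_k$ attains the maximum, and by symmetry of the i.i.d.\ sampling the identity of that index is uniform on $[n]$ while the other $n-1$ variables $\{x_k : k \neq k^*\}$ are i.i.d.\ from the conditional distribution $X_{\leq x}$. Now let $i = \pi^{-1}(\text{max})$ denote the position of the max agent in $\pi$. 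Because $\pi$ is uniform and independent of the weights, $i$ is uniform on $[n]$, and conditionally on $i$ the ordered sequence of the other $n-1$ weights along $\pi$ is a uniformly random permutation of $\{x_k : k \neq k^*\}$.

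Consequently, the sum $\sum_{j=1}^{i-1} x_{\pi(j)}$ of the predecessors' weights is distributed as the sum of the first $i-1$ entries of a uniformly random ordering of $n-1$ i.i.d.\ $X_{\leq x}$ variables, which by exchangeability has the same distribution as $y_1 + \cdots + y_{i-1}$ for $y_1,\ldots,y_{n-1} \sim X_{\leq x}$ i.i.d. Summing over $i$ with weight $1/n$ and taking expectation over $x \sim X^n_{\max}$ yields~\eqref{eq:xmax}. The formula~\eqref{eq:xmin} for $\shapleyx_{\min}$ is established by the mirror-image argument: conditional on $x_{\min} = x$, the other $n-1$ weights are i.i.d.\ $X_{\geq x}$, the min agent's position in $\pi$ is uniform, and the pivotality condition $Q - x \leq \sum y_j < Q$ has exactly the same form.

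The argument is essentially a careful bookkeeping exercise, so there is no single ``hard part''; the only delicacy is making sure the conditional i.i.d.\ representation of the non-extreme weights, together with the independence of $\pi$ from the weight vector, is invoked cleanly enough that the exchangeability step is unambiguous. Once that is in place, no analytic estimates or approximation are needed — the identity is exact for every quota $Q$.
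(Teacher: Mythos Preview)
Your proposal is correct and follows essentially the same approach as the paper: condition on the value of the extreme weight, use that the remaining $n-1$ weights are then i.i.d.\ from $X_{\leq x}$ (resp.\ $X_{\geq x}$), average over the uniform position of the distinguished agent in $\pi$, and invoke exchangeability to identify the prefix sum with $y_1+\cdots+y_{i-1}$. If anything, you are slightly more explicit than the paper about why continuity guarantees a unique maximizer almost surely.
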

\begin{proof}
	The proofs of both formulas are similar, so we only prove the first one. We show that conditioned on $x_{\max} = x$,
	\[ \Ex[\shapleyx_{\max}(Q)] = \frac{1}{n} \Pr_{y_1,\ldots,y_{n-1} \sim X_{\leq x}} \left[ \sum_{j=1}^{i-1} y_j \in [Q-x,Q) \right]. \]
	We can assume without loss of generality that $x_{\max} = x_n$. Given only this data, the variables $x_1,\ldots,x_{n-1}$ are distributed independently according to $X_{\leq x_n}$. Therefore

	\begin{align*}
		\Ex[\shapleyx_n(Q)] &= \Ex_{\pi \in S_n} [x_n \text{ is pivotal in } x_{\pi_1},\ldots,x_{\pi_n}] \\
		&= \frac{1}{n} \sum_{i=1}^n \Ex_{\substack{\pi \in S_n\colon \\
		\pi_i = n}} \Pr[x_n \text{ is pivotal in } x_{\pi_1},\ldots,x_{\pi_n}] \\
		&= \frac{1}{n} \sum_{i=1}^n \Ex_{\substack{\pi \in S_n\colon \\
		\pi_i = n}} \Pr\left[\sum_{j=1}^{i-1} x_{\pi_j} \in [Q-x,Q)\right].
	\end{align*}
	Here \emph{pivotal} is always with respect to the threshold $Q$. Since $x_1,\ldots,x_{n-1}$ are independent and identically distributed, $x_{\pi_1},\ldots,x_{\pi_{i-1}}$ are distributed identically to $y_1,\ldots,y_{i-1}$, proving~\eqref{eq:xmax}. Formula~\eqref{eq:xmin} is proved along similar lines.
\end{proof}

\subsection{Estimating the formulas} \label{sec:uniform-estimate}
Recall that our main approach is to use an analogy to renewal processes, in which each of the agent weights can be thought of as a renewal `step' and that furthermore, estimating the expected number of points that land within the interval $[q-w_n,q)$ will be used for proving the formulas for the highest Shapley value (and similarly for the lowest Shapley value).

The first step towards achieving this goal is to extend the sums in Lemma~\ref{lem:uniform-formula} to infinite sums. Estimating these infinite sums will be done using the following lemma, which is relevant to renewal processes with exponentially decaying renewal time distributions.
\begin{proposition}
\label{pro:renewal}
Suppose $Y$ is a continuous distribution supported on $[0,\infty)$ whose density function $f(t)$ is bounded by $Ce^{-\lambda t}$ for some $C,\lambda > 0$.
Furthermore, suppose that $Y$ is $\intervals$-piecewise differentiable-monotone: the support of $Y$ can be partitioned into $\intervals$ many intervals, on each of which $f$ is differentiable and monotone (non-increasing or non-decreasing).
There exist constants $B,\gamma > 0$, depending only on $C,\lambda,\intervals$, such that for $Q \geq 0$,
\[ \sum_{i=1}^\infty \Pr_{y_1,\ldots,y_{i-1} \sim Y} \left[ \sum_{j=1}^{i-1} y_j < Q \right] = \frac{Q}{\Ex Y} + \frac{\Ex(Y^2)}{2(\Ex Y)^2} + \epsilon, \text{ where }|\epsilon| \leq Be^{-\gamma Q}. \]
\end{proposition}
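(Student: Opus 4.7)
My plan is to interpret the left-hand side as the renewal function of $Y$. Setting $S_k = y_1+\cdots+y_k$ with $S_0 = 0$ and using continuity of $Y$, the stated sum equals
\[
 U(Q) := \sum_{k=0}^\infty \Pr[S_k < Q],
\]
the expected number of renewals strictly before time $Q$. The classical elementary renewal theorem already gives $U(Q) = Q/\mu + \Ex[Y^2]/(2\mu^2) + o(1)$, where $\mu = \Ex[Y]$; the job is to upgrade the $o(1)$ to $O(e^{-\gamma Q})$ using the exponential-decay and piecewise differentiable-monotone hypotheses, and to track the dependence of the constants on $C$, $\lambda$, $\intervals$.

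I would do this by Laplace transforms. Let $\hat f(s) = \Ex[e^{-sY}]$; the bound $f(t) \leq Ce^{-\lambda t}$ makes $\hat f$ analytic in the half-plane $\Re s > -\lambda$. Summing the geometric series gives the transform of $U$,
\[
 \hat U(s) = \frac{1}{s(1-\hat f(s))}.
\]
Substituting the expansion $\hat f(s) = 1 - \mu s + \tfrac12 \Ex[Y^2] s^2 + O(s^3)$ shows that $\hat U$ has a double pole at $s=0$ whose principal part $1/(\mu s^2) + \Ex[Y^2]/(2\mu^2 s)$ Laplace-inverts to precisely the two claimed main terms $Q/\mu + \Ex[Y^2]/(2\mu^2)$.

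To expose the exponential remainder I would write $U$ as a Bromwich integral on $\Re s = c > 0$ and shift the contour leftward to $\Re s = -\gamma$ for a small $\gamma \in (0,\lambda)$, picking up the residue of $e^{sQ}\hat U(s)$ at the double pole $s=0$ (which reproduces the main terms exactly) and leaving a remainder that factors as $e^{-\gamma Q}$ times a Fourier-type integral of $\hat U(-\gamma + it)$ against $e^{itQ}$. Validating this shift and bounding the remainder need two analytic inputs. First, a zero-free strip for $1 - \hat f(s)$ in $-\gamma \leq \Re s \leq 0$ apart from the simple zero at $s=0$: this follows from $\hat f'(0) = -\mu \neq 0$, the fact that $|\hat f(it)| < 1$ for $t \neq 0$ (since $Y$, being continuous, is non-lattice), and compactness, combined with the tail decay below. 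Second, decay of $\hat f(-\gamma + it) = O_{C,\lambda,\intervals}(1/|t|)$ as $|t| \to \infty$, obtained by applying the second mean value theorem on each of the $\intervals$ monotone pieces to the twisted density $f(t)e^{\gamma t}$ (which remains piecewise monotone after a finite subdivision and $L^1$ since $\gamma < \lambda$).

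The main obstacle, I expect, is handling the convergence of the shifted-line integral: the estimate $\hat U(-\gamma + it) = O(1/|t|)$ is only conditional, not absolute. The natural remedy is to subtract the leading $-i/t$ tail, whose contribution to the inverse Fourier integral is a bounded multiple of $\operatorname{sign}(Q)$, and to apply the absolute bound to the $O(1/t^2)$ remainder. Once these analytic facts are in hand, the shifted-contour estimate yields
\[
 \Bigl| U(Q) - \frac{Q}{\mu} - \frac{\Ex[Y^2]}{2\mu^2} \Bigr| \leq B e^{-\gamma Q},
\]
with $B$ and $\gamma$ depending only on $C$, $\lambda$, and $\intervals$, matching the statement of the proposition.
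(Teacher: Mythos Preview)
Your overall strategy coincides with the paper's: write the renewal sum as a Bromwich integral of $e^{sQ}/[s(1-\hat f(s))]$, shift the contour from $\Re s = c>0$ to $\Re s = -\gamma$, pick up the residue at the double pole $s=0$ (yielding the two main terms), and bound the shifted integral by $O(e^{-\gamma Q})$. The paper even handles the conditional-convergence issue the way you propose, subtracting off $e^{sQ}/s$ (whose integral along $\Re s=-\gamma$ vanishes for $Q>0$) so that the remaining integrand $e^{sQ}\hat f(s)/[s(1-\hat f(s))]$ is absolutely integrable.

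The substantive gap is in the zero-free strip. The proposition demands that $B$ and $\gamma$ depend \emph{only} on $C,\lambda,\intervals$; your compactness argument (``$|\hat f(it)|<1$ for $t\neq 0$ by non-latticeness, then compactness'') produces a zero-free strip and a lower bound on $|1-\hat f|$ for each individual $f$, but the width of that strip and the lower bound a priori depend on $f$, not just on $C,\lambda,\intervals$. This is where the real work lies. The paper replaces compactness by two explicit estimates valid for every admissible $f$: (i) for $|s|\leq S$ one has $|1-\hat f(s)|\geq \delta|s|$, using the uniform lower bound $\Ex Y\geq 1/(4C)$ together with the moment bounds $\Ex[Y^k]\leq (C/\lambda)\,k!/\lambda^k$; and (ii) for $|\Im s|\geq I$ and $\Re s\geq -R$ one has $\Re\hat f(s)\leq 1-\eta$, obtained by writing $\Re\hat f(-\alpha+i\beta)\leq\int f(u)e^{\alpha u}\,du-\int f(u)(1-\cos\beta u)\,du$ and showing the second integral is bounded below by a constant depending only on $C,\lambda,I$ (the pointwise bound $f\leq Ce^{-\lambda t}$ prevents the mass of $f$ from concentrating near the zeros of $1-\cos\beta u$). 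Without an argument of this type your constants are not uniform.

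A smaller point: the claim that the twisted density $f(t)e^{\gamma t}$ ``remains piecewise monotone after a finite subdivision'' is false in general --- on a single monotone interval of $f$, the sign of $(fe^{\gamma t})'=(f'+\gamma f)e^{\gamma t}$ can change arbitrarily often (take $f=e^{-g}$ with $g'$ smooth, positive, and oscillating across the value $\gamma$). What \emph{does} hold uniformly is that $fe^{\gamma t}$ has total variation $O_{C,\lambda,\gamma,\intervals}(1)$, seen by integrating $|f'|e^{\gamma t}$ by parts on each of the $\intervals$ monotone pieces of $f$; that is enough for your $O(1/|t|)$ decay via a single integration by parts. The paper takes a slightly different route here, splitting the integral at a cutoff $M\asymp\log|s|$ and obtaining only $|\hat f(s)|=O(1/\sqrt{|s|})$, which still suffices to make the shifted integrand $O(|s|^{-3/2})$ and hence absolutely integrable.
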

We conjecture that Proposition~\ref{pro:renewal} can be strengthened by removing the condition that $f$ is piecewise differentiable-monotone. This condition is only used in one place in the proof, Lemma~\ref{lem:riemann-lebesgue}.

This form of the renewal theorem differs from others in the literature in that the error term is uniform over \emph{all} distributions with given decay.
We prove Proposition~\ref{pro:renewal} in Section~\ref{sec:sound}.

In order to utilize this proposition for the estimation the sums in Lemma~\ref{lem:uniform-formula}, we need to restrict the value of $x$ in $X_{\geq x}$ and $X_{\leq x}$. For $m \in (\chi_{\min},\chi_{\max})$, we say that $Y$ is an \emph{$m$-reasonable} random variable if either $Y = X_{\geq x}$ for $x \leq m$ or $Y = X_{\leq x}$ for $x \geq m$. These variables enjoy the following properties.

\begin{lemma} \label{lem:reasonable-props}
Let $m \in (\chi_{\min},\chi_{\max})$, and define $\mu = \min(\Pr[X \leq m], \Pr[X \geq m]) > 0$. Then the density $g$ of every $m$-reasonable random variable $Y$ satisfies $g(t) \leq \frac{C}{\mu} e^{-\lambda t}$, and $\Ex[Y] \leq \frac{C}{\mu\lambda} \cdot \frac{1}{\lambda}$, $\Ex[Y^2] \leq \frac{C}{\mu\lambda} \cdot \frac{2}{\lambda^2}$.
Also, $\Pr[Y \geq t] \leq \frac{C}{\lambda\mu} e^{-\lambda t}$.
\end{lemma}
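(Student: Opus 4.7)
The plan is to unpack the definition of $m$-reasonableness and observe that both cases ($Y = X_{\leq x}$ with $x \geq m$, and $Y = X_{\geq x}$ with $x \leq m$) yield a density that is simply $f$ restricted to an interval and renormalized. The key point is that the renormalizing probability is at least $\mu$: if $Y = X_{\leq x}$ with $x \geq m$, then $\Pr[X \leq x] \geq \Pr[X \leq m] \geq \mu$, and symmetrically if $Y = X_{\geq x}$ with $x \leq m$, then $\Pr[X \geq x] \geq \Pr[X \geq m] \geq \mu$. Hence in either case, for $t$ in the support of $Y$,
\[ g(t) \;=\; \frac{f(t)}{\Pr[X \in \mathrm{supp}(Y)]} \;\leq\; \frac{f(t)}{\mu} \;\leq\; \frac{C}{\mu} e^{-\lambda t}, \]
and for $t$ outside the support we have $g(t)=0$, so the bound $g(t) \leq (C/\mu) e^{-\lambda t}$ holds pointwise on $[0,\infty)$.

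From this pointwise envelope, the remaining three estimates drop out by elementary integration. For the moments, I would simply write
\[ \Ex[Y] \;=\; \int_0^\infty t\, g(t)\, dt \;\leq\; \frac{C}{\mu} \int_0^\infty t\, e^{-\lambda t}\, dt \;=\; \frac{C}{\mu\lambda}\cdot\frac{1}{\lambda}, \]
\[ \Ex[Y^2] \;=\; \int_0^\infty t^2\, g(t)\, dt \;\leq\; \frac{C}{\mu} \int_0^\infty t^2\, e^{-\lambda t}\, dt \;=\; \frac{C}{\mu\lambda}\cdot\frac{2}{\lambda^2}, \]
using the standard moments of the exponential distribution. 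For the tail bound,
\[ \Pr[Y \geq t] \;=\; \int_t^\infty g(s)\, ds \;\leq\; \frac{C}{\mu} \int_t^\infty e^{-\lambda s}\, ds \;=\; \frac{C}{\lambda\mu} e^{-\lambda t}. \]

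There is no real obstacle in this lemma — it is essentially a bookkeeping step that packages the exponential-decay hypothesis on $f$ into a uniform decay statement for the family of conditional densities that will be fed into Proposition~\ref{pro:renewal}. The only point worth flagging is that one must include the case $t > x$ (for $Y = X_{\leq x}$) where $g$ vanishes identically, so that the envelope $(C/\mu)e^{-\lambda t}$ is valid on all of $[0,\infty)$ and not just on $\mathrm{supp}(Y)$; this is what makes the subsequent improper integrals over $[0,\infty)$ legitimate upper bounds.
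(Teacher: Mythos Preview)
Your proof is correct and follows essentially the same approach as the paper: bound the conditional density by $f(t)/\mu \leq (C/\mu)e^{-\lambda t}$ using the fact that the conditioning probability is at least $\mu$, then integrate this pointwise envelope to read off the moment and tail bounds via the exponential distribution's moments. Your explicit remark that $g$ vanishes outside $\mathrm{supp}(Y)$, so the envelope holds on all of $[0,\infty)$, is a nice clarification that the paper leaves implicit.
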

\begin{proof}
When $Y = X_{\geq x}$, we have $g(t) = 0$ for $t < x$ and $g(t) = f(t)/\Pr[X \geq x] \leq f(t)/\mu \leq (C/\mu)e^{-\lambda t}$. A similar calculation shows that $g(t) \leq (C/\mu)e^{-\lambda t}$ when $Y = X_{\leq x}$. The bound on the density, in turn, implies the bound on the moments, using the formulas $\Ex[\Exp(\lambda)] = 1/\lambda$, $\Ex[\Exp(\lambda)^2] = 2/\lambda^2$:
\[ \Ex[Y] = \int_0^\infty g(t) t \, dt \leq \frac{C}{\lambda \mu} \int_0^\infty \lambda e^{-\lambda t} t \, dt = \frac{C}{\lambda \mu} \cdot \frac{1}{\lambda}. \]
The bound on $\Ex[Y^2]$ is obtained similarly. The bound on $\Pr[Y \geq t]$ is obtained using a similar calculation applied to $\Pr[\Exp(\lambda) \geq t] = e^{-\lambda t}$.
\end{proof}

We can now apply Proposition~\ref{pro:renewal}.
\begin{corollary}
	\label{cor:renewal} Let $Y$ be an $m$-reasonable random variable, for some $m \in (\chi_{\min},\chi_{\max})$. Then for some $\gamma < 1$ depending only on $m$ and for all $Q \geq 0$,
	\[ \sum_{i=1}^\infty \Pr_{y_1,\ldots,y_{i-1} \sim Y} \left[ \sum_{j=1}^{i-1} y_j < Q \right] = \frac{Q}{\Ex Y} + \frac{\Ex(Y^2)}{2(\Ex Y)^2} \pm O(\gamma^Q). \]
	In particular, for all $x \geq 0$ and $Q \geq x$,
	\[ \sum_{i=1}^\infty \Pr_{y_1,\ldots,y_{i-1} \sim Y} \left[ \sum_{j=1}^{i-1} y_j \in [Q-x,Q) \right] = \frac{x}{\Ex Y} \pm O(\gamma^{Q-x}). \]
\end{corollary}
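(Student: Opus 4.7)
The plan is to deduce the corollary directly from Proposition~\ref{pro:renewal}, after checking that the hypotheses of that proposition hold uniformly over all $m$-reasonable $Y$ with constants depending only on $m$ (and on the underlying $X$). First I would record that $X$ is non-negative, so both $X_{\leq x}$ and $X_{\geq x}$ are supported on $[0,\infty)$, as required. The exponential decay part is already in Lemma~\ref{lem:reasonable-props}: the density $g$ of any $m$-reasonable $Y$ satisfies $g(t) \leq (C/\mu)\,e^{-\lambda t}$ with $\mu = \min(\Pr[X \leq m], \Pr[X \geq m]) > 0$, so the decay constant $C/\mu$ is uniform once $m$ is fixed.

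Next I would verify the piecewise differentiable-monotone structure. By hypothesis, $f$ is differentiable and monotone on each of $\intervals$ intervals. The density of $X_{\leq x}$ is a positive multiple of $f$ on $[0,x]$ and zero on $(x,\infty)$; similarly for $X_{\geq x}$. Splitting at the single new breakpoint $x$ refines the original partition into at most $\intervals + 1$ intervals, on each of which the density of $Y$ is either identically zero or a positive rescaling of $f$, and so is differentiable and monotone. Hence every $m$-reasonable $Y$ is $(\intervals+1)$-piecewise differentiable-monotone. Proposition~\ref{pro:renewal} now applies, producing constants $B,\gamma$ that depend only on $C/\mu$, $\lambda$, and $\intervals+1$, hence only on $m$. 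This yields the first formula, with $O(\gamma^Q)$ absorbing the constant $B$.

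For the second formula, I would apply the first to $Q$ and to $Q - x$ and subtract. Writing $S_{i-1} = \sum_{j=1}^{i-1} y_j$, we have
\[ \sum_{i=1}^\infty \Pr[S_{i-1} \in [Q-x,Q)] = \sum_{i=1}^\infty \Pr[S_{i-1} < Q] - \sum_{i=1}^\infty \Pr[S_{i-1} < Q-x]. \]
The constant terms $\Ex(Y^2)/(2(\Ex Y)^2)$ cancel, leaving $Q/\Ex Y - (Q-x)/\Ex Y = x/\Ex Y$. The two error terms combine into $O(\gamma^Q) + O(\gamma^{Q-x}) = O(\gamma^{Q-x})$, since $0 < \gamma < 1$ and $Q \geq Q - x \geq 0$.

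There is essentially no obstacle here: all the analytic work is in Proposition~\ref{pro:renewal}. The only things one must be careful about are the uniformity of the decay constant (handled by Lemma~\ref{lem:reasonable-props} via the fixed parameter $\mu$), the trivial bookkeeping showing that conditioning on $\{X \leq x\}$ or $\{X \geq x\}$ adds at most one breakpoint to the monotonicity partition, and the observation that $\gamma^Q \leq \gamma^{Q-x}$ so that the larger error term dominates after subtraction.
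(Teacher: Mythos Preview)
Your argument is correct and matches the paper's own proof almost verbatim: both invoke Proposition~\ref{pro:renewal} after using Lemma~\ref{lem:reasonable-props} for the uniform exponential-decay bound and observing that conditioning preserves the piecewise differentiable-monotone structure, then subtract the first estimate at $Q$ and $Q-x$ to get the second. The only cosmetic difference is that the paper notes the support of $Y$ is still covered by at most $\intervals$ intervals (intersecting with $[\chi_{\min},x]$ or $[x,\chi_{\max}]$ only truncates one of the original intervals), whereas you count $\intervals+1$ by including the identically-zero region; either bound suffices.
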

\begin{proof}
The first statement of the corollary follows directly from Proposition~\ref{pro:renewal}, given Lemma~\ref{lem:reasonable-props}; note that any $m$-reasonable variable is $\intervals$-piecewise differentiable-monotone. The second statement follows by applying the first statement to $Q$ and to $Q-x$, and subtracting the two estimates.
\end{proof}

The corollary affords us with a good estimate of the sums in Lemma~\ref{lem:uniform-formula}, when extended from $n$ to $\infty$. In order to estimate the actual sums, we estimate the tail from $n+1$ to $\infty$.
\begin{lemma}
	\label{lem:uniform-tail} Let $Y$ be an $m$-reasonable random variable, for some $m \in (\chi_{\min},\chi_{\max})$. For some $\delta < 1$ depending only on $m$ and for all $Q \leq (n-n^{2/3}) \Ex Y$,
	\[ \sum_{i=n+1}^\infty \Pr_{y_1,\ldots,y_{i-1} \sim Y} \left[ \sum_{j=1}^{i-1} y_j < Q \right] = O(\delta^{n^{1/4}}), \]
	where the constant in $O(\cdot)$ depends only on $m$.
\end{lemma}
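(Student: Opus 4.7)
The plan is to control each summand by a Chernoff-type bound on the lower tail of $S_{i-1} = y_1 + \cdots + y_{i-1}$. For any $t \geq 0$, Markov's inequality applied to $e^{-tS_{i-1}}$ gives
\[
\Pr_{y_1,\ldots,y_{i-1} \sim Y}\!\left[ S_{i-1} < Q \right] \leq e^{tQ}\, M(t)^{i-1}, \qquad M(t) := \Ex_{y\sim Y}[e^{-ty}].
\]
Whenever $M(t) < 1$, summing the resulting geometric series over $i \geq n+1$ yields
\[
\sum_{i=n+1}^\infty \Pr[S_{i-1} < Q] \leq \frac{e^{tQ} M(t)^n}{1 - M(t)},
\]
so the task reduces to choosing $t$ so that this bound is $O(\delta^{n^{1/4}})$.

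To estimate $M(t)$, I would Taylor-expand: since $M'(0)=-\Ex Y$ and $M''(s)=\Ex[Y^2 e^{-sY}]\leq \Ex[Y^2]$ uniformly in $s\geq 0$, we have $M(t)\leq 1-t\Ex Y + \tfrac{t^2}{2}\Ex[Y^2]$ for all $t\geq 0$, hence $\log M(t) \leq -t\Ex Y + \tfrac{t^2}{2}\Ex[Y^2]$. Combined with the hypothesis $Q \leq (n-n^{2/3})\Ex Y$, this gives
\[
tQ + n\log M(t) \;\leq\; -t\,n^{2/3}\Ex Y + \tfrac{nt^2}{2}\Ex[Y^2].
\]
Setting $t^{\ast} = \Ex Y / (\Ex[Y^2]\, n^{1/3})$ minimizes the right-hand side, giving value $-n^{1/3}(\Ex Y)^2/(2\Ex[Y^2])$. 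At the same $t^{\ast}$ one checks that $1 - M(t^{\ast}) = \Theta(n^{-1/3})$, so the denominator $1-M(t^{\ast})$ contributes only a polynomial factor in $n$.

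The main subtlety, and the only place real care is needed, is to make sure the resulting exponent is bounded uniformly over all $m$-reasonable $Y$: we need $\Ex[Y^2]\leq K_m$ (already supplied by Lemma~\ref{lem:reasonable-props}) and a uniform lower bound $\Ex Y \geq \mu_m > 0$. For the latter I would observe that both $x\mapsto \Ex[X_{\geq x}]$ and $x\mapsto \Ex[X_{\leq x}]$ are non-decreasing in $x$, so in the case $Y=X_{\geq x}$ with $x\leq m$ we have $\Ex Y \geq \Ex X > 0$, and in the case $Y=X_{\leq x}$ with $x\geq m$ we have $\Ex Y \geq \Ex[X_{\leq m}] > 0$ (positive because $m>\chi_{\min}$). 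Putting everything together, the sum is bounded by $O(n^{1/3}\exp(-c_m\, n^{1/3}))$, which is comfortably $O(\delta^{n^{1/4}})$ for any $\delta<1$ and all large $n$, with constants adjusted to cover small $n$. Apart from the uniformity check, this is a direct application of the standard Cramér/Chernoff machinery; the only thing that might look surprising is the $n^{-1/3}$ scaling of $t^{\ast}$, which is forced by balancing the linear gain $tn^{2/3}\Ex Y$ against the quadratic cost $nt^2\Ex[Y^2]$.
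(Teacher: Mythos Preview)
Your argument is correct and takes a genuinely different---and cleaner---route from the paper's proof. The paper proceeds by truncating $Y$ to $Z=\min(Y,M)$ with $M=K\log n$ and then applying Bernstein's inequality to the bounded variable $Z$; the truncation costs a $\log^2 n$ factor in the exponent, giving a bound of the form $n\exp\bigl(-\Omega(n^{1/3}/\log^2 n)\bigr)$ before absorption into $O(\delta^{n^{1/4}})$. You sidestep truncation entirely by working directly with the Laplace transform $M(t)=\Ex[e^{-tY}]$ on the lower-tail side, where no integrability issue arises (since $e^{-tY}\leq 1$ for $t\geq 0$) and the bound $M''(s)\leq\Ex[Y^2]$ is immediate. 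This yields a cleaner estimate $O(n^{1/3}\exp(-c_m n^{1/3}))$ with no logarithmic loss, and the proof is shorter. The paper's approach would be more natural for two-sided deviations or when the MGF on the positive side is unbounded; here the one-sidedness of the lower tail makes your Cram\'er route the right one. Your uniformity check, lower-bounding $\Ex Y$ via the monotonicity of $x\mapsto\Ex[X_{\leq x}]$ and $x\mapsto\Ex[X_{\geq x}]$, is exactly what is needed and is in fact more explicit than the paper's, which simply invokes $\Ex Y\geq\Ex[X_{\leq m}]$ at the end. One cosmetic remark: for the denominator you only need $1-M(t^{\ast})=\Omega(n^{-1/3})$, not the full $\Theta$; the matching upper bound does hold (via $e^{-x}\geq 1-x$), but it is not used.
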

\begin{proof}
The idea of the proof is to apply Bernstein's inequality to show that it is highly improbable that the sum of $n$ variables (or more) distributed according to $Y$ be significantly smaller than $n\Ex[Y]$. One complication is that Bernstein's inequality only applies to bounded random variables, whereas $Y$ could be unbounded. In order to fix this, we choose a cut-off $M$ and consider the random variable $Z = \min(Y,M)$ instead. Note that
\begin{gather*}
\Ex[Y] - \Ex[Z] = \int_M^\infty g(t) (t-M) \, dt \leq \frac{C}{\lambda\mu} \int_M^\infty \lambda e^{-\lambda t} (t-M) \, dt \\
= \frac{C}{\lambda\mu} \lambda e^{-\lambda M} \int_0^\infty \lambda e^{-\lambda t} t \, dt = \frac{C}{\lambda^2\mu} e^{-\lambda M} = O(e^{-\lambda M}).
\end{gather*}
Therefore, if we choose $M = K\log n$ for some constant $K$ depending only on $X$ and $m$ then we can ensure that $\Ex[Y] - \Ex[Z] \leq (1/(2n^{1/3})) \Ex[X_{\leq m}]$ and so $\Ex[Y] - \Ex[Z] \leq (1/(2n^{1/3})) \Ex[Y]$.

For the rest of this proof, let $y_1,y_2,\ldots \sim Y$ be independent copies of $Y$, and let $z_1,z_2,\ldots \sim Z$ be independent copies of $Z$.
Bernstein's inequality implies that when $Q \leq (n+k) \Ex[Z]$ (which always holds, as we show below),
\begin{align*}
\Pr\left[ \sum_{j=1}^{n+k} y_j < Q \right] &\leq \Pr\left[ \sum_{j=1}^{n+k} z_j < Q \right] \\ &\leq
\exp -\frac{((n+k) \Ex[Z] - Q)^2/2}{(n+k)\var[Z] + ((n+k) \Ex[Z] - Q) (M/3)} \\ &\leq
\exp -\frac{((n+k) \Ex[Z] - Q)^2/2}{(n+k)(\var[Z] + (M/3)\Ex[Z])} \\ &\leq
\exp -\frac{((n+k) \Ex[Z] - Q)^2}{(n+k)(3M^2)}.
\end{align*}
Note that
\[
n\Ex[Z] - Q \geq \left(n-\frac{n^{2/3}}{2}\right)\Ex[Y] - Q \geq \frac{1}{2} n^{2/3} \Ex[Y],
\]
and therefore, using $\Ex[Z] \geq \Ex[Y]/2$,
\[
((n+k) \Ex[Z] - Q)^2 \geq (k\Ex[Z] + \tfrac{1}{2} n^{2/3} \Ex[Y])^2 \geq \tfrac{1}{4} \Ex[Y]^2 (k+n^{2/3})^2.
\]
When $k \leq n$, we have
\[
\exp -\frac{((n+k) \Ex[Z] - Q)^2}{(n+k)(3M^2)} \leq
\exp -\frac{\tfrac{1}{4} \Ex[Y]^2 n^{4/3}}{(2n)(3M^2)} =
\exp -\frac{\Ex[Y]^2 n^{1/3}}{24M^2}.
\]
When $k \geq n$, we have
\[
\exp -\frac{((n+k) \Ex[Z] - Q)^2}{(n+k)(3M^2)} \leq
\exp -\frac{\tfrac{1}{4} \Ex[Y]^2 k^2}{(2k)(3M^2)} =
\exp -\frac{\Ex[Y]^2 k}{24M^2}.
\]
Therefore
\begin{align*}
\sum_{k=0}^\infty \Pr\left[ \sum_{j=1}^{n+k} y_j < Q \right] &\leq
n \exp -\frac{\Ex[Y]^2 n^{1/3}}{24M^2} + \sum_{k=n}^\infty \exp -\frac{\Ex[Y]^2 k}{24M^2} \\ &\leq
n \exp -\frac{\Ex[Y]^2 n^{1/3}}{24M^2} + \frac{\exp -\frac{\Ex[Y]^2 n}{24M^2}}{1-\exp -\frac{\Ex[Y]^2}{24M^2}}.
\end{align*}
Since $\Ex[Y] \geq \Ex[X_{\leq m}]$ and $M = O(\log n)$, we can conclude that
\[
\sum_{k=0}^\infty \Pr\left[ \sum_{j=1}^{n+k} y_j < Q \right] \leq
n e^{-\Omega\left(\frac{n^{1/3}}{\log^2 n}\right)} + O\left(\log^2 n e^{-\Omega\left(\frac{n}{\log^2 n}\right)}\right),
\]
implying the lemma.
\end{proof}

Combining this with Corollary~\ref{cor:renewal}, we obtain the following estimate.
\begin{corollary}
	\label{cor:uniform-estimate} Let $Y$ be an $m$-reasonable random variable, for some $m \in (\chi_{\min},\chi_{\max})$. Then for some $\zeta < 1$ depending only on $m$, for all $x \geq 0$ and for all $Q \in [x,(n-n^{2/3})\Ex [Y]]$,
	\[ \sum_{i=1}^n \Pr_{y_1,\ldots,y_{n-1} \sim Y} \left[ \sum_{j=1}^{i-1} y_j \in [Q-x,Q) \right] = \frac{x}{\Ex [Y]} \pm O(\zeta^{n^{1/4}} + \zeta^{Q-x}). \]
\end{corollary}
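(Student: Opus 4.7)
The plan is to deduce the corollary by splitting the infinite renewal sum of Corollary~\ref{cor:renewal} at index $n$ and estimating the tail via Lemma~\ref{lem:uniform-tail}. Concretely, with $y_1,y_2,\ldots \sim Y$ understood throughout, I would begin from the identity
\[
 \sum_{i=1}^{n} \Pr\left[ \sum_{j=1}^{i-1} y_j \in [Q-x,Q) \right] = \sum_{i=1}^{\infty} \Pr\left[ \sum_{j=1}^{i-1} y_j \in [Q-x,Q) \right] - \sum_{i=n+1}^{\infty} \Pr\left[ \sum_{j=1}^{i-1} y_j \in [Q-x,Q) \right]
\]
and bound the two pieces on the right-hand side separately.

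For the infinite sum, the hypothesis $Q \in [x,(n-n^{2/3})\Ex Y]$ in particular gives $Q \geq x \geq 0$, so the second formula of Corollary~\ref{cor:renewal} applies directly and yields
\[
 \sum_{i=1}^{\infty} \Pr\left[ \sum_{j=1}^{i-1} y_j \in [Q-x,Q) \right] = \frac{x}{\Ex Y} \pm O(\gamma^{Q-x})
\]
for some constant $\gamma < 1$ depending only on $m$. For the tail starting at $n+1$, I would use the trivial monotonicity
\[
 \Pr\left[ \sum_{j=1}^{i-1} y_j \in [Q-x,Q) \right] \leq \Pr\left[ \sum_{j=1}^{i-1} y_j < Q \right]
\]
to drop the lower endpoint, and then invoke Lemma~\ref{lem:uniform-tail} (whose hypothesis $Q \leq (n-n^{2/3})\Ex Y$ is exactly what the corollary assumes) to conclude
\[
 \sum_{i=n+1}^{\infty} \Pr\left[ \sum_{j=1}^{i-1} y_j \in [Q-x,Q) \right] \leq \sum_{i=n+1}^{\infty} \Pr\left[ \sum_{j=1}^{i-1} y_j < Q \right] = O(\delta^{n^{1/4}})
\]
for some $\delta < 1$ depending only on $m$.

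Setting $\zeta = \max(\gamma,\delta) < 1$ and combining the two estimates yields exactly the stated bound $\frac{x}{\Ex Y} \pm O(\zeta^{n^{1/4}} + \zeta^{Q-x})$. There is no substantive obstacle here: the corollary is essentially a bookkeeping combination of Corollary~\ref{cor:renewal} and Lemma~\ref{lem:uniform-tail}, and the only real care needed is to verify that $m$-reasonableness of $Y$ together with the assumed range of $Q$ make both invocations simultaneously valid, which is immediate from the hypotheses.
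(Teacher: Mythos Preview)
Your proposal is correct and matches the paper's proof essentially line for line: bound the tail $\sum_{i>n}$ by dropping the lower endpoint and applying Lemma~\ref{lem:uniform-tail}, apply Corollary~\ref{cor:renewal} to the full infinite sum, and take $\zeta = \max(\gamma,\delta)$. There is nothing to add.
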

\begin{proof}
	Clearly
	\[ \sum_{i=n+1}^\infty \Pr_{y_1,\ldots,y_{i-1} \sim Y} \left[ \sum_{j=1}^{i-1} y_j \in [Q-x,Q) \right] \leq \sum_{i=n+1}^\infty \Pr_{y_1,\ldots,y_{i-1} \sim Y} \left[ \sum_{j=1}^{i-1} y_j < Q \right] = O(\delta^{n^{1/4}}), \]
	using Lemma~\ref{lem:uniform-tail}. Therefore Corollary~\ref{cor:renewal} implies that
	\[ \sum_{i=1}^n \Pr_{y_1,\ldots,y_{n-1} \sim Y} \left[ \sum_{j=1}^{i-1} y_j \in [Q-x,Q) \right] = \frac{x}{\Ex [Y]} \pm O(\delta^{n^{1/4}} + \gamma^{Q-x}). \]
	The corollary follows by taking $\zeta = \max(\delta,\gamma)$.
\end{proof}

Using this estimate, we can estimate the sums in Lemma~\ref{lem:uniform-formula}. The idea is to focus on the case in which the variable $X_{\leq x}$ or $X_{\geq x}$ is $m$-reasonable.

\begin{lemma} \label{lem:uniform-estimate}
 	Let $m \in (\chi_{\min},\chi_{\max})$. For some $\xi < 1$ depending on $m$ and for all $Q \in [n^{1/4},(n-n^{2/3})\Ex [X_{\leq m}]]$,
	\[ \Ex[\shapleyx_{\max}(Q)] = \frac{1}{n} \Ex_{x \sim (X^n_{\max})_{\geq m}} \big[\frac{x}{\Ex [X_{\leq x}]}\big] \pm O(\xi^{n^{1/4}}). \]
	Similarly, for all $Q \in [n^{1/4}, (n-n^{2/3})\Ex [X]]$,
	\[ \Ex[\shapleyx_{\min}(Q)] = \frac{1}{n} \Ex_{x \sim X^n_{\min}} \big[\frac{x}{\Ex [X_{\geq x}]}\big] \pm O(\xi^{n^{1/4}}). \]
\end{lemma}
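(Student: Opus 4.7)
The plan is to apply Corollary~\ref{cor:uniform-estimate} pointwise in $x$ to the inner sums appearing in Lemma~\ref{lem:uniform-formula}, and then integrate against the distribution of $X^n_{\max}$ or $X^n_{\min}$. Since Corollary~\ref{cor:uniform-estimate} requires that the conditional variable ($X_{\leq x}$ or $X_{\geq x}$) be $m$-reasonable and that $Q \in [x,(n-n^{2/3})\Ex[Y]]$, I partition the $x$-axis: for the $\shapleyx_{\max}$ bound into $x<m$, $m\leq x\leq Q$, and $x>Q$; for the $\shapleyx_{\min}$ bound into $x\leq m$ and $x>m$.

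In the principal region $x\in[m,Q]$ for $\shapleyx_{\max}$, the variable $X_{\leq x}$ is $m$-reasonable, and since $\Ex[X_{\leq x}]$ is non-decreasing in $x$, the hypothesis $Q\leq(n-n^{2/3})\Ex[X_{\leq m}]$ upgrades to $Q\leq(n-n^{2/3})\Ex[X_{\leq x}]$. Corollary~\ref{cor:uniform-estimate} then yields the desired main term $\frac{1}{n}\cdot\frac{x}{\Ex[X_{\leq x}]}$ with an error of $\frac{1}{n}\cdot O(\zeta^{n^{1/4}}+\zeta^{Q-x})$. The $\zeta^{n^{1/4}}$ piece integrates trivially. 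For the more delicate $\zeta^{Q-x}$ piece I split $[m,Q]$ further into $[m,Q/2]$ (where $\zeta^{Q-x}\leq\zeta^{Q/2}\leq\zeta^{n^{1/4}/2}$) and $[Q/2,Q]$ (where $\zeta^{Q-x}\leq 1$ but the mass of $X^n_{\max}$ is controlled by $\Pr[X^n_{\max}\geq Q/2]\leq n(C/\lambda)e^{-\lambda Q/2}$, via the union bound and the exponential decay of $f$). The tail contributions $x<m$ and $x>Q$ are bounded by their indicator masses alone: $\Pr[X^n_{\max}<m]=F(m)^n$ is geometrically small since $m<\chi_{\max}$, and $\Pr[X^n_{\max}>Q]\leq n(C/\lambda)e^{-\lambda Q}=O(\xi^{n^{1/4}})$ for $Q\geq n^{1/4}$.

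These estimates give $\Ex[\shapleyx_{\max}(Q)]=\frac{1}{n}\Ex_{x\sim X^n_{\max}}\big[\frac{x}{\Ex[X_{\leq x}]}\mathbf{1}[m\leq x\leq Q]\big]\pm O(\xi^{n^{1/4}})$, which I reconcile with the target $\frac{1}{n}\Ex_{x\sim(X^n_{\max})_{\geq m}}\big[\frac{x}{\Ex[X_{\leq x}]}\big]$ via two corrections: the normalization factor $1/(1-F(m)^n)=1+O(F(m)^n)$, multiplying a quantity bounded by $\Ex[X^n_{\max}]/\Ex[X_{\leq m}]=O(\log n)$ (since exponential decay of $f$ forces $\Ex[X^n_{\max}]=O(\log n)$); and the missing $\mathbf{1}[x>Q]$ piece, bounded via $\frac{x}{\Ex[X_{\leq x}]}\leq x/\Ex[X_{\leq m}]$ together with $\Ex[X^n_{\max}\mathbf{1}[X^n_{\max}>Q]]=O(Q e^{-\lambda Q})$. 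Both corrections are $O(\xi^{n^{1/4}})$.

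The $\shapleyx_{\min}$ case is parallel but cleaner: no conditioning appears in the target, because $\Pr[X^n_{\min}>m]=(1-F(m))^n$ is already geometrically small (using $F(m)>0$ since $m>\chi_{\min}$). On the main region $x\in[\chi_{\min},m]$, $X_{\geq x}$ is $m$-reasonable, the monotonicity $\Ex[X_{\geq x}]\geq\Ex[X]$ combined with $Q\leq(n-n^{2/3})\Ex[X]$ verifies the hypothesis on $Q$, and the $\zeta^{Q-x}$ error is immediate because $x\leq m$ is bounded while $Q\geq n^{1/4}$ forces $Q-x=\Omega(n^{1/4})$. I expect the main obstacle to be the $\zeta^{Q-x}$ error in the $\shapleyx_{\max}$ case, where this error is not uniformly small in $x$; the argument must leverage the fact that $X^n_{\max}$ has typical size $O(\log n)$, so the density near $Q\geq n^{1/4}$ is already tiny.
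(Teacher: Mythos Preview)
Your proposal is correct and takes essentially the same approach as the paper: both arguments partition on $x$, apply Corollary~\ref{cor:uniform-estimate} on the bulk where the conditional variable is $m$-reasonable, and tame the $\zeta^{Q-x}$ error by splitting at $Q/2$ (the paper via an explicit truncation parameter $M=Q/2$, you by splitting the error integral directly). One minor bookkeeping slip: the tail estimate should read $\Ex[X^n_{\max}\mathbf{1}[X^n_{\max}>Q]]=O(nQe^{-\lambda Q})$ rather than $O(Qe^{-\lambda Q})$ (the union bound contributes a factor of $n$), but after the $1/n$ prefactor this does not affect the conclusion.
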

\begin{proof}
We start with the first formula. Let $\mu_{\leq} = \Pr[X \leq m]$ and $\mu_{\geq} = \Pr[X \geq m] = 1 - \mu_{\leq}$. Clearly $\Pr[X^n_{\max} \leq m] = \mu_{\leq}^n$. Using Lemma~\ref{lem:uniform-formula}, we get for any $M > m$ that
\[ \Ex[\shapleyx_{\max}(Q)] = \Ex_{x \sim (X^n_{\max})_{\geq m, \leq M}}\Big[ \frac{1}{n} \sum_{i=1}^n \Pr_{y_1,\ldots,y_{n-1} \sim X_{\leq x}} \left[ \sum_{j=1}^{i-1} y_j \in [Q-x,Q) \right] \Big] \pm O(\mu_{\leq}^n+\Pr[X^n_{\max}>M]]). \]
Corollary~\ref{cor:uniform-estimate} implies that for all $Q \in [M, (n-n^{2/3})\Ex [X_{\leq m}]]$,
\[ \Ex[\shapleyx_{\max}(Q)] = \frac{1}{n} \Ex_{x \sim (X^n_{\max})_{\geq m, \leq M}}\big[ \frac{x}{\Ex [X_{\leq x}]}\big] \pm O(\zeta^{n^{1/4}} + \zeta^{Q-M} + \mu_{\leq}^n + \Pr[X^n_{\max}>M]). \]
(We need the restriction $x \leq M$ so that we can bound the term $\zeta^{Q-M}$.)
We proceed to estimate the main term, aiming to remove the restriction $x \leq M$. Define $\phi(x) = x/\Ex[X_{\leq x}]$, and note that $\phi(x) \leq x/\Ex[X_{\leq m}]$ when $x \geq m$. We have
\begin{align*}
\Ex_{x \sim (X^n_{\max})_{\geq m}}[\phi(x)] &=
\Pr[(X^n_{\max})_{\geq m} > M] \Ex_{x \sim (X^n_{\max})_{> M}}[\phi(x)] \\ &+
(1-\Pr[(X^n_{\max})_{\geq m} > M]) \Ex_{x \sim (X^n_{\max})_{\geq m,\leq M}}[\phi(x)].
\end{align*}
Note that $\Pr[(X^n_{\max})_{\geq m} > M]$ is at most the probability that the maximum of $n$ variables distributed $X_{\geq m}$ is more than $M$. Lemma~\ref{lem:reasonable-props} and a union bound show that this probability is at most $n \frac{C}{\lambda\mu\Pr[X\geq m]} e^{-\lambda M} = O(ne^{-\lambda M})$. Together with $\phi(x) \leq M/\Ex[X_{\leq m}]$ whenever $m \leq x \leq M$, this shows that
\[
\Ex_{x \sim (X^n_{\max})_{\geq m}}[\phi(x)] = \Ex_{x \sim (X^n_{\max})_{\geq m,\leq M}}[\phi(x)] +
\Pr[(X^n_{\max})_{\geq m} > M] \Ex_{x \sim (X^n_{\max})_{> M}}[\phi(x)] - O(nMe^{-\lambda M}).
\]
Suppose that for some integer $K>0$, $M \geq K\lambda^{-1} \log n$. Lemma~\ref{lem:reasonable-props} shows that $\Pr[X > t] \leq \frac{C}{\lambda\mu} e^{-\lambda t}$, and so
\begin{align*}
&\hphantom{=} \Pr[(X^n_{\max})_{\geq m} > M] \Ex_{x \sim (X^n_{\max})_{>M}}[\phi(x)] \\ &\leq
\frac{1}{\Ex[X_{\leq m}]} \Pr[(X^n_{\max})_{\geq m} > M] \Ex_{x \sim (X^n_{\max})_{>M}}[x] \\ &\leq
\frac{1}{\Ex[X_{\leq m}]} \Pr[(X^n_{\max})_{\geq m} > K\lambda^{-1} \log n] \Ex_{x \sim (X^n_{\max})_{>K\lambda^{-1} \log n}}[x] \\ &\leq
\frac{1}{\Ex[X_{\leq m}]} \sum_{r=K}^\infty \lambda^{-1} (r+1) \log n \Pr[\lambda^{-1} r\log n \leq X^n_{\max} \leq \lambda^{-1} (r+1)\log n] \\ &\leq
\frac{C}{\lambda \mu \Ex[X_{\leq m}]} \sum_{r=K}^\infty \lambda^{-1} (r+1) \log n e^{-r\log n} \\ &=
\frac{O(C)}{\lambda \mu \Ex[X_{\leq m}]} \lambda^{-1} \log n \frac{K}{n^K}.
\end{align*}
If $M \geq 2\lambda^{-1} \log n$ then we can choose $K = \lfloor \lambda M/\log n \rfloor \geq \lambda M/(2\log n)$ and so $n^K \geq e^{(\lambda/2)M}$ and $K/n^K \leq \lambda M e^{-(\lambda/2)M}$. Therefore
\[
 \Ex_{x \sim (X^n_{\max})_{\geq m}}[\phi(x)] = \Ex_{x \sim (X^n_{\max})_{\geq m,\leq M}}[\phi(x)] \pm O(nMe^{-(\lambda/2)M}).
\]
Lemma~\ref{lem:reasonable-props} and a union bound show that $\Pr[X^n_{\max}>M] = O(ne^{-\lambda M})$, and we deduce that for $M \geq 2\lambda^{-1} \log n$,
\begin{align*}
\Ex[\shapleyx_{\max}(Q)] &= \frac{1}{n} \Ex_{x \sim (X^n_{\max})_{\geq m, \leq M}}\big[ \frac{x}{\Ex [X_{\leq x}]}\big] \pm O(\zeta^{n^{1/4}} + \zeta^{Q-M} + \mu_{\leq}^n + ne^{-\lambda M}) \\ &=
\frac{1}{n} \Ex_{x \sim (X^n_{\max})_{\geq m}}\big[ \frac{x}{\Ex [X_{\leq x}]}\big] \pm O(\zeta^{n^{1/4}} + \zeta^{Q-M} + \mu_{\leq}^n + Me^{-(\lambda/2) M}).
\end{align*}
Choosing $M = Q/2$, we deduce that
\[ \Ex[\shapleyx_{\max}(Q)] = \frac{1}{n} \Ex_{x \sim (X^n_{\max})_{\geq m}}\big[ \frac{x}{\Ex [X_{\leq x}]}\big] \pm O(\zeta^{n^{1/4}} + \zeta^{Q/2} + \mu_{\leq}^n + \log n e^{-(\lambda/4)Q}). \]
This implies the formula in the statement of the lemma, with $\xi > \max(\sqrt{\zeta}, e^{-\lambda/4}, \mu_{\leq}, \mu_{\geq})$ (we need $\mu_{\geq}$ for the other part of the lemma).

We continue with the second formula. As before, we have
\[ \Ex[\shapleyx_{\min}(Q)] = \Ex_{x \sim (X^n_{\min})_{\leq m}}\Big[ \frac{1}{n} \sum_{i=1}^n \Pr_{y_1,\ldots,y_{n-1} \sim X_{\geq x}} \left[ \sum_{j=1}^{i-1} y_j \in [Q-x,Q) \right] \Big]\pm O(\mu_{\geq}^n). \]
Corollary~\ref{cor:uniform-estimate} implies that for all $Q \in [m, (n-n^{2/3})\Ex [X]]$,
\begin{equation} \label{eq:shap_min}
		\shapleyx_{\min}(Q) = \frac{1}{n} \Ex_{x \sim (X^n_{\min})_{\leq m}}\big[ \frac{x}{\Ex [X_{\geq x}]}\big] \pm O(\zeta^{n^{1/4}} + \zeta^Q + \mu_{\geq}^n).
\end{equation}
Now, as by definition $\Pr[X^n_{\min}\geq m]=\mu^n_{\geq}$, we have that
\begin{align}
  \Ex_{x \sim X^n_{\min}} \big[\frac{x}{\Ex [ X_{\geq x}]} \big] &= \mu^n_{\geq} \cdot \Ex_{x \sim (X^n_{\min})_{\geq m}} \big[\frac{x}{\Ex [X_{\geq x}]}\big] + (1-\mu^n_{\geq}) \cdot \Ex_{x \sim (X^n_{\min})_{\leq m}}\big[ \frac{x}{\Ex [X_{\geq x}]}\big] \nonumber \\ &=
  \Ex_{x \sim (X^n_{\min})_{\leq m}} \frac{x}{\Ex [X_{\geq x}]} + \mu^n_{\geq} \cdot \big( \Ex_{x \sim (X^n_{\min})_{\geq m}} \big[\frac{x}{\Ex [X_{\geq x}]}\big] - \Ex_{x \sim (X^n_{\min})_{\leq m}}\big[ \frac{x}{\Ex [X_{\geq x}]}\big] \big) \nonumber \\ &=
  \Ex_{x \sim (X^n_{\min})_{\leq m}} \frac{x}{\Ex [X_{\geq x}]} + O(\mu^n_{\geq} ),
\end{align}
where the last line follows from the fact that $\frac{x}{\Ex X_{\geq x}} \leq 1$.
Combining this with~\eqref{eq:shap_min} gives the second formula in the statement of the lemma.
\end{proof}

We can now prove out main result.

\thmiid*
\begin{proof}
The formula for $\Ex[\shapleyx_{\min}(Q)]$ is already stated in Lemma~\ref{lem:uniform-estimate}, so we only prove the formula for $\Ex[\shapleyx_{\max}(Q)]$.
Given $\epsilon > 0$, we choose $m$ large enough so that $\Ex[X_{\leq m}] > (1-\epsilon) \Ex[X]$. For large enough $n$, the condition $Q \leq (1-\epsilon) n\Ex[X]$ implies the condition $Q \leq (n-n^{2/3})\Ex[X]$.

Since $x = O(\Ex[X_{\leq x}])$ for $x$ near $\chi_{\min}$, the expectation $\Ex_{x \sim (X^n_{\max})_{\leq m}} [\frac{x}{\Ex [ X_{\leq x}]}]$ converges.
Therefore we similarly have
\begin{align}
  \Ex_{x \sim X^n_{\max}} \big[\frac{x}{\Ex [ X_{\leq x}]} \big] &= \mu^n_{\leq} \cdot \Ex_{x \sim (X^n_{\max})_{\leq m}} \big[\frac{x}{\Ex [X_{\leq x}]}\big] + (1-\mu^n_{\leq}) \cdot \Ex_{x \sim (X^n_{\max})_{\geq m}}\big[ \frac{x}{\Ex [X_{\leq x}]}\big] \nonumber \\ &=
  \Ex_{x \sim (X^n_{\max})_{\geq m}} \frac{x}{\Ex [X_{\leq x}]} + \mu^n_{\geq} \cdot \big( \Ex_{x \sim (X^n_{\max})_{\leq m}} \big[\frac{x}{\Ex [X_{\leq x}]}\big] - \Ex_{x \sim (X^n_{\max})_{\geq m}}\big[ \frac{x}{\Ex [X_{\leq x}]}\big] \big) \nonumber \\ &=
  \Ex_{x \sim (X^n_{\max})_{\geq m}} \frac{x}{\Ex [X_{\leq x}]} + O(\mu^n_{\leq} ),
\end{align}
implying the formula for $\Ex[\shapleyx_{\max}(Q)]$.
\end{proof}

\section{Conjuctural extensions} \label{sec:other}

\subsection{Normalized iid model} \label{sec:normalized-iid-model}

Theorem~\ref{thm:iid} predicts the values of the minimal and maximal Shapley values in the natural iid model.
We conjecture that a similar theorem holds for the normalized iid model.

\begin{conjecture}
	\label{thm:iid-normalized}
Let $X$ be a non-negative continuous random variable whose density function $f$ satisfies $f(t) \leq Ce^{-\lambda t}$ for some $C,\lambda > 0$, and additionally, the support of $f$ can be partitioned into finitely many intervals on which $f$ is differentiable and monotone.
Furthermore, for the first statement, assume also that $x = O(\Ex[X_{\leq x}])$ for $x$ near $\chi_{\min}$, where $\chi_{\min} = \inf \{ x : \Pr[X \geq x] > 0 \}$.

Generate weights $w_1,\ldots,w_n$ according to the normalized iid model: generate $n$ i.i.d.\ samples $x_1,\ldots,x_n$ of $X$, let $S = x_1+\cdots+x_n$, and let $w_1,\ldots,w_n$ consist of the values $x_1/S,\ldots,x_n/S$ sorted in increasing order.

For all $\epsilon > 0$ there exist $\psi < 1$ and $K > 0$ such that:
\begin{itemize}
 \item For all $q \in [Kn^{-3/4},1-\epsilon]$,
	\[\Ex [ \shapley_n(q) ]= \frac{1}{n} \Ex_{x \sim X^n_{\max}} [\frac{x}{\Ex [X_{\leq x}]}] \pm O(\psi^{n^{1/4}}). \]
 \item For all $q \in [Kn^{-3/4},1-Kn^{-1/3}]$,
	\[ \Ex [\shapley_1(q) ]= \frac{1}{n} \Ex_{x \sim X^n_{\min}} [\frac{x}{\Ex [X_{\geq x}]}] \pm O(\psi^{n^{1/4}}). \]
\end{itemize}
\end{conjecture}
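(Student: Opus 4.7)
The plan is to exploit the pointwise identity $\shapley_n(q) = \shapleyx_{\max}(qS)$, where $S = x_1+\cdots+x_n$ and the right-hand side is the natural-model Shapley value of the maximal agent at the absolute quota $qS$; likewise $\shapley_1(q) = \shapleyx_{\min}(qS)$. Taking expectations, $\Ex[\shapley_n(q)] = \Ex[\shapleyx_{\max}(qS)]$. Bernstein's inequality, applied after truncating the $x_i$'s at $M = O(\log n)$ exactly as in the proof of Lemma~\ref{lem:uniform-tail}, shows that the event $\mathcal{E} = \{|S - n\Ex[X]| \leq n^{2/3}\Ex[X]\}$ holds with probability $1 - e^{-\Omega(n^{1/3})}$, which is absorbed into the target $O(\psi^{n^{1/4}})$ error; on $\mathcal{E}$, for $q\in[Kn^{-3/4},1-\epsilon]$, the random quota $qS$ lies in the range $[n^{1/4},(1-\epsilon/2)n\Ex[X]]$ in which Theorem~\ref{thm:iid} operates.

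\textbf{Bridge-renewal with a data-dependent window.} Redoing the derivation of Lemma~\ref{lem:uniform-formula} with the random quota $qS$ yields
\[ \Ex[\shapley_n(q)] = \Ex_{x \sim X^n_{\max}}\Big[\tfrac{1}{n}\sum_{i=1}^n \Pr_{y_1,\ldots,y_{n-1}\sim X_{\leq x}}\big[\,\textstyle\sum_{j<i} y_j \in [q(x+\tilde S)-x,\,q(x+\tilde S))\big]\Big], \]
where $\tilde S = \sum_j y_j$; the new feature compared to Theorem~\ref{thm:iid} is that the length-$x$ window itself depends on the sample through $\tilde S$. Conditioning further on $\tilde S = s$ and using Bernstein once more to restrict to $|s - (n-1)\Ex[X_{\leq x}]| \leq n^{2/3}$, the partial sums $T_{i-1}=\sum_{j<i}y_j$ form a random walk bridge from $0$ to $s$, and the window sits near $qs$, which lies in the bulk of the walk for $q$ bounded away from $\{0,1\}$. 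The key estimate to establish is a \emph{bridge analog} of Corollary~\ref{cor:renewal}: for a walk with mean increment $\mu = \Ex[X_{\leq x}]$ conditioned on $T_{n-1}=s$ in the typical range, the expected number of partial sums lying in any length-$x$ window within the bulk equals $x/\mu \pm o(1)$. I would prove this via the local central limit theorem applied to the conditional density of $T_{i-1}$ given $T_{n-1}=s$, which is approximately Gaussian with mean $\tfrac{i-1}{n-1}s$ and variance $\tfrac{(i-1)(n-i)}{n-1}\sigma^2$; summing around the saddle $i^* \approx q(n-1)$ and treating $\alpha = (i-1)/(n-1)$ as a continuous variable reduces the count to a Gaussian integral in $\alpha$ that evaluates to $(n-1)x/s \approx x/\mu$.

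\textbf{Assembly and main obstacle.} Integrating the bridge-renewal estimate against the density of $\tilde S$ and then taking the outer expectation over $x \sim X^n_{\max}$, following the template of Lemma~\ref{lem:uniform-estimate}, gives $\Ex[\shapley_n(q)] = \tfrac{1}{n}\Ex_{x \sim X^n_{\max}}[x/\Ex[X_{\leq x}]] \pm O(\psi^{n^{1/4}})$; the proof for $\shapley_1$ is symmetric, with $X_{\geq x}$ replacing $X_{\leq x}$. I expect the bridge-renewal estimate itself to be the hard part: its error must be \emph{uniform} both in the window location and over the class of $m$-reasonable variables $X_{\leq x}$, paralleling the uniformity already present in Proposition~\ref{pro:renewal}. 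A quantitative local central limit theorem with Edgeworth corrections --- obtained by Fourier-analytic methods in the spirit of Section~\ref{sec:sound} --- should suffice, but the delicate point is to keep the constants uniform as $x$ approaches $\chi_{\min}$, where both the hypothesis $x = O(\Ex[X_{\leq x}])$ and the lower bound on $\var[X_{\leq x}]$ become binding.
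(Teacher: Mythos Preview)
This statement is a \emph{conjecture} in the paper, not a theorem; the paper provides no proof. The authors explicitly write that proving the conditional estimate $\Ex[\shapley_i(Q)\mid \sum_j x_j = S] \approx \Phi_i$ for $S$ near $n\Ex[X]$ would suffice, but that ``at the moment we cannot prove this estimate.'' So there is nothing to compare your proposal against, and your bridge-renewal formulation is in fact an attempt to supply exactly the missing ingredient the authors isolate.

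Your outline is a plausible attack, and you correctly identify the crux: a uniform occupation estimate for the conditioned (bridge) walk. The most serious gap is quantitative. The conjecture asserts an error of $O(\psi^{n^{1/4}})$, i.e.\ stretched-exponentially small, whereas a local CLT with Edgeworth corrections will typically only deliver a polynomial-in-$n$ error for the bridge density, and hence for the occupation count. In the unconditioned setting the paper obtains exponential decay not via CLT but via the complex-analytic renewal argument of Proposition~\ref{pro:renewal}; once you condition on $T_{n-1}=s$ the increments are no longer independent, the Laplace transform no longer factorizes, and that machinery does not carry over directly. So either you must weaken the target error (proving the conjecture with, say, $O(n^{-c})$ in place of $O(\psi^{n^{1/4}})$, which still gives Corollary~\ref{cor:uniform-limit-normalized}), or you need a genuinely new idea---perhaps a tilted-measure or saddle-point representation of the bridge that restores enough product structure to run a Proposition~\ref{pro:renewal}-style contour shift. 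Your closing remark about uniformity as $x\to\chi_{\min}$ is also a real issue: the variance of $X_{\leq x}$ can degenerate there, and your Gaussian approximation for the bridge needs a nondegenerate variance to control the width of the saddle region.
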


This implies the following corollary, whose proof is very similar to the proof of Corollary~\ref{cor:uniform-limit}.

\begin{conjecture}
	\label{cor:uniform-limit-normalized}
Let $X$ be a random variable satisfying the requirements of Theorem~\ref{thm:iid-normalized}.
Suppose that $q \in (0,1)$ and (for the first statement) $\chi_{\max} < \infty$, where $\chi_{\max} = \sup \{ x : \Pr[X \leq x] > 0 \}$. Then
	\begin{align*}
		\lim_{n\to\infty} n\Ex [\shapley_n(q)] &= \frac{\chi_{\max}}{\Ex [X]}, \\
		\lim_{n\to\infty} n\Ex [\shapley_1(q)] &= \frac{\chi_{\min}}{\Ex [X]}.
	\end{align*}
If $\chi_{\max} = \infty$ then as $n\to\infty$ we have $n\Ex[\shapley_n(q)] \to \infty$ and
\[ n\Ex[\shapley_n(q)] \sim \Ex_{x\sim X^n_{\max}} [\frac{x}{\Ex [X_{\leq x}]}]. \]
\end{conjecture}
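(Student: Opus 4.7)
The plan is to transcribe the proof of Corollary~\ref{cor:uniform-limit} almost verbatim, substituting Conjecture~\ref{thm:iid-normalized} for Theorem~\ref{thm:iid}. For each fixed $q \in (0,1)$ and each extreme Shapley index, I first observe that the range of validity $[Kn^{-3/4},1-\epsilon]$ or $[Kn^{-3/4},1-Kn^{-1/3}]$ eventually contains $q$, so the conjecture applies; the remaining work is to take the $n\to\infty$ limit of the resulting expectation, which reduces to an application of bounded or monotone convergence.

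For $\shapley_1$, the conjecture gives
\[
 n\,\Ex[\shapley_1(q)] = \Ex_{x \sim X^n_{\min}}\!\left[\frac{x}{\Ex[X_{\geq x}]}\right] \pm O(n\psi^{n^{1/4}})
\]
for all sufficiently large $n$. Since $X^n_{\min} \searrow \chi_{\min}$ almost surely and the map $x \mapsto x/\Ex[X_{\geq x}]$ is continuous and bounded by $1$ on $[\chi_{\min},\chi_{\max}]$ (because $X_{\geq x} \geq x$ pointwise, so $\Ex[X_{\geq x}] \geq x$), bounded convergence delivers the limit $\chi_{\min}/\Ex[X]$. For $\shapley_n$ with $\chi_{\max}<\infty$, the analogous identity holds for $X^n_{\max}$, and the map $\phi(x) = x/\Ex[X_{\leq x}]$ is continuous on $[\chi_{\min},\chi_{\max}]$: near $\chi_{\min}$ it is bounded by the hypothesis $x = O(\Ex[X_{\leq x}])$, while for $x$ bounded away from $\chi_{\min}$ so is $\Ex[X_{\leq x}]$ from below. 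As $X^n_{\max} \nearrow \chi_{\max}$ almost surely, bounded convergence again gives the limit $\chi_{\max}/\Ex[X]$.

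The case $\chi_{\max} = \infty$ proceeds identically except that $\phi$ is now unbounded, so I bound it from below instead. Since $\Ex[X \mid X \leq x] \leq \Ex[X]$ (the conditional mean given $X \leq x$ is dominated by the unconditional mean, as one sees from the tower decomposition), we have $\phi(x) \geq x/\Ex[X]$, whence
\[
 \Ex_{x \sim X^n_{\max}}\!\left[\frac{x}{\Ex[X_{\leq x}]}\right] \geq \frac{\Ex[X^n_{\max}]}{\Ex[X]}.
\]
The right-hand side tends to $\infty$: for every $M > 0$, the fact that $\Pr[X > M] > 0$ forces $\Pr[X^n_{\max} > M] \to 1$, and hence $\Ex[X^n_{\max}] \geq M \Pr[X^n_{\max} > M] \to M$. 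Combined with the conjecture, this yields both $n\,\Ex[\shapley_n(q)] \to \infty$ and the claimed asymptotic equivalence.

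The analysis above is essentially mechanical; the sole real obstacle is the one the paper explicitly flags, namely Conjecture~\ref{thm:iid-normalized} itself. Passing from the natural iid model of Theorem~\ref{thm:iid} to the normalized model requires controlling the effect of dividing all weights by the random sum $S = x_1+\cdots+x_n$ in the renewal-process estimates of Section~\ref{sec:main-proof}, and it is there that any serious work resides.
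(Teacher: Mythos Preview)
Your proposal is correct and matches the paper's own stance: the paper does not give a separate proof of this statement, merely remarking that it ``implies the following corollary, whose proof is very similar to the proof of Corollary~\ref{cor:uniform-limit}.'' You have faithfully carried out that transcription, with some added rigor (explicit appeals to bounded convergence, the bound $x/\Ex[X_{\geq x}]\leq 1$, and the hypothesis $x=O(\Ex[X_{\leq x}])$ near $\chi_{\min}$), and you correctly flag that the entire statement is conditional on Conjecture~\ref{thm:iid-normalized}.
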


These conjectures are supported by our experiments for the uniform and exponential distributions, appearing in Figure~\ref{fig:uniform} (uniform distribution) and Figure~\ref{fig:exponential} (exponential distribution). While these experiments were performed using the normalized iid model, their results match those of the natural iid model.

Theorem~\ref{thm:iid} gives a strong estimate for $\Ex[\shapley_n(Q)]$ and $\Ex[\shapley_1(Q)]$. In terms of the natural iid model, Conjecture~\ref{thm:iid-normalized} predicts a strong estimate for the quantities $\Ex[\shapley_n(q \sum_{i=1}^n x_i)]$ and $\Ex[\shapley_1(q \sum_{i=1}^n x_i)]$. Since $\sum_{i=1}^n x_i$ is strongly concentrated around $n\Ex[X]$, and $\Ex[\shapley_i(Q)]$ is concentrated around some limiting value $\Phi_i$ (for $i = 1,n$) for $Q \approx qn\Ex[X]$, we expect $\Ex[\shapley_i(q \sum_{i=1}^n x_i)] \approx \Phi_i$. In order to show this, it suffices to prove that for $S \approx n\Ex[X]$ and $Q \approx qn\Ex[X]$, we have $\Ex[\shapley_i(Q)|\sum_{i=1}^n x_i = S] \approx \Phi_i$. Unfortunately, at the moment we cannot prove this estimate.

\subsection{Other Shapley values} \label{sec:other-shapley-values}

Theorem~\ref{thm:iid} and Conjecture~\ref{thm:iid-normalized} describe the behavior of the Shapley values corresponding to the minimal and maximal agents. It is natural to ask how the Shapley values in between behave. Based on the proof of Theorem~\ref{thm:iid}, we can formulate a conjecture for the behavior of the non-extreme Shapley values. Since the formulation is cleaner in the normalized iid model, we present it in that model.
For $p \in (0,1)$, let $\shapley_{pn}(q)$ be the Shapley value corresponding to the $pn$th order statistics, let $X^n_{pn}$ be the distribution of the $pn$th order statistics (in both cases rounding $pn$ arbitrarily to an integer), and let $X_{\mixture{p}{x}}$ be the random variable which is a mixture of $X_{\leq x}$ (with probability $p$) and $X_{\geq x}$ (with probability $1-p$).

\begin{conjecture}
	\label{thm:iid-other}
Let $X$ be a random variable satisfying the requirements of Theorem~\ref{thm:iid-normalized}.
For all $p \in (0,1)$ there exist $\psi < 1$ and $K > 0$ such that for all $q \in [Kn^{-3/4},1-Kn^{1/3}]$,
\[ \Ex[\shapley_{pn}(q)] = \frac{1}{n} \Ex_{x \sim X^n_{pn}} [\frac{x}{\Ex [X_{\mixture{p}{x}}]}] \pm o\left(\frac{1}{n}\right). \]
\end{conjecture}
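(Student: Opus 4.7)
The plan is to mimic the proof of Theorem~\ref{thm:iid}, with the one-sided conditioning on an extreme order statistic replaced by a two-sided conditioning on the $pn$-th order statistic. Conditioning on $w_{pn} = x$, the remaining $n-1$ sample values split canonically into $pn-1$ values iid from $X_{\leq x}$ and $(1-p)n$ values iid from $X_{\geq x}$. Averaging over the position $i$ of agent $pn$ in a uniformly random permutation, as in Lemma~\ref{lem:uniform-formula}, one obtains
\begin{equation*}
\Ex[\shapley_{pn}(Q)] = \Ex_{x \sim X^n_{pn}}\Big[\frac{1}{n} \sum_{i=1}^{n} \Pr[T_{i-1}(x) \in [Q-x, Q)]\Big],
\end{equation*}
where $T_{i-1}(x)$ is the sum of $i-1$ values drawn uniformly \emph{without replacement} from the $n-1$ non-pivotal samples.

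The central technical step is to replace $T_{i-1}(x)$ by the sum of $i-1$ iid draws from the mixture $X_{\mixture{p}{x}}$. A clean route is to couple the deterministic split $(pn-1,\,(1-p)n)$ with the randomized one in which $K \sim \Bin(n-1,p)$ and the $n-1$ other samples are $K$ iid $X_{\leq x}$ together with $n-1-K$ iid $X_{\geq x}$; under this randomized coupling the $i-1$ predecessors of agent $pn$ become exchangeable iid draws from $X_{\mixture{p}{x}}$, and a local CLT for the hypergeometric controls the total-variation distance between the two conditionings by $O(1/\sqrt{n})$. For indices $i$ close to $n$, where without-replacement effects are most delicate, a Bernstein-type argument shows that $T_{i-1}(x)$ lies far from $[Q-x,Q)$ with overwhelming probability, so those terms are negligible.

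With the mixture approximation in hand, Proposition~\ref{pro:renewal} applies directly: for $x$ bounded away from $\chi_{\min}$ and $\chi_{\max}$ the density of $X_{\mixture{p}{x}}$ is dominated by a constant multiple of $e^{-\lambda t}$ and is piecewise differentiable-monotone on at most $2\intervals + 1$ intervals (with a jump at $t = x$). Hence
\begin{equation*}
\sum_{i=1}^{\infty} \Pr\Big[\sum_{j=1}^{i-1} Y_j \in [Q-x,Q)\Big] = \frac{x}{\Ex[X_{\mixture{p}{x}}]} \pm O(\gamma^{Q-x})
\end{equation*}
for $Y_j$ iid from $X_{\mixture{p}{x}}$. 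Truncating the $i$-sum at $n$ via a Bernstein tail bound parallel to Lemma~\ref{lem:uniform-tail} (the hypothesis $q \leq 1 - Kn^{-1/3}$ being exactly what is needed), and using the exponential concentration of $X^n_{pn}$ around the true $p$-quantile to remove the conditioning on $x$ being in a bounded range, would then yield the analogue of Theorem~\ref{thm:iid} in the \emph{natural} iid model.

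The main obstacle is the passage from the natural iid model to the normalized iid model: this is exactly the same obstruction already flagged for Conjecture~\ref{thm:iid-normalized}. One needs to show that $\Ex[\shapley_{pn}(Q) \mid \sum_i x_i = S] \approx \Ex[\shapley_{pn}(Q)]$ whenever $S \approx n\Ex[X]$ and $Q \approx qS$, i.e.\ that the Shapley value is insensitive to the precise realization of the total weight around its mean — and this insensitivity is currently out of reach with the tools developed here. A secondary difficulty is making the without-replacement-to-with-replacement coupling uniform in $i$ across all of $[1,n]$: the naive de Finetti-type bound degrades for $i$ close to $n$ and must be supplemented by concentration, which seems technically routine but bookkeeping-heavy, especially when tracking the $o(1/n)$ error uniformly in $q$.
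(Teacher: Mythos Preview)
The statement is a \emph{conjecture}; the paper does not prove it. What the paper offers (in the paragraph following Corollary~\ref{cor:other-limit}) is only a heuristic sketch: it writes down the analogue of Lemma~\ref{lem:uniform-formula} for the $p(n-1)$th order statistic (equation~\eqref{eq:other-binomial}, with the $y_j$ split deterministically into $p(n-1)$ draws from $X_{\leq x}$ and the rest from $X_{\geq x}$, randomly permuted), writes down the analogue of Corollary~\ref{cor:uniform-estimate} for the mixture $X_{\mixture{p}{x}}$ (equation~\eqref{eq:other-prob}), observes that the two differ only in that the number of $X_{\leq x}$-draws is \emph{exactly} $p(n-1)$ in one and merely \emph{concentrated} around $p(n-1)$ in the other, and then says it is ``reasonable to conjecture'' that the analogue of Lemma~\ref{lem:uniform-estimate} holds --- from which the statement would follow ``just as in the proof of Theorem~\ref{thm:iid}'', but explicitly only in the \emph{natural} iid model.

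Your proposal follows exactly this outline and is, if anything, more concrete: you name a specific mechanism (couple the deterministic split to a $\Bin(n-1,p)$ split and control the discrepancy via a hypergeometric local CLT) for the fixed-split versus random-split gap that the paper leaves as ``reasonable to conjecture'', and you correctly isolate the two genuine obstacles --- the passage from the natural to the normalized model (the same obstruction already conceded for Conjecture~\ref{thm:iid-normalized}), and the uniformity-in-$i$ of the without-replacement coupling. Both are precisely why the statement remains a conjecture in the paper. So there is no gap in your proposal relative to the paper: you and the paper agree on the route and on what is missing.
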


This conjecture determines the limiting value of $n\Ex [\shapley_{pn}(q)]$:

\begin{corollary}
	\label{cor:other-limit}
Suppose that $p,q \in (0,1)$. Then
\[ \lim_{n \to \infty} n\Ex[\shapley_{pn}(q)] = \frac{x}{\Ex[X]}, \text{ where } \Pr[X \leq x] = p. \]
\end{corollary}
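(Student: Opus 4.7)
}
My plan is to derive the corollary from Conjecture~\ref{thm:iid-other} by combining three standard facts: the conjecture's main formula, the concentration of the $pn$th order statistic around the population $p$-quantile, and the law of total expectation applied to the mixture $X_{\mixture{p}{x}}$. Concretely, let $x^\ast$ be the unique point with $\Pr[X \leq x^\ast] = p$ (unique since $X$ is continuous). Conjecture~\ref{thm:iid-other} gives
\[
 n\Ex[\shapley_{pn}(q)] \;=\; \Ex_{x\sim X^n_{pn}}\!\left[\frac{x}{\Ex[X_{\mixture{p}{x}}]}\right] + o(1),
\]
so it suffices to show that the expectation on the right converges to $x^\ast/\Ex[X]$.

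The first step is to identify the limiting value at $x=x^\ast$. Since $X_{\mixture{p}{x^\ast}}$ is, by definition, the mixture $X_{\leq x^\ast}$ with probability $p$ and $X_{\geq x^\ast}$ with probability $1-p$, the law of total expectation (applied to the event $\{X\leq x^\ast\}$, which has probability exactly $p$) yields
\[
 \Ex[X_{\mixture{p}{x^\ast}}] \;=\; p\,\Ex[X_{\leq x^\ast}] + (1-p)\,\Ex[X_{\geq x^\ast}] \;=\; \Ex[X].
\]
Moreover, the map $x\mapsto \Ex[X_{\leq x}]$ is continuous (its derivative is $x f(x)/\Pr[X\leq x]$, which is finite on the interior of the support), and likewise for $\Ex[X_{\geq x}]$; hence $g(x) := x/\Ex[X_{\mixture{p}{x}}]$ is continuous at $x^\ast$, with $g(x^\ast)=x^\ast/\Ex[X]$.

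The second step is to show $\Ex_{x\sim X^n_{pn}}[g(x)] \to g(x^\ast)$. The standard theory of order statistics for continuous distributions says $X^n_{pn}\to x^\ast$ in probability (indeed, $\sqrt{n}(X^n_{pn}-x^\ast)$ is asymptotically normal with variance $p(1-p)/f(x^\ast)^2$, and in particular $X^n_{pn}$ is tightly concentrated around $x^\ast$, which lies strictly in the interior of the support since $p\in(0,1)$). To upgrade convergence in probability to convergence of expectations for $g$, I would split the expectation at, say, $|x-x^\ast|<\delta$ for small $\delta>0$: on this event $g(x) = g(x^\ast)+o_{\delta}(1)$ by continuity of $g$, while on the complement the contribution is negligible because $g(x)\leq x/\Ex[X_{\leq m}]$ for any fixed $m<\chi_{\max}$ with $x^\ast<m$, and $\Ex[X^n_{pn}\,\mathbf{1}\{|X^n_{pn}-x^\ast|\geq\delta\}]\to 0$ by the exponential-tail bound on $X$ (Lemma~\ref{lem:reasonable-props}) combined with the exponentially small probability that a binomial concentration fails.

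The main obstacle I anticipate is the tail control for the ``$x>\chi_{\max}-$something'' regime when $\chi_{\max}=\infty$: one needs a uniform bound ensuring that $\Ex_{x\sim X^n_{pn}}[g(x)\mathbf{1}\{x>m\}]\to 0$ for some fixed $m$. This follows from the fact that the probability that the $pn$th order statistic exceeds $m$ (for $m$ above the $p$-quantile) decays exponentially in $n$ by a Chernoff bound on the binomial $\mathrm{Bin}(n,\Pr[X>m])$, while $g(x)\leq x/\Ex[X_{\leq x^\ast}]$ has only polynomial moments against $X^n_{pn}$. Once these pieces are assembled, the conjecture's formula gives $n\Ex[\shapley_{pn}(q)]\to x^\ast/\Ex[X]$, which is the claim.
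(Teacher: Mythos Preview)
Your proposal is correct and follows essentially the same approach as the paper: both derive the result from Conjecture~\ref{thm:iid-other} by noting that $X^n_{pn}$ concentrates at the $p$-quantile $x^\ast$ and computing $\Ex[X_{\mixture{p}{x^\ast}}]=\Ex[X]$ via the law of total expectation. The paper's proof is much terser (it simply asserts that $X^n_{pn}$ tends to the constant $x$ and evaluates the limit there), whereas you spell out the continuity and tail-control arguments needed to pass from convergence in probability to convergence of the expectation; your extra care is justified but not a different route.
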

\begin{proof}
 As $n\to\infty$, the distribution of $X^n_{pn}$ tends to the constant $x$. At that point, we have
\begin{align*}
\Ex[X_{\mixture{p}{x}}] &= p\Ex[X_{\leq x}] + (1-p) \Ex[X_{\geq x}] \\ &= \Pr[X \leq x] \Ex[X_{\leq x}] + \Pr[X \geq x] \Ex[X_{\geq x}] \\ &= \Ex[X]. \qedhere
\end{align*}
\end{proof}

As an illustration, we apply Corollary~\ref{cor:other-limit} to the uniform distribution $U(a,b)$ and to the exponential distribution $\Exp(1)$. When $X = U(a,b)$, we have $\Ex[X] = \frac{a+b}{2}$ and $\Pr[X \leq x] = \frac{x-a}{b-a}$, and so $\Pr[X \leq x] = p$ for $x = (1-p)a + pb$. Therefore Corollary~\ref{cor:other-limit} implies that $n\Ex[\shapley_{pn}(q)] \to \frac{2(1-p)a + 2pb}{a+b}$ for all $q \in (0,1)$. In particular, when $X = U(0,1)$ the corollary implies that $n\Ex[\shapley_{pn}(q)] \to 2p$ for all $q \in (0,1)$.

When $X = \Exp(1)$, we have $\Ex[X] = 1$ and $\Pr[X \leq x] = 1-e^{-x}$, and so $\Pr[X \leq x] = p$ for $x = -\log(1-p)$. Therefore Corollary~\ref{cor:other-limit} implies that $n\Ex[\shapley_{pn}(q)] \to -\log(1-p)$ for all $q \in (0,1)$.

Corresponding experimental results shown in Figure~\ref{fig:other-limit} support Corollary~\ref{cor:other-limit} and so Conjecture~\ref{thm:iid-other}. 

\begin{figure}
	\centering
	\includegraphics[width=.8\textwidth]{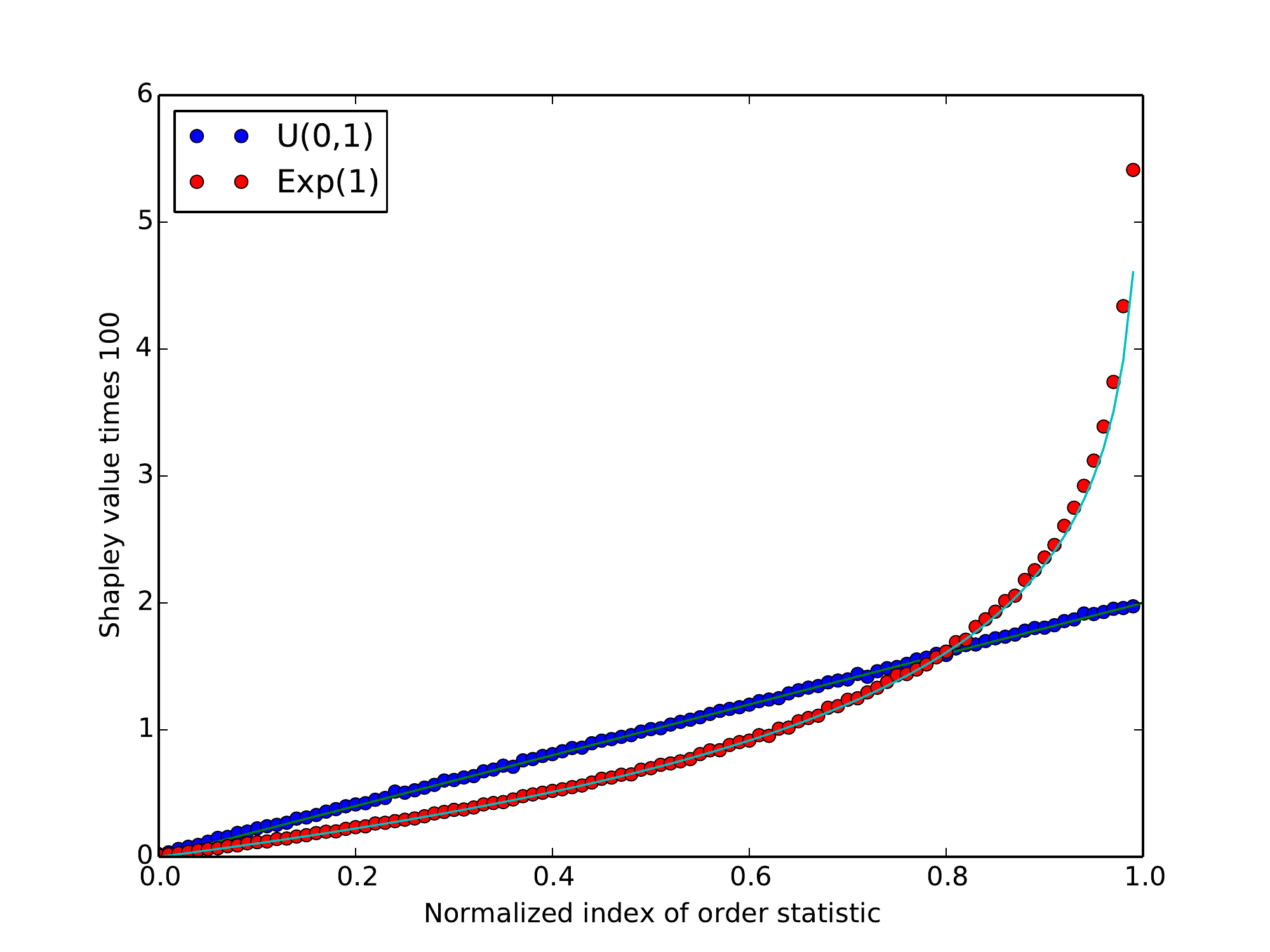} \caption{All Shapley values for $X=U(0,1)$ and $X=\Exp(1)$ (both in the normalized iid model) and the setting $n = 100$ at the quota $q = 1/2$, normalized by $n$. Results of $10^6$ experiments. The experimental results are compared to the predictions of Corollary~\ref{cor:other-limit}: $\shapley_{pn}(q) = 2p$ for $X=U(0,1)$ and $\shapley_{pn}(q) = -\log(1-p)$ for $X=\Exp(1)$.} \label{fig:other-limit}
\end{figure}



In the remainder of this section, we briefly described how Conjecture~\ref{thm:iid-other} follows by extending the ideas of Section~\ref{sec:main-proof}.

A straightforward generalization of Lemma~\ref{lem:uniform-formula} shows that
\begin{equation} \label{eq:other-binomial}
\Ex[\shapleyx_{p(n-1)}(Q)] = \frac{1}{n} \Ex_{x \sim X^n_{p(n-1)}} \left[\sum_{i=1}^n \Pr_{y_1,\ldots,y_{n-1}} \left[ \sum_{j=1}^{i-1} y_j \in [Q-x,Q) \right] \right],
\end{equation}
where $y_1,\ldots,y_{n-1}$ are chosen as follows: a random subset of $p(n-1)$ of them are chosen according to $X_{\leq x}$, and the rest are chosen according to $X_{\geq x}$. A generalization of Corollary~\ref{cor:uniform-estimate} shows that if we choose some $m_1,m_2$ satisfying $\chi_{\min} < m_1 < m_2 < \chi_{\max}$, then for all $x \in [m_1,m_2]$ we have
\begin{equation} \label{eq:other-prob}
\sum_{i=1}^n \Pr_{y_1,\ldots,y_{n-1} \sim X_{\mixture{p}{x}}} \left[ \sum_{j=1}^{i-1} y_j \in [Q-x,Q) \right] = \frac{x}{\Ex [X_{\mixture{p}{x}}]} \pm O(\zeta^{n^{1/4}} + \zeta^{Q-x}).
\end{equation}
The expressions in~\eqref{eq:other-binomial} and~\eqref{eq:other-prob} are very similar, the difference being that in~\eqref{eq:other-binomial} the number of positions distributed according to $X_{\leq x}$ is exactly $p(n-1)$, while in~\eqref{eq:other-prob} it is strongly concentrated around $p(n-1)$. It is therefore reasonable to conjecture that the analog of Lemma~\ref{lem:uniform-estimate} holds,
\[
\Ex[\shapleyx_{p(n-1)}(Q)] = \frac{1}{n} \Ex_{x \sim X^n_{p(n-1)}} [\frac{x}{\Ex[X_{\mixture{p}{x}}]}] + o\left(\frac{1}{n}\right).
\]
Given this, Conjecture~\ref{thm:iid-other} (in its version for the natural iid model) follows just as in the proof of Theorem~\ref{thm:iid}.


\section{Proving Proposition \ref{pro:renewal}} \label{sec:sound}
In this section, we complete the proof of Theorem~\ref{thm:iid} by proving Proposition~\ref{pro:renewal}.

The idea of the proof is to use the Mellin transform to write
\[ m(Q) \triangleq \sum_{i=1}^\infty \Pr_{y_1,\ldots,y_{i-1} \sim Y} \left[ \sum_{j=1}^{i-1} y_j < Q \right] = \frac{1}{2\pi i} \int_{c-i\infty}^{c+i\infty} \frac{e^{sQ}}{s(1 - \Ex[e^{-sY}])} \,ds, \]
where $c > 0$ is arbitrary.
The integrand has a double pole at $s = 0$ with residue $Q/\Ex Y + \Ex(Y^2)/2(\Ex Y)^2$, which gives rise to the main term in the proposition. The conditions on the distribution $Y$ imply that apart from the pole at $s = 0$, the integrand has no poles in some halfspace $\Re s \geq -\gamma$. Therefore we can move the line of integration to the left toward $-\gamma$, and then read off the error term. The exponential dependence comes from the numerator $e^{sQ}$.

In the rest of this section, we will assume that $Y$ is a continuous distribution supported on $[0,\infty)$ whose density function $f$ is bounded by $Ce^{-\lambda t}$ for some $C,\lambda > 0$. Whenever we use the term ``constant'', we mean a quantity depending only on the parameters $C,\lambda,\intervals$.

We start by proving the formula for $m(Q)$.
\begin{lemma} \label{lem:m-integral-formula}
For all $c > 0$,
\[ m(Q) = \frac{1}{2\pi i} \int_{c-i\infty}^{c+i\infty} \frac{e^{sQ}}{s(1 - \Ex[e^{-sY}])} \, ds. \]
\end{lemma}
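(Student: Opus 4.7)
The plan is to identify the right-hand side as the Bromwich inversion of the Laplace transform of $m(Q)$. Set $S_k = y_1+\cdots+y_k$ (with $S_0 = 0$), $F_k(Q) = \Pr[S_k < Q]$, and $\Phi(s) = \Ex[e^{-sY}]$. Then $m(Q) = \sum_{k=0}^\infty F_k(Q)$; by the elementary renewal theorem $m(Q) = 1 + \Ex[N(Q)] < \infty$ and in fact grows linearly in $Q$, so $m$ is well-defined and its Laplace transform exists throughout the half-plane $\Re s > 0$.

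The key computation is the Laplace transform of $m$. For each $k\ge 0$, integration by parts (using $F_k(0) = 0$ for $k\geq 1$ and $F_0\equiv 1$ on $Q>0$) yields
\[
\int_0^\infty e^{-sQ} F_k(Q)\,dQ \;=\; \frac{\Ex[e^{-sS_k}]}{s} \;=\; \frac{\Phi(s)^k}{s}, \qquad \Re s > 0.
\]
Since all $F_k\ge 0$, Tonelli justifies interchanging the sum and integral for real $s>0$; and because $Y$ has a continuous density supported in $[0,\infty)$, for $\Re s = c > 0$ we have $|\Phi(s)| \leq \int_0^\infty f(y) e^{-cy}\,dy < 1$, so the geometric series converges and
\[
\hat m(s) \;:=\; \int_0^\infty e^{-sQ} m(Q)\,dQ \;=\; \sum_{k=0}^\infty \frac{\Phi(s)^k}{s} \;=\; \frac{1}{s\bigl(1-\Phi(s)\bigr)}.
\]
Analytic continuation extends this identity to all $s$ with $\Re s > 0$.

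Finally, I would invoke the Bromwich (Laplace) inversion theorem. The function $m$ is monotone non-decreasing and continuous at every $Q>0$ (each $F_k$ with $k\ge 1$ has a continuous density, and $\sum_k F_k$ converges uniformly on compact subsets of $(0,\infty)$), so the inversion theorem for functions of bounded variation gives, for every $c>0$ and every $Q>0$,
\[
m(Q) \;=\; \frac{1}{2\pi i} \int_{c-i\infty}^{c+i\infty} e^{sQ}\,\hat m(s)\,ds,
\]
which is precisely the claimed formula.

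The main obstacle is the slow decay of $\hat m(s)$ on vertical lines. By Riemann--Lebesgue applied to the $L^1$ density $f$, $\Phi(s) \to 0$ as $|\Im s|\to\infty$ on $\Re s = c$, so $\hat m(s) \sim 1/s$ there. Consequently the Bromwich integral is only conditionally convergent and must be read as a symmetric principal-value limit; this is exactly the form in which the inversion theorem delivers $m(Q)$ for monotone $m$, so no further estimate is needed beyond invoking the standard statement of the theorem.
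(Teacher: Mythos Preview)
Your proof is correct and takes a genuinely different route from the paper. The paper proceeds directly: it writes each indicator $1_{Q - S_{k-1} > 0}$ via the Perron-type formula
\[
\frac{1}{2\pi i}\int_{c-i\infty}^{c+i\infty} \frac{e^{sx}}{s}\,ds = \mathbf{1}[x>0],
\]
substitutes this into the definition of $m(Q)$, interchanges the sum, expectation, and contour integral, and finally sums the geometric series $\sum_{k\ge 1} \Ex[e^{-sY}]^{k-1}$ inside the integral. You instead compute the Laplace transform $\hat m(s) = 1/\bigl(s(1-\Phi(s))\bigr)$ first and then invoke the Bromwich inversion theorem. The paper's argument is more self-contained but is somewhat cavalier about exchanging the conditionally convergent contour integral with the infinite sum and expectation; your argument offloads those analytic subtleties onto the standard inversion theorem for monotone (locally BV) functions, and you correctly flag that the integral must be read as a symmetric principal value. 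Either route is adequate for the purpose at hand.
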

\begin{proof}
It is well-known that
\[
\frac{1}{2\pi i} \int_{c-i \infty}^{c+i \infty} \frac{e^{sx}}{s} \, ds
= \begin{cases} 1 & x > 0, \\ 0 & x < 0. \end{cases}
\]
Therefore, letting $(y_k)_1^\infty \sim Y$,
\begin{align*}
m(Q) &= \sum_{k=1}^\infty \Pr_{y_1,\ldots,y_{k-1} \sim Y}
\left[ \sum_{j=1}^{k-1} y_j < Q \right] \\ &=
\sum_{k=1}^\infty \Ex[1_{Q - y_1+\cdots+y_{k-1} > 0}] \\ &=
\sum_{k=1}^\infty \Ex_{y_1,\ldots,y_{k-1} \sim Y} \frac{1}{2\pi i}
\int_{c-i\infty}^{c+i\infty} \frac{e^{s(Q-y_1-\cdots-y_{k-1})}}{s} \, ds \\ &=
\sum_{k=1}^\infty \frac{1}{2\pi i} \int_{c-i\infty}^{c+i\infty}
\frac{e^{sQ}}{s} \Ex[e^{-sy_1-\cdots-sy_{k-1}}] \, ds \\ &=
\sum_{k=1}^\infty \frac{1}{2\pi i} \int_{c-i\infty}^{c+i\infty}
\frac{e^{sQ}}{s} \Ex[e^{-sY}]^{k-1} \, ds \\ &=
\frac{1}{2\pi i} \int_{c-i\infty}^{c+i\infty}
\frac{e^{sQ}}{s(1-\Ex[e^{-sY}])} \, ds. \qedhere
\end{align*}
\end{proof}

The following elementary bounds will prove useful.

\begin{lemma} \label{lem:elem-bounds}
For each integer $k \geq 0$,
\[ \Ex[Y^k] \leq \frac{C}{\lambda} \frac{k!}{\lambda^k}. \]
For each $x < y$,
\[ \Pr[x \leq Y \leq y] \leq \frac{C}{\lambda} (e^{-\lambda x}-e^{-\lambda y}). \]
\end{lemma}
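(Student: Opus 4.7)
The plan is to apply the pointwise bound $f(t) \leq Ce^{-\lambda t}$ directly inside the integral representations of both quantities, and then reduce to the two well-known integrals $\int_0^\infty t^k e^{-\lambda t}\,dt = k!/\lambda^{k+1}$ (the Gamma integral at integer argument) and $\int_x^y e^{-\lambda t}\,dt = (e^{-\lambda x}-e^{-\lambda y})/\lambda$. Both bounds in the lemma are stated so that they match exactly the contribution of an $\mathrm{Exp}(\lambda)$ density of total mass $C/\lambda$, so after the pointwise comparison there is nothing left to do beyond evaluating these two explicit integrals.

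For the moment bound, I would first observe that since $Y$ is supported on $[0,\infty)$ and admits density $f$, we have $\Ex[Y^k] = \int_0^\infty t^k f(t)\,dt$. Substituting the hypothesis $f(t) \leq Ce^{-\lambda t}$ gives
\[
\Ex[Y^k] \;\leq\; C\int_0^\infty t^k e^{-\lambda t}\,dt \;=\; C\cdot\frac{k!}{\lambda^{k+1}} \;=\; \frac{C}{\lambda}\cdot\frac{k!}{\lambda^k},
\]
which is the stated inequality. The identity $\int_0^\infty t^k e^{-\lambda t}\,dt = k!/\lambda^{k+1}$ is the standard evaluation of the Gamma integral at integer argument (by $k$ iterated integrations by parts, or by the substitution $u = \lambda t$).

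For the tail/interval probability, I would likewise write $\Pr[x \leq Y \leq y] = \int_x^y f(t)\,dt$ and apply the same pointwise bound to obtain
\[
\Pr[x\leq Y\leq y] \;\leq\; C\int_x^y e^{-\lambda t}\,dt \;=\; \frac{C}{\lambda}\bigl(e^{-\lambda x}-e^{-\lambda y}\bigr).
\]

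There is really no obstacle here: the lemma is a one-line consequence of the hypothesis together with two elementary computations, and it is inserted exactly to serve as a ``black box'' for bounding moments and tail probabilities of $Y$ later in the proof of Proposition~\ref{pro:renewal}. The only small subtlety worth noting is that the bound $\Pr[x\leq Y\leq y] \leq (C/\lambda)(e^{-\lambda x}-e^{-\lambda y})$ can also be deduced by letting $y\to\infty$ to recover the standard tail estimate $\Pr[Y\geq x]\leq (C/\lambda)e^{-\lambda x}$ and then subtracting; either formulation is fine, and the direct pointwise integration is the shortest route.
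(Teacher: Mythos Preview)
Your proposal is correct and essentially identical to the paper's proof: both apply the pointwise bound $f(t)\leq Ce^{-\lambda t}$ inside the integrals and reduce to the moments and interval probability of an $\Exp(\lambda)$ variable scaled by $C/\lambda$. The paper merely phrases the moment computation as ``$\Ex[Y^k]\leq \frac{C}{\lambda}\int_0^\infty \lambda e^{-\lambda u}u^k\,du = \frac{C}{\lambda}\cdot\frac{k!}{\lambda^k}$'' and dispatches the second bound in one line by computing $C\int_x^y e^{-\lambda u}\,du$.
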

\begin{proof}
 When $f(u) = \lambda e^{-\lambda u}$, $Y$ is an exponential random variable with moments $\Ex[Y^k] = k!/\lambda^k$, and so in general
\[ \Ex[Y^k] \leq \frac{C}{\lambda} \int_0^\infty \lambda e^{-\lambda u} u^k \, du = \frac{C}{\lambda} \frac{k!}{\lambda^k}. \]
 The second bound follows by calculating $C \int_x^y e^{-\lambda u} \, du$.
\end{proof}

We proceed to show that in some halfspace $\Re s \geq -\gamma$, the integrand has no poles other than the double pole at $s = 0$. In fact, we will show more: in this halfspace, excepting a fixed neighborhood of zero, $|1 - \Ex[e^{-sY}]| = \Omega(1)$.

\begin{lemma} \label{lem:re-estimate}
For every $I > 0$ there are constants $\eta,R > 0$ such that
\[ \Re \Ex[e^{-sY}] \leq 1 - \eta \]
whenever $\Re s \geq -R$ and $|\Im s| \geq I$.
\end{lemma}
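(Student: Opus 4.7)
Write $s=\sigma+i\tau$, so that
\[
\Re \Ex[e^{-sY}] \;=\; \int_0^\infty e^{-\sigma t}\cos(\tau t)\, f(t)\,dt.
\]
My first step is to reduce to the pure imaginary case $\sigma=0$. For any $\sigma\in(-\lambda,\lambda)$, the pointwise bound $f(t)\leq Ce^{-\lambda t}$ provides an integrable dominating function for $|e^{-\sigma t}-1|f(t)$, so by dominated convergence $\Ex|e^{-\sigma Y}-1|\to 0$ as $\sigma\to 0$, uniformly in $\tau$. Since
\[
\bigl|\Re\Ex[e^{-sY}] - \Re\Ex[e^{-i\tau Y}]\bigr| \;\leq\; \Ex\bigl|e^{-\sigma Y}-1\bigr|,
\]
if I can prove $\Re\Ex[e^{-i\tau Y}]\leq 1-2\eta_0$ for all $|\tau|\geq I$, then choosing $R$ sufficiently small yields the claim with $\eta=\eta_0$.

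For the remaining claim I split $\{|\tau|\geq I\}$ into a compact band $I\leq|\tau|\leq T_0$ and a tail $|\tau|>T_0$. On the compact band, $\tau\mapsto\Re\Ex[e^{-i\tau Y}]=\int_0^\infty \cos(\tau t)f(t)\,dt$ is a continuous function of $\tau$, and is strictly less than $1$ at every fixed $\tau\neq 0$ (since $f$ is a continuous probability density and $\cos(\tau t)<1$ on an open subset of its support of positive measure). Continuity plus compactness therefore produce a uniform bound $\Re\Ex[e^{-i\tau Y}]\leq 1-\eta_1$ throughout $I\leq|\tau|\leq T_0$.

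For $|\tau|>T_0$, I would invoke a Riemann-Lebesgue-type estimate
\[
\int_0^\infty \cos(\tau t)\,f(t)\,dt \;=\; O(1/|\tau|),
\]
with a constant depending only on $C$, $\lambda$, and $\intervals$. This is exactly where the piecewise differentiable-monotone hypothesis on $f$ is used: on each of the $\intervals$ monotone pieces, an integration-by-parts or second-mean-value-theorem argument yields an $O(1/|\tau|)$ contribution (with the exponential decay controlling the tail piece), and summing the $\intervals$ contributions gives the stated bound. This is the content of the author's separate Lemma~\ref{lem:riemann-lebesgue}, already signalled in the discussion after Proposition~\ref{pro:renewal}. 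Choosing $T_0$ large enough then forces $|\Re\Ex[e^{-i\tau Y}]|\leq\eta_1$, and combining with the compact-band bound gives the uniform inequality with $\eta_0=\eta_1/2$. The main obstacle, and the only place where the structural hypothesis on $f$ is genuinely needed, is the uniform Riemann-Lebesgue decay on the tail; the reduction to $\sigma=0$ and the compactness argument are both routine.
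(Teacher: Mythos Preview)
Your outline has a genuine gap at the uniformity of the constants. In this section the paper fixes the convention that a ``constant'' depends only on $C,\lambda,\intervals$ (and here $I$); this is the whole point of Proposition~\ref{pro:renewal}, whose error term is meant to be uniform over all densities with the given decay. Your Step~1 (reduction to $\sigma=0$ via $\Ex|e^{-\sigma Y}-1|\leq 2C|\sigma|/\lambda^2$) and your Step~3 (a Riemann--Lebesgue bound $O(1/|\tau|)$ via integration by parts on each monotone piece) can indeed be made uniform in this sense. But the compactness argument in Step~2 cannot: the maximum of the continuous function $\tau\mapsto\Re\Ex[e^{-i\tau Y}]$ over $[I,T_0]$ is strictly below $1$ for each fixed $Y$, yet nothing in your argument prevents it from approaching $1$ as $Y$ ranges over densities satisfying $f\leq Ce^{-\lambda t}$. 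So your $\eta_1$, and hence $\eta$, depends on the particular $Y$, and the lemma in the form needed downstream is not established.

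The paper closes this gap by a direct quantitative estimate that uses only the pointwise bound $f(t)\leq Ce^{-\lambda t}$ and, notably, \emph{not} the piecewise-monotone hypothesis. Writing $s=-\alpha+i\beta$, one bounds $\Re\Ex[e^{-sY}]\leq A-B$ with $A=\int f(u)e^{\alpha u}\,du$ and $B=\int f(u)(1-\cos\beta u)\,du$. The moment bounds give $A\leq 1+O(R)$ uniformly. For $B$, one notes that $1-\cos(\beta u)<\epsilon$ forces $\beta u$ into an $O(\sqrt{\epsilon})$-neighborhood of $2\pi\mathbb{Z}$; the decay $f\leq Ce^{-\lambda t}$ then bounds the probability of this event by $O(\sqrt{\epsilon})$, with constants depending only on $C,\lambda,I$ (monotone in $|\beta|\geq I$). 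Hence $B\geq\epsilon(1-O(\sqrt{\epsilon}))$ uniformly, and choosing $\epsilon$ then $R$ small finishes. In particular, the piecewise-monotone hypothesis plays no role here; in the paper it enters only in Lemma~\ref{lem:riemann-lebesgue}, which is used not for this lemma but to show convergence of the error integral along the shifted contour --- so your attribution of ``the only place where the structural hypothesis on $f$ is genuinely needed'' to the present lemma is misplaced.
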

\begin{proof}
Let $s = -\alpha + i\beta$, where $\alpha \leq R$ (a constant to be determined) and $|\beta| \geq I$.
We have
\begin{align*}
 \Re \Ex[e^{-sY}] &= \int_0^\infty f(u) e^{\alpha u} \cos (\beta u) \, du \\ &=
 \int_0^\infty f(u)(\cos(\beta u) (e^{\alpha u}-1) + \cos(\beta u)) \, du \\ &\leq
 \int_0^\infty f(u) (e^{\alpha u}-1) \, du + \int_0^\infty f(u) \cos(\beta u) \, du \\ &=
 \int_0^\infty f(u) e^{\alpha u} \, du - \int_0^\infty f(u) (1-\cos(\beta u)) \, du.
\end{align*}
Denote the two terms by $A,B$, so that $\Re \Ex[e^{-sY}] = A - B$. We will show that $B \geq 2\eta$ for some $\eta > 0$, and that for $R \leq \lambda/2$,
\[ A \leq 1 + \frac{2CR}{\lambda^2}. \]
Taking $R > 0$ small enough, we can ensure that $A - B \leq 1 - \eta$.

We start with the bound on $A$. If $\alpha \leq 0$ then
\[
 A = \int_0^\infty f(u) e^{\alpha u} \, du \leq \int_0^\infty f(u) \, du = 1.
\]
If $0 < \alpha \leq R$ for some $R \leq \lambda/2$ then
\begin{align*}
 A = \Ex[e^{\alpha Y}] &= 1 + \sum_{k=1}^\infty \frac{\alpha^k}{k!} \Ex[Y^k] \\ &\leq
 1 + \frac{C}{\lambda} \sum_{k=1}^\infty \left(\frac{\alpha}{\lambda}\right)^k \\ &\leq
 1 + \frac{C}{\lambda} \sum_{k=1}^\infty \left(\frac{R}{\lambda}\right)^k \\ &=
 1 + \frac{C}{\lambda} \cdot \frac{R}{\lambda} \cdot 2,
\end{align*}
using Lemma~\ref{lem:elem-bounds}, proving the bound on $A$.

The bound on $B$ is slightly more complicated. For small $\epsilon > 0$, we have $1 - \cos (\beta u) < \epsilon$ only if $\beta u \in 2\pi\mathbb{Z} + [-K\sqrt{\epsilon},K\sqrt{\epsilon}]$, where $K$ is some universal constant. Therefore
\[
 B = \int_0^\infty f(u) (1-\cos(\beta u)) \, du \geq \epsilon \Pr[\beta Y \notin 2\pi\mathbb{Z} + [-K\sqrt{\epsilon},K\sqrt{\epsilon}]].
\]
Suppose without loss of generality that $\beta > 0$. For $\epsilon < 1/K^2$,
\begin{align*}
 &\hphantom{=} \Pr[\beta Y \in 2\pi\mathbb{Z} + [-K\sqrt{\epsilon},K\sqrt{\epsilon}]] \\ &=
 \Pr[Y \leq K\sqrt{\epsilon}/\beta] + \sum_{t=1}^\infty \Pr[|Y - (2 \pi/\beta) t| \leq K\sqrt{\epsilon}/\beta] \\ &\leq
 \frac{C}{\lambda} (1-e^{-(K\lambda/\beta)\sqrt{\epsilon}}) + \frac{C}{\lambda} (e^{(K\lambda/\beta)\sqrt{\epsilon}}-e^{-(K\lambda/\beta)\sqrt{\epsilon}}) \sum_{t=1}^\infty e^{-(2\pi \lambda/\beta)t} \\ &=
 \frac{C}{\lambda} (1-e^{-(K\lambda/\beta)\sqrt{\epsilon}}) + \frac{C}{\lambda} (e^{(K\lambda/\beta)\sqrt{\epsilon}}-e^{-(K\lambda/\beta)\sqrt{\epsilon}}) \frac{e^{-(2\pi \lambda/\beta)}}{1-e^{-(2\pi \lambda/\beta)}},
\end{align*}
using Lemma~\ref{lem:elem-bounds}.
The bound is monotone decreasing in $\beta$, and so since $\beta \geq I$, we can bound the probability by $O(\sqrt{\epsilon})$, where the hidden constant depends on $C,\lambda,I$. This shows that $B \geq \epsilon(1-O(\sqrt{\epsilon}))$. In particular, we can choose a small $\epsilon > 0$ such that $B \geq \epsilon/2$, proving the bound on $B$.
\end{proof}

The constant $I$ arises from the following lemma (we later choose $I = S/\sqrt{2}$).

\begin{lemma} \label{lem:estimate-around-zero}
There exist constants $S,\delta > 0$ such that for all $|s| \leq S$,
\[ |\Ex[e^{-sY}] - 1| \geq \delta |s|. \]
\end{lemma}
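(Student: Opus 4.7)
}
The natural approach is to expand $\Ex[e^{-sY}]$ in a power series around $s=0$. Since the density $f$ decays like $Ce^{-\lambda t}$, the random variable $Y$ has moments of all orders and $\Ex[e^{-sY}]$ is analytic in the disk $|s| < \lambda$. Writing
\[
\Ex[e^{-sY}] - 1 = -s\,\Ex[Y] + R(s), \qquad R(s) = \sum_{k=2}^{\infty}\frac{(-s)^k}{k!}\Ex[Y^k],
\]
the first step is to bound $|R(s)|$ quadratically in $|s|$. Applying the moment bound from Lemma~\ref{lem:elem-bounds}, namely $\Ex[Y^k] \leq (C/\lambda)\cdot k!/\lambda^k$, gives
\[
|R(s)| \;\leq\; \frac{C}{\lambda}\sum_{k=2}^{\infty}\left(\frac{|s|}{\lambda}\right)^k,
\]
which for $|s|\leq \lambda/2$ sums to at most $\frac{2C}{\lambda^3}|s|^2$.

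Combined with the triangle inequality, this yields
\[
|\Ex[e^{-sY}]-1| \;\geq\; |s|\,\Ex[Y] - \frac{2C}{\lambda^3}|s|^2,
\]
so it remains to establish a positive lower bound on $\Ex[Y]$ depending only on $C,\lambda$. This is the one slightly subtle point, because a priori an adversary could try to push all the mass of $Y$ toward zero. However, the pointwise cap $f(t)\leq Ce^{-\lambda t}$ prevents this: by a standard rearrangement argument the infimum of $\Ex[Y]$ over densities satisfying $f\leq Ce^{-\lambda t}$ and $\int f=1$ is attained by the ``leftmost'' density $f^\ast(t)=Ce^{-\lambda t}\mathbf{1}_{[0,T^\ast]}$, where $T^\ast=-\lambda^{-1}\log(1-\lambda/C)$ (note $C\geq \lambda$ since $\int Ce^{-\lambda t}\,dt=C/\lambda\geq 1$). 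Direct integration shows $\int_0^{T^\ast} t\cdot Ce^{-\lambda t}\,dt$ is a strictly positive constant $\mu=\mu(C,\lambda)>0$, giving $\Ex[Y]\geq \mu$.

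The proof then concludes by choosing the cutoff $S$ so small that the linear term dominates the quadratic remainder. Specifically, take
\[
S \;=\; \min\!\left(\frac{\lambda}{2},\ \frac{\lambda^{3}\mu}{4C}\right),
\]
so that for $|s|\leq S$ the quadratic term $\tfrac{2C}{\lambda^{3}}|s|^{2}$ is at most $\tfrac{\mu}{2}|s| \leq \tfrac{1}{2}|s|\,\Ex[Y]$. Then
\[
|\Ex[e^{-sY}] - 1| \;\geq\; \tfrac{1}{2}|s|\,\Ex[Y] \;\geq\; \tfrac{\mu}{2}|s|,
\]
so the lemma holds with $\delta = \mu/2$.

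The main obstacle, such as it is, is the lower bound on $\Ex[Y]$: without exploiting the cap $f\leq Ce^{-\lambda t}$ it is not obvious that $\Ex[Y]$ cannot be arbitrarily small; everything else is a straightforward Taylor expansion using the moment bounds already proved. The derivative-monotone/piecewise assumption is not needed here (it enters only later in Lemma~\ref{lem:riemann-lebesgue}), so the constants $S,\delta$ depend only on $C$ and $\lambda$.
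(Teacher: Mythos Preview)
Your proof is correct and follows essentially the same approach as the paper: Taylor-expand $\Ex[e^{-sY}]$, control the tail $\sum_{k\ge 2}$ via the moment bound of Lemma~\ref{lem:elem-bounds}, and combine with a uniform lower bound on $\Ex[Y]$. The only noteworthy difference is in that last step: you invoke a rearrangement argument to identify the extremal density, whereas the paper uses the more elementary inequality $\Ex[Y]\ge \tau\,\Pr[Y\ge\tau]\ge \tau(1-C\tau)$ with $\tau=1/(2C)$ to get $\Ex[Y]\ge 1/(4C)$ directly.
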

\begin{proof}
For any $\tau \geq 0$, using $f(u) \leq C$ we get
\[ \Ex[Y] \geq \tau \Pr[Y \geq \tau] = \tau (1 - \Pr[Y \leq \tau]) \geq \tau (1 - C\tau). \]
In particular, choosing $\tau = 1/(2C)$, we deduce $\Ex[Y] \geq 1/(4C)$.

Consider now any complex $s$. We have
\[
 \Ex[e^{-sY}] = 1 - \Ex[Y] s + \sum_{k=2}^\infty (-1)^k \frac{\Ex[Y^k]}{k!} s^k.
\]
Using Lemma~\ref{lem:elem-bounds}, we deduce that for $|s| \leq \lambda/2$,
\begin{align*}
 |\Ex[e^{-sY}] - 1| &\geq \Ex[Y] |s| - \sum_{k=2}^\infty \left(\frac{|s|}{\lambda}\right)^k \\ &\geq
 \frac{|s|}{4C} - \frac{2|s|^2}{\lambda^2} \\ &=
 |s| \left(\frac{1}{4C} - \frac{2|s|}{\lambda^2}\right).
\end{align*}
 If we choose $S = \min(\lambda/2,\lambda^2/(16C))$ then we deduce that $|\Ex[e^{-sY}]-1| \geq |s|/(8C)$.
\end{proof}

Combining the two lemmas, we obtain the following information on
$\Ex[e^{-sY}]$.

\begin{lemma} \label{lem:lb-char-func}
There are constants $\epsilon,\gamma,I > 0$ such that $\Ex[e^{-sY}] \neq 1$
whenever $\Re s \geq -\gamma$ and $s \neq 0$, and furthermore
\[ |\Ex[e^{-sY}] - 1| \geq \epsilon \min(|s|,1) \]
whenever $|\Re s| \leq \gamma$.
\end{lemma}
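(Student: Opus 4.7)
The plan is to patch together the two preceding lemmas, using Lemma~\ref{lem:estimate-around-zero} to handle a disk around the origin and Lemma~\ref{lem:re-estimate} to handle the complementary region in a halfstrip about the imaginary axis, with the parameter $I$ chosen so that the two regions together cover the strip $|\Re s|\leq \gamma$. Outside that strip, on the right, triviality takes over.

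First I would fix constants $S,\delta>0$ from Lemma~\ref{lem:estimate-around-zero}, and then apply Lemma~\ref{lem:re-estimate} with the specific choice $I = S/\sqrt 2$, obtaining constants $\eta,R>0$ with $\Re\Ex[e^{-sY}]\leq 1-\eta$ whenever $\Re s\geq -R$ and $|\Im s|\geq S/\sqrt2$. Then choose $\gamma>0$ small enough that $\gamma\leq R$ and $\gamma<S/\sqrt 2$. The key geometric observation is that if $|\Re s|\leq \gamma$ and $|\Im s|< S/\sqrt 2$, then $|s|^2\leq \gamma^2 + (S/\sqrt 2)^2<S^2$, so $|s|<S$. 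Thus the strip $\{|\Re s|\leq\gamma\}$ is partitioned into the disk $|s|\leq S$ (on which Lemma~\ref{lem:estimate-around-zero} applies) and the region $|\Im s|\geq S/\sqrt 2$ (on which Lemma~\ref{lem:re-estimate} applies).

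Next I would prove the quantitative bound in the strip $|\Re s|\leq\gamma$. If $|s|\leq S$, Lemma~\ref{lem:estimate-around-zero} gives $|\Ex[e^{-sY}]-1|\geq \delta|s|\geq \delta\min(|s|,1)$. If $|s|>S$ (so in particular $|\Im s|\geq S/\sqrt 2$), Lemma~\ref{lem:re-estimate} yields
\[
 |\Ex[e^{-sY}]-1|\;\geq\; 1-\Re\Ex[e^{-sY}]\;\geq\;\eta\;\geq\;\eta\min(|s|,1).
\]
Taking $\epsilon=\min(\delta,\eta)$ gives the desired lower bound throughout the strip, and in particular shows $\Ex[e^{-sY}]\neq 1$ there (since $s\neq 0$ implies $\min(|s|,1)>0$).

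It remains to verify $\Ex[e^{-sY}]\neq 1$ on the half-plane $\Re s\geq -\gamma$, $s\neq 0$, outside the strip; that is, for $\Re s>\gamma$. Here I would simply use the triangle inequality
\[
 |\Ex[e^{-sY}]|\;\leq\;\Ex[e^{-(\Re s)Y}]\;\leq\;\Ex[e^{-\gamma Y}],
\]
and note that since $Y$ has a density and is supported on $[0,\infty)$, the random variable $e^{-\gamma Y}$ is strictly less than $1$ almost surely, hence has expectation strictly less than $1$. Therefore $|\Ex[e^{-sY}]|<1$ on that region, so in particular $\Ex[e^{-sY}]\neq 1$. This completes the argument. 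There is no genuine obstacle here; the only subtlety is matching the parameter $I$ in Lemma~\ref{lem:re-estimate} to the radius $S$ in Lemma~\ref{lem:estimate-around-zero} via $I=S/\sqrt 2$ so that the two regions of control overlap along the strip.
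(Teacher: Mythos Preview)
Your proof is correct and follows essentially the same approach as the paper: invoke Lemma~\ref{lem:estimate-around-zero} with its constants $S,\delta$, apply Lemma~\ref{lem:re-estimate} with $I=S/\sqrt{2}$, set $\gamma$ small enough (the paper takes $\gamma=\min(R,I)$), and split according to whether $|\Im s|\gtrless I$ (equivalently, in the strip, whether $|s|\leq S$). Your explicit treatment of the region $\Re s>\gamma$ via $|\Ex[e^{-sY}]|\leq \Ex[e^{-\gamma Y}]<1$ is in fact slightly more careful than the paper, which tacitly restricts to $|\Re s|\leq\gamma$ when bounding $|s|\leq\sqrt{\gamma^2+I^2}$.
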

\begin{proof}
Let $S,\delta$ be the constants in Lemma~\ref{lem:estimate-around-zero}, let $I = S/\sqrt{2}$, and let $R,\eta$ be the constants in Lemma~\ref{lem:re-estimate}.
Define $\gamma = \min(R,I)$.
Suppose that $\Re s \geq -\gamma$ and $s \neq 0$. If $|\Im s| \geq I$ then $\Re \Ex[e^{-sY}] \leq 1 - \eta$, and in particular $\Ex[e^{-sY}] \neq 1$. If $|\Im s| \leq I$ then $|s| \leq \sqrt{\gamma^2 + I^2} \leq S$, and so $|\Ex [e^{-sY}]-1| \geq \delta |s|$, again showing that $\Ex[e^{-sY}] \neq 1$. This proves the first claim.

For the second claim, suppose that $|\Re s| \leq \gamma$. If $|\Im s| \geq I$ then $|\Ex[e^{-sY}] - 1| \geq |\Re \Ex[e^{-sY}] - 1| \geq \eta$. If $|\Im s| \leq I$ then as before $|s| \leq S$ and so $|\Ex[e^{-sY}]-1| \geq \delta |s|$. Taking $\epsilon = \min(\eta,\delta)$ proves the second claim.
\end{proof}

Next, we move the line of integration to the left in order to separate the main term $Q/\Ex Y + \Ex(Y^2)/2(\Ex Y)^2$ from the error term.

\begin{lemma} \label{lem:moving-left}
For all $Q > 0$ and all $0 < \beta < \gamma$, where $\gamma > 0$ is the constant from Lemma~\ref{lem:lb-char-func},
\[ m(Q) = \frac{Q}{\Ex Y} + \frac{\Ex(Y^2)}{(\Ex Y)^2} +\frac{1}{2\pi i}
\int_{-\beta-i\infty}^{-\beta+i\infty} \frac{e^{sQ}\Ex[e^{-sY}]}{s(1 - \Ex[e^{-sY}])} \,
ds. \]
\end{lemma}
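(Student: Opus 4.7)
The plan is to start from the integral representation of $m(Q)$ in Lemma~\ref{lem:m-integral-formula} and shift the line of integration leftward from $\Re s = c > 0$ to $\Re s = -\beta$. By Lemma~\ref{lem:lb-char-func}, the integrand $\frac{e^{sQ}}{s(1-\Ex[e^{-sY}])}$ is holomorphic in the strip $-\gamma < \Re s \leq c$ except for a single pole at $s=0$, so the residue theorem expresses the difference between the two vertical-line integrals as one residue, which supplies the closed-form main term; the remaining integral on $\Re s = -\beta$ becomes the error term.

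The residue computation uses $\Ex[e^{-sY}] = 1 - \Ex[Y]\,s + \tfrac{1}{2}\Ex[Y^2]\,s^2 + O(s^3)$, which gives $s(1-\Ex[e^{-sY}]) = \Ex[Y]\,s^2\bigl(1 - \tfrac{\Ex[Y^2]}{2\Ex[Y]}\,s + O(s^2)\bigr)$, hence $\frac{1}{s(1-\Ex[e^{-sY}])} = \frac{1}{\Ex[Y]\,s^2} + \frac{\Ex[Y^2]}{2(\Ex[Y])^2}\cdot\frac{1}{s} + O(1)$ near $s=0$. Multiplying by $e^{sQ} = 1 + Qs + O(s^2)$ and reading off the coefficient of $s^{-1}$ yields $\Res_{s=0} = \frac{Q}{\Ex Y} + \frac{\Ex[Y^2]}{2(\Ex Y)^2}$, which is the stated main term.

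To justify the shift rigorously, I would form the rectangle with corners $c \pm iT$ and $-\beta \pm iT$, apply the residue theorem there, and let $T \to \infty$. On the horizontal segments $\Im s = \pm T$, Lemma~\ref{lem:re-estimate} (applicable since $\beta < \gamma \leq R$) gives $|1-\Ex[e^{-sY}]| \geq \eta$; the density bound $f(u) \leq Ce^{-\lambda u}$ gives $|\Ex[e^{-sY}]| \leq \int_0^\infty Ce^{(\beta - \lambda)u}\,du = O(1)$ uniformly for $\Re s \in [-\beta,c]$; and $|e^{sQ}/s| \leq e^{cQ}/T$. So each horizontal contribution is $O((c+\beta)/T) \to 0$, and the two vertical-line integrals differ exactly by the residue at $0$.

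Finally, to bring the shifted integrand into the form with $\Ex[e^{-sY}]$ in the numerator, apply the algebraic identity $\frac{1}{1-E} = 1 + \frac{E}{1-E}$ with $E = \Ex[e^{-sY}]$, so that $\frac{e^{sQ}}{s(1-E)} = \frac{e^{sQ}}{s} + \frac{e^{sQ}E}{s(1-E)}$. Since $Q > 0$ and $-\beta < 0$, closing the contour $\Re s = -\beta$ to the left (where $|e^{sQ}| = e^{(\Re s)Q}$ decays and no poles are enclosed) gives $\frac{1}{2\pi i}\int_{-\beta-i\infty}^{-\beta+i\infty}\frac{e^{sQ}}{s}\,ds = 0$, and substituting back yields the statement of the lemma. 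The only real obstacle is the contour-shift justification; the uniform lower bound on $|1-\Ex[e^{-sY}]|$ from Lemma~\ref{lem:lb-char-func} together with the uniform boundedness of $\Ex[e^{-sY}]$ in the strip already handles it, whereas the sharper Riemann--Lebesgue-type decay of the shifted integrand as $|\Im s| \to \infty$, which is needed later in Proposition~\ref{pro:renewal} to bound the error, does not enter this particular step.
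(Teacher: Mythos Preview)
Your proposal is correct and follows essentially the same route as the paper: start from Lemma~\ref{lem:m-integral-formula}, shift the vertical contour across the double pole at $s=0$, compute the residue from the Taylor expansion of $1-\Ex[e^{-sY}]$, and then use $\tfrac{1}{1-E}=1+\tfrac{E}{1-E}$ together with $\int_{-\beta-i\infty}^{-\beta+i\infty}\tfrac{e^{sQ}}{s}\,ds=0$ to put the remainder in the stated form. The paper merely invokes ``standard arguments'' for the contour shift, whereas you spell out the rectangle-and-horizontal-segments justification; both rely on the uniform lower bound on $|1-\Ex[e^{-sY}]|$ from Lemma~\ref{lem:lb-char-func} and not on the Riemann--Lebesgue decay, exactly as you note. (Your residue $\tfrac{Q}{\Ex Y}+\tfrac{\Ex(Y^2)}{2(\Ex Y)^2}$ is the correct value and matches Proposition~\ref{pro:renewal}; the missing factor of $2$ in the displayed statement of the lemma is a typo in the paper.)
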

\begin{proof}
Our starting point is the formula of Lemma~\ref{lem:m-integral-formula}, for $c = \beta$.
Lemma~\ref{lem:lb-char-func} shows that the only pole of the integrand in the strip $|s| \leq \beta$ is at $s = 0$.
Standard arguments (using the bound $|\Ex[e^{-sY}]-1| \geq \epsilon$ given by Lemma~\ref{lem:lb-char-func}) show that
\begin{align*}
m(Q) &= \frac{1}{2\pi i} \int_{-\beta-i\infty}^{-\beta+i\infty} \frac{e^{sQ}}{s(1 - \Ex[e^{-sY}])} \, ds +
\Res\left[\frac{e^{sQ}}{s(1 - \Ex[e^{-sY}])},s=0\right] \\ &=
\frac{1}{2\pi i} \int_{-\beta-i\infty}^{-\beta+i\infty} \frac{e^{sQ} \Ex[e^{-sY}]}{s(1 - \Ex[e^{-sY}])} \, ds +
\Res\left[\frac{e^{sQ}}{s(1 - \Ex[e^{-sY}])},s=0\right];
\end{align*}
the two integrals differ by the quantity
\[
\int_{-\beta-i\infty}^{-\beta+i\infty} \frac{e^{sQ}}{s} \, ds = 0.
\]
In order to compute the residue, write
\begin{align*}
\frac{e^{sQ}}{s(1 - \Ex[e^{-sY}])} &= \frac{1 + sQ + O(s^2)}{s^2(\Ex Y - \tfrac{1}{2} \Ex(Y^2) s + O(s^2))} \\ &=
\frac{(1 + sQ + O(s^2))\left(1 + \frac{\Ex(Y^2)}{2\Ex Y} s + O(s^2) \right)}{s^2 \Ex Y}.
\end{align*}
Calculating the coefficient of $s^{-1}$ in this expression completes the proof.
\end{proof}

In order to estimate the error term, we need to understand the
behavior of $\Ex[e^{-sY}]$ as $|s| \to \infty$.

\begin{lemma} \label{lem:riemann-lebesgue}
Suppose $\alpha = -\Re s$ satisfies $0 < \alpha \leq \lambda/2$. Then for some constant $K_\alpha$ depending on $\alpha$,
\[ |\Ex[e^{-sY}]| \leq \frac{K_\alpha}{\sqrt{|s|}}. \]
\end{lemma}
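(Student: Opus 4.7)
\medskip
\noindent\textbf{Proof plan for Lemma~\ref{lem:riemann-lebesgue}.}
Write $s = -\alpha + i\beta$ with $0 < \alpha \leq \lambda/2$, so that
\[
 \Ex[e^{-sY}] = \int_0^\infty f(u) e^{\alpha u} e^{-i\beta u} \, du.
\]
The plan is a standard ``split and balance'' Riemann--Lebesgue argument that converts the piecewise differentiable-monotone hypothesis on $f$ into decay in $|\beta|$, and hence in $|s|$. I would split the integral at some cutoff $T$, which will be chosen as a logarithmic function of $|s|$.

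First, on the tail $[T,\infty)$, the exponential-decay hypothesis $f(u) \leq Ce^{-\lambda u}$ together with $\alpha \leq \lambda/2$ gives $f(u) e^{\alpha u} \leq Ce^{-\lambda u/2}$, so
\[
 \left|\int_T^\infty f(u) e^{(\alpha - i\beta)u} \, du\right| \leq C \int_T^\infty e^{-\lambda u/2} \, du = \frac{2C}{\lambda} e^{-\lambda T/2}.
\]
Second, on $[0,T]$, the support intersected with $[0,T]$ is the union of at most $\intervals + 1$ intervals on each of which $f$ is monotone. On each monotone piece $[a,b]$, apply the second mean value theorem for integrals (separately to the real and imaginary parts of the oscillating integrand $e^{(\alpha - i\beta)u}$, treating $f$ as the monotone factor). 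This reduces the integral on each piece to $f(a)$ and $f(b)$ times antiderivatives of $e^{\alpha u}\cos(\beta u)$ and $e^{\alpha u}\sin(\beta u)$, each of which is bounded in absolute value by $2e^{\alpha u}/\sqrt{\alpha^2+\beta^2} = 2e^{\alpha u}/|s|$. Using $f \leq C$ (valid since $\alpha \geq 0$) and summing over the at most $\intervals + 1$ pieces yields
\[
 \left|\int_0^T f(u) e^{(\alpha - i\beta)u} \, du\right| \leq \frac{K'_\alpha e^{\alpha T}}{|s|},
\]
for a constant $K'_\alpha$ depending only on $C,\lambda,\intervals,\alpha$.

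Finally, balance the two bounds. Choose $T = \frac{\log|s|}{\alpha + \lambda/2}$ so that $e^{\alpha T}/|s|$ and $e^{-\lambda T/2}$ are of the same order, namely $|s|^{-\lambda/(2\alpha+\lambda)}$. Because $\alpha \leq \lambda/2$, the exponent satisfies $\lambda/(2\alpha + \lambda) \geq 1/2$, which gives the stated bound $|\Ex[e^{-sY}]| \leq K_\alpha/\sqrt{|s|}$ for $|s| \geq 1$ (the case $|s| < 1$ is trivial since the integral is bounded by $1$ in absolute value up to constants, so enlarging $K_\alpha$ handles it). The main technical obstacle is handling the mild combinatorial bookkeeping on the monotone pieces and the application of the second mean value theorem to a complex exponential (by splitting into real and imaginary parts); neither introduces substantive difficulty, but both are where the piecewise differentiable-monotone hypothesis on $f$ is genuinely used, which is the sole place in the whole proof of Proposition~\ref{pro:renewal} where that assumption enters.
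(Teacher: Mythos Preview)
Your proof is correct and follows the same overall architecture as the paper's argument: split the integral at a cutoff, bound the tail by the exponential-decay hypothesis $f(u)e^{\alpha u}\le Ce^{-\lambda u/2}$, and extract $1/|s|$ decay on the bounded piece using the oscillation. The difference lies in how you obtain the $1/|s|$ on the bounded piece. The paper integrates by parts,
\[
\int_0^M e^{-su}f(u)\,du=\left[-\frac{f(u)e^{-su}}{s}\right]_0^M+\frac{1}{s}\int_0^M f'(u)e^{-su}\,du,
\]
and then controls $\int_0^M|f'(u)|\,du$ by the total variation of $f$ over the at most $\intervals$ monotone pieces, bounded by $\intervals C$. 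You instead apply the second mean value theorem on each monotone piece to the real and imaginary parts of $e^{(\alpha-i\beta)u}$, which avoids differentiating $f$ altogether and uses only monotonicity. This is a genuine, if modest, improvement: your argument shows that the differentiability half of the ``piecewise differentiable-monotone'' hypothesis is not actually needed here, consistent with the paper's own remark that the assumption is used only in this lemma and can likely be weakened. Your cutoff $T=\log|s|/(\alpha+\lambda/2)$ is also slightly sharper than the paper's choice (which effectively replaces $\alpha$ by $\lambda/2$ in the bounded-piece estimate), yielding the exponent $\lambda/(2\alpha+\lambda)\ge 1/2$ rather than exactly $1/2$; for the application either suffices.
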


Note that this lemma is the only place in the proof in which we use the condition of piecewise differentiability-monotonicity.

\begin{proof}
Let $M = \frac{\log (|s|/\alpha)}{4/\lambda} \geq 0$. We will bound separately the two terms
\begin{align*}
 L &= \int_0^M e^{-su} f(u) \, du, &
 U &= \int_M^\infty e^{-su} f(u) \, du.
\end{align*}

We start with $L$. Integration by parts gives
\[
 \int_0^M e^{-su} f(u) \, du =
 \left. -\frac{f(u) e^{-su}}{s}\right|_0^M + \int_0^M f'(u) \frac{e^{-su}}{s} \, du.
\]
Therefore
\[
 |L| \leq C \frac{1+e^{(\lambda/2)M}}{|s|} + \frac{e^{(\lambda/2)M}}{|s|} \int_0^M |f'(u)| \, du.
\]
We can divide $[0,M]$ into $N \leq \intervals$ many intervals of monotonicity, say $I_1=(a_1,b_1),\ldots,I_N=(a_n,b_n)$, and bound
\[
 \int_0^M |f'(u)| \, du = \sum_{i=1}^N |f(b_i) - f(a_i)| \leq NC.
\]
We conclude that
\[
 |L| \leq C \frac{1+(\intervals+1)e^{(\lambda/2)M}}{|s|} \leq 3C\intervals\frac{e^{(\lambda/2)M}}{|s|} \leq \frac{3C\intervals}{\sqrt{\alpha|s|}}.
\]

We proceed to bound $U$:
\begin{gather*}
 |U| \leq C \int_M^\infty e^{(\alpha-\lambda)u} \, du \leq C \int_M^\infty e^{-(\lambda/2)u} \, du \\ =
 \frac{C}{\lambda/2} e^{-(\lambda/2)M} = \frac{C\sqrt{\alpha}}{(\lambda/2)\sqrt{|s|}}.
\end{gather*}

Altogether, we deduce
\[ |\Ex[e^{-sY}]| \leq |L| + |U| \leq \left(\frac{3C\intervals}{\sqrt{\alpha}} + \frac{2C\sqrt{\alpha}}{\lambda}\right) \frac{1}{\sqrt{|s|}}. \qedhere \]
\end{proof}

Finally, we estimate the error term.
\begin{lemma} \label{lem:error-term}
Let $\beta = \min(\gamma,\lambda/2) > 0$, where $\gamma > 0$ is the constant from Lemma~\ref{lem:lb-char-func}. We
have
\[ \left| \int_{-\beta-i\infty}^{-\beta+i\infty} \frac{e^{sQ}\Ex[e^{-sY}]}{s(1 - \Ex[e^{-sY}])} \, ds \right| = O(e^{-\beta Q}). \]
\end{lemma}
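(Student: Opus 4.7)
The plan is to parametrize the contour as $s=-\beta+it$ for $t\in\mathbb{R}$, so that $|ds|=dt$ and $|e^{sQ}|=e^{-\beta Q}$ factors out. The lemma then reduces to showing that
\[
I := \int_{-\infty}^\infty \frac{|\Ex[e^{-sY}]|}{|s|\cdot|1-\Ex[e^{-sY}]|}\,dt
\]
is finite (and bounded by a constant independent of $Q$). The choice $\beta=\min(\gamma,\lambda/2)$ is exactly what is needed so that both Lemma~\ref{lem:lb-char-func} (which requires $|\Re s|\le\gamma$) and Lemma~\ref{lem:riemann-lebesgue} (which requires $0<-\Re s\le\lambda/2$) are applicable everywhere along the contour.

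I would split $I$ at some threshold $T_0$ chosen so that $|s|=\sqrt{\beta^2+t^2}\ge 1$ whenever $|t|\ge T_0$. On the bounded piece $|t|\le T_0$ the integrand is continuous and hence uniformly bounded: $|s|\ge\beta>0$, Lemma~\ref{lem:lb-char-func} gives $|1-\Ex[e^{-sY}]|\ge\epsilon\min(|s|,1)\ge\epsilon\min(\beta,1)$, and $|\Ex[e^{-sY}]|\le\int_0^\infty f(u)e^{\beta u}\,du<\infty$ since $\beta\le\lambda/2$ ensures the integral converges (using $f(u)\le Ce^{-\lambda u}$). Integrating this bounded quantity over the compact interval $[-T_0,T_0]$ contributes $O(1)$.

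For the tail $|t|\ge T_0$, I would invoke the two key estimates in tandem: Lemma~\ref{lem:riemann-lebesgue} applied with $\alpha=\beta$ gives $|\Ex[e^{-sY}]|\le K_\beta/\sqrt{|s|}$, and since $|s|\ge 1$ Lemma~\ref{lem:lb-char-func} degenerates to $|1-\Ex[e^{-sY}]|\ge\epsilon$. Combined with $|s|\ge|t|$, the integrand on the tail is dominated by $\tfrac{K_\beta}{\epsilon\,|t|^{3/2}}$, whose integral over $|t|\ge T_0$ converges. Adding the two contributions gives $I=O(1)$, and reinserting the prefactor $e^{-\beta Q}$ yields the claimed bound.

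The only substantive step is the tail estimate, since it is what forces the introduction of Lemma~\ref{lem:riemann-lebesgue}: without the $1/\sqrt{|s|}$ decay of the characteristic function, the integrand would only decay like $1/|t|$ at infinity and the integral would diverge logarithmically. The piecewise differentiable-monotone hypothesis on $f$ enters here (and only here) because it is what allowed the integration-by-parts argument in Lemma~\ref{lem:riemann-lebesgue} to control $\int_0^M|f'(u)|\,du$ by a constant.
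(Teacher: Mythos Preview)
Your proposal is correct and follows essentially the same approach as the paper: factor out $e^{-\beta Q}$, then use Lemma~\ref{lem:lb-char-func} to bound $|1-\Ex[e^{-sY}]|$ from below and Lemma~\ref{lem:riemann-lebesgue} to get $|\Ex[e^{-sY}]|=O(|s|^{-1/2})$, yielding an integrable $|s|^{-3/2}$ tail. The only cosmetic difference is that the paper applies both estimates uniformly along the whole contour (noting $|s|\ge\beta$ so $\min(|s|,1)\ge\min(\beta,1)$) to get a single global bound $O(|s|^{-3/2})$, whereas you split at a threshold $T_0$ and handle the compact piece by continuity; both work.
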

\begin{proof}
The triangle inequality shows that
\[\left| \int_{-\beta-i\infty}^{-\beta+i\infty} \frac{e^{sQ}\Ex[e^{-sY}]}{s(1 - \Ex[e^{-sY}])} \, ds \right| \leq
e^{-\gamma Q} \int_{-\beta-i\infty}^{-\beta+i\infty} \left|\frac{\Ex[e^{-sY}]}{s(1 - \Ex[e^{-sY}])}\right| \, ds. \]
Using Lemma~\ref{lem:riemann-lebesgue} and Lemma~\ref{lem:lb-char-func}, we can estimate the integrand:
\[
\left|\frac{\Ex[e^{-sY}]}{s(1 - \Ex[e^{-sY}])}\right| \leq K_\beta (\epsilon\min(1,\beta))^{-1} \frac{1}{|s|^{3/2}}.
\]
This estimate shows that the integral converges:
\[
 \int_{-\infty}^\infty \frac{1}{\sqrt{\beta^2+y^2}^{3/2}} \, dy \leq
 2\int_0^\infty \frac{\sqrt{2}^{3/2}}{(\beta+y)^{3/2}} \, dy = \frac{4\sqrt{2}^{3/2}}{\sqrt{\beta}}.
\]
The lemma follows.
\end{proof}

Proposition~\ref{pro:renewal} (with $\gamma:=\beta$) follows by combining Lemma~\ref{lem:moving-left} and Lemma~\ref{lem:error-term}.
\section{Conclusions} \label{sec:conclusions}

In this paper we tackled the problem of estimating the smallest and largest Shapley values $\shapley_1,\shapley_n$ in the case in which agent weights are drawn i.i.d.\ according to a continuous distribution decaying exponentially. We gave a simple formula for the limiting values $n\shapley_1,n\shapley_n$ for all relevant quota values, and gave a more accurate formula for $\shapley_1,\shapley_n$ with an exponentially decaying error term.

Our results leave open several natural questions, besides resolving Conjecture~\ref{thm:iid-normalized} and Conjecture~\ref{thm:iid-other}, and strengthening Proposition~\ref{pro:renewal} (see page~\pageref{pro:renewal} for the latter). First, our results do not cover values of the quota which are very close to $0$ or to $n\Ex[X]$ (where $X$ is the distribution used to sample agent weights). Can we say anything about the behavior of the Shapley values in this regime? Second, our results only concern the expected Shapley values. Can we say anything about the actual distribution of the Shapley values, say by bounding the variance? Third, our results concern Shapley values. Can we extend the analysis to Banzhaf values?
\bibliographystyle{spbasic}
\bibliography{bib}

\end{document}